\keywords{Petri net \and continuous reachability \and separators \and
  certificates.}
\theoremstyle{plain}
\begin{document}

\title{Separators in Continuous Petri Nets}
\thanks{M.~Blondin was supported by a Discovery Grant from the Natural
  Sciences and Engineering Research Council of Canada (NSERC), and by
  the Fonds de recherche du Qu\'{e}bec -- Nature et technologies
  (FRQNT). J.~Esparza was supported by an ERC Advanced Grant (787367:
  PaVeS)}

\author[M.~Blondin]{Michael Blondin\lmcsorcid{0000-0003-2914-2734}}[a]
\author[J.~Esparza]{Javier Esparza\lmcsorcid{0000-0001-9862-4919}}[b]

\address{Universit\'{e} de Sherbroke, Sherbrooke, Canada} 
\email{michael.blondin@usherbrooke.ca}  

\address{Technical University of Munich, Munich, Germany}	
\email{esparza@in.tum.de}  




\begin{abstract}
  \noindent Leroux has proved that unreachability in Petri nets can be
  witnessed by a Presburger separator, i.e.\ if a marking $\msrc$
  cannot reach a marking $\mtgt$, then there is a formula $\varphi$ of
  Presburger arithmetic such that: $\varphi(\msrc)$ holds; $\varphi$
  is forward invariant, i.e., $\varphi(\vec{m})$ and $\vec{m}
  \rightarrow \vec{m}'$ imply $\varphi(\vec{m}'$); and $\neg
  \varphi(\mtgt)$ holds. While these separators could be used as
  explanations and as formal certificates of unreachability, this has
  not yet been the case due to their worst-case size, which is at
  least Ackermannian, and the complexity of checking that a formula is
  a separator, which is at least exponential (in the formula size).

  We show that, in continuous Petri nets, these two problems can be
  overcome. We introduce locally closed separators, and prove that:
  (a)~unreachability can be witnessed by a locally closed separator
  computable in polynomial time; (b)~checking whether a formula is a
  locally closed separator is in NC (so, simpler than unreachability,
  which is P-complete).

  We further consider the more general problem of (existential)
  set-to-set reachability, where two sets of markings are given as
  convex polytopes. We show that, while our approach does not extend
  directly, we can efficiently certify unreachability via an
  altered Petri net.
\end{abstract}

\maketitle

\section{Introduction}
Petri nets form a widespread formalism of concurrency with several
applications ranging from the verification of concurrent programs to
the analysis of chemical systems. The reachability problem --- which
asks whether a marking $\msrc$ can reach
another marking $\mtgt$ --- is fundamental as a plethora of problems,
such as verifying safety properties, reduce to it
(e.g.\ \cite{GS92,FMWDR17,BMTZ21}).

Leroux has shown that unreachability in Petri nets can be witnessed by a Presburger \emph{separator}, i.e.,\ if a marking $\msrc$ cannot reach a marking $\mtgt$, then there exists a formula $\varphi$ of Presburger arithmetic such that: $\varphi(\msrc)$ holds; $\varphi$ is forward invariant, i.e.,  $\varphi(\vec{m})$ and $\vec{m} \rightarrow \vec{m}'$ imply $\varphi(\vec{m}'$); and $\varphi(\mtgt)$ does not hold~\cite{Leroux12}. Intuitively, $\varphi$ ``separates'' $\mtgt$ from the set of markings reachable from $\msrc$. Leroux's result leads to a very simple algorithm to decide the Petri net reachability problem, consisting of two semi-algorithms; the first one explores the markings reachable from $\msrc$, and halts if and when it hits $\mtgt$, while the second enumerates formulas from Presburger arithmetic, and halts if and when it hits a separator. 
 
Separators can be used as \emph{explanations} and as formal \emph{certificates}. Verifying a safety property can be reduced to proving that a target marking (or set of markings) is not reachable from a source marking, and a separator is an invariant of the system that \emph{explains} why the property holds. Further, if a reachability tool produces separators, then the user can check that the properties of a separator indeed hold, and so trust the result even if they do not trust the tool (e.g., because it has not been verified, or is executed on a remote faster machine). Yet, in order to be useful as explanations and certificates, separators have to satisfy two requirements: (1)~they should not be too large, and (2)~checking that a formula is a separator should have low complexity, and, in particular, strictly lower complexity than deciding reachability. This does not hold, at least in the worst-case, for the separators of~\cite{Leroux12}: In the worst case, the separator has a size at least Ackermannian in the Petri net size (a consequence of the fact that the reachability problem is Ackermann-complete~\cite{LS19,Ler21,CO21}) and the complexity of the check is at least exponential.


In this paper, we show that, unlike the above, \emph{continuous} Petri nets do have separators satisfying properties~(1) and~(2). Continuous Petri nets are a relaxation of the standard Petri net model, called \emph{discrete} in the following, in which transitions are allowed to fire ``fluidly'': instead of firing once, consuming $i_p$ tokens from each input place $p$ and adding $o_q$ tokens to each output place $q$, a transition can fire $\alpha$ times for any nonnegative real number $\alpha$, consuming and adding $\alpha \cdot i_p$ and $\alpha \cdot o_q$ tokens, respectively. Continuous Petri nets are interesting in their own right~\cite{DavidA10}, and moreover as an overapproximation of the discrete model. In particular, if $\mtgt$ is not reachable from $\msrc$ under the continuous semantics, then it is also not under the discrete one. As reachability in continuous Petri nets is P-complete~\cite{FH15}, and so drastically more tractable than discrete reachability, this approximation is used in many tools for the verification of discrete Petri nets, VAS, or multiset rewriting systems (e.g.~\cite{BFHH17,BlondinEH0M20,EsparzaHJM20}).

It is easy to see that unreachability in continuous Petri nets can be witnessed by separators expressible in linear arithmetic (the first-order theory of the reals with addition and order). Indeed, Blondin \textit{et al}.\ show in~\cite{BFHH17} that the continuous reachability relation is expressible by an existential formula $\textit{reach}(\vec{m}, \vec{m}')$ of linear arithmetic, from which we can obtain a separator for any pair of unreachable markings. Namely, for all markings $\msrc$ and $\mtgt$, if $\mtgt$ is not reachable from $\msrc$, then the formula $\textit{sep}_{\msrc}(\vec{m}) \defeq \neg \textit{reach}(\msrc, \vec{m})$ is a separator. Further, $\textit{reach}(\vec{m}, \vec{m}')$ has only linear size. However, these separators do not satisfy property (2). Indeed, while the reachability problem for continuous Petri nets is P-complete~\cite{FH15}, checking if a formula of linear arithmetic is a separator is coNP-hard, even for quantifier-free formulas in disjunctive normal form, a very small fragment. So, the separators arising from~\cite{BFHH17} cannot be directly used as certificates.

In this paper, we overcome this problem. We identify a class of \emph{locally closed separators}, satisfying the following properties: unreachability can always be witnessed by locally closed separators; locally closed separators can be constructed in polynomial time; and checking whether a formula is a locally closed separator is computationally strictly\footnote{Assuming $\text{P} \neq \text{NC}$.} easier than deciding unreachability. Let us examine the last claim in more detail. While the reachability problem for continuous Petri nets is decidable in polynomial time, it is still time consuming for larger models, which can have tens of thousands of nodes. Indeed, for a Petri net with $n$ places and $m$ transitions, the algorithm of~\cite{FH15} requires to solve $\bigO(m^2)$ linear programming problems in $n$ variables, each of them with up to $m$ constraints.  Moreover, since the problem is P-complete, it is unlikely that a parallel computer can significantly improve performance. We prove that, on the contrary, checking if a formula is a locally closed separator is in NC rather than P-complete, and so efficiently parallelizable. Further, the checking algorithm only requires to solve linear programming problems in \emph{a single} variable. 

We further consider ``set-to-set reachability'' where one must
determine whether there exists a marking $\msrc \in A$ that can reach
some marking $\mtgt \in B$. This naturally generalizes the case where
$A = \{\msrc\}$ and $B = \{\mtgt\}$. We focus on the case where sets
$A$ and $B$ are convex polytopes, as it can be solved in polynomial
time for continuous Petri nets~\cite{BH17}. We prove that,
unfortunately, we cannot validate locally closed separators in NC for
set-to-set reachability. Nonetheless, we show that we can efficiently
construct and validate locally closed separators (respectively in P
and NC), at the cost of working with an altered Petri net that encodes
$A$ and $B$.

The paper is organized as follows. Section~\ref{sec:prelims}
introduces terminology, and defines separators (actually, a slightly
different notion called bi-separators). Section~\ref{sec:informal}
recalls the characterization of the reachability relation given by
Fraca and Haddad in~\cite{FH15}, and derives a characterization of
\emph{un}reachability suitable for finding
bi-separators. Section~\ref{sec:certificates} shows that checking the
separators derivable from~\cite{BFHH17} is coNP-hard, and introduces
locally closed bi-separators. Sections~\ref{sec:construct}
and~\ref{sec:verif} show that locally closed bi-separators satisfy the
aforementioned properties~(1) and~(2). Finally,
Section~\ref{sec:set2set} shows the limitation and extension of our
approach to set-to-set reachability.

\section{Preliminaries}\label{sec:prelims}
\paragraph*{Numbers, vectors and relations.}

We write $\N$, $\R$ and $\Rpos$ to denote the naturals (including
$0$), reals, and non-negative reals (including $0$). Given $a, b \in
\N$, we write $[a..b]$ to denote $\{a, a+1, \ldots, b\}$. Let $S$ be a
finite set. We write $\vec{e}_s$ to denote the unit vector $\vec{e}_s
\in \R^S$ such that $\vec{e}_s(s) = 1$ and $\vec{e}_s(t) = 0$ for all
$s, t \in S$ such that $t \neq s$. Given $\vec{x}, \vec{y} \in \R^S$,
we write $\vec{x} \sim_S \vec{y}$ to indicate that $\vec{x}(s) \sim
\vec{y}(s)$ for all $s \in S$, where $\sim$ is a total order such as
$\leq$. We define the \emph{support} of a vector $\vec{x} \in \R^S$ as
$\supp{\vec{x}} \defeq \{s \in S : \vec{x}(s) > 0\}$. We write
$\vec{x}(S) \defeq \sum_{s \in S} \vec{x}(s)$. The \emph{transpose} of
a binary relation $\Rel$ is $\rev{\Rel} \defeq \{(y, x) : (x, y) \in
\Rel\}$.

\paragraph*{Petri nets.}

A \emph{Petri net}\footnote{In this work, ``Petri nets'' stands for
``continuous Petri nets''. In other words, we will consider standard
Petri nets, but equipped with a \emph{continuous} reachability
relation. We will work over the reals, but note that it is known that
working over the rationals is equivalent. For decidability issues, we
will assume input numbers to be rationals.} is a tuple $\pn = (P, T,
F)$ where $P$ and $T$ are disjoint finite sets, whose elements are
respectively called \emph{places} and \emph{transitions}, and where $F
= (\mat{F}_-, \mat{F}_+)$ with $\mat{F}_-, \mat{F}_+ \colon P \times
T \to \N$. For every $t \in T$, vectors
$\prevec{t}, \postvec{t} \in \N^P$ are respectively defined as the
column of $\mat{F}_-$ and $\mat{F}_+$ associated to $t$, i.e.\
$\prevec{t} \defeq \mat{F}_- \cdot \vec{e}_t$ and $\postvec{t}
\defeq \mat{F}_+ \cdot \vec{e}_t$. A \emph{marking} is a vector
$\vec{m} \in \Rpos^P$. For every $Q \subseteq P$, let
$\vec{m}(Q) \defeq \sum_{p \in Q} \vec{m}(p)$. We say that transition $t$ is
\emph{$\alpha$-enabled} if $\vec{m} \geq \alpha \prevec{t}$ holds. If
this is the case, then $t$ can be \emph{$\alpha$-fired} from
$\vec{m}$, which leads to marking $\vec{m}' \defeq \vec{m} - \alpha
\prevec{t} + \alpha \postvec{t}$, which we denote $\vec{m}
\trans{\alpha t} \vec{m}'$. A transition is \emph{enabled}
if it is $\alpha$-enabled for some real number $\alpha > 0$. We define $\mat{F} \defeq \mat{F}_+ -
\mat{F}_-$ and $\effect{t} \defeq \mat{F} \cdot \vec{e}_t$. In
particular, $\vec{m} \trans{\alpha t} \vec{m}'$ implies $\vec{m}' =
\vec{m} + \alpha \effect{t}$. For example, for the Petri net
of Figure~\ref{fig:pn}:
\[
\{p_1 \mapsto 2, p_2 \mapsto 0, p_3 \mapsto 0, p_4 \mapsto 0\}
\trans{(1/2) t_1} \{p_1 \mapsto 3/2, p_2 \mapsto 1/2, p_3 \mapsto 0,
p_4 \mapsto 0\}.
\]
Moreover, w.r.t.\ to orderings $p_1 < \cdots < p_4$ (rows) and $t_1 <
\cdots < t_4$ (columns):
\[
\mat{F}_{-} =
\begin{bmatrix}
   1 & 2 & 2 & 0 \\
   0 & 0 & 1 & 0 \\
   0 & 0 & 0 & 1 \\
   0 & 1 & 0 & 0
\end{bmatrix},\
\mat{F}_{+} =
\begin{bmatrix}
   0 & 0 & 1 & 0 \\
   1 & 0 & 0 & 0 \\
   0 & 1 & 1 & 0 \\
   0 & 1 & 0 & 1
\end{bmatrix}\
\text{ and }\
\mat{F} =
\begin{bmatrix}
  -1 & -2 & -1 &  0 \\
   1 &  0 & -1 &  0 \\
   0 &  1 &  1 & -1 \\
   0 &  0 &  0 &  1
\end{bmatrix}.
\]

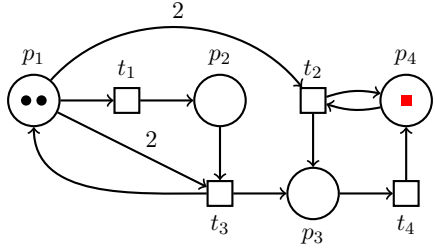
\begin{figure}[tbh]
  \centering
  \vspace*{-15pt}
  \begin{tikzpicture}[node distance=1.5cm, auto, thick, transform shape, scale=0.825]
    \node[place,      label=above:$p_1$,  tokens=2]   (p1) {};
    \node[transition, label=above:$t_1$, right of=p1] (t1) {};

    \node[place,      label=above:$p_2$, right of=t1] (p4) {};
    \node[transition, label=below:$t_3$, below of=p4] (t3) {};

    \node[transition, label=above:$t_2$, right of=p4] (t2) {};
    \node[place,      label=above:$p_4$, right of=t2] (p2) {}
         [children are tokens]
         child {node[token, red, rectangle] {}};

    \node[place,      label=below:$p_3$, below of=t2] (p3) {}
       ;

    \node[transition, label=below:$t_4$, right of=p3] (t4) {};

    \path[->]
    (p1) edge node {} (t1)
    (t1) edge node {} (p4)
    
    (p1) edge[bend left=50] node {$2$} (t2)
    (t2) edge node {} (p3)
    (t2) edge[bend left=15] node {} (p2)
    (p2) edge[bend left=15] node {} (t2)

    (p1) edge node[xshift=2pt, yshift=-3pt] {$2$} (t3)
    (t3) edge[out=180, in=-90]  node {}    (p1)
    (p4) edge                   node {}    (t3)
    (t3) edge                   node {}    (p3)

    (p3) edge node {} (t4)
    (t4) edge node {} (p2)
    ;
  \end{tikzpicture}
  \caption{A Petri net and two markings $\msrc = \{p_1 \mapsto 2, p_2
    \mapsto 0, p_3 \mapsto 0, p_4 \mapsto 0\}$ (black circles) and
    $\mtgt = \{p_1 \mapsto 0, p_2 \mapsto 0, p_3 \mapsto 0, p_4
    \mapsto 1\}$ (colored squares).}\label{fig:pn}
\end{figure}

A sequence $\sigma = \alpha_1 t_1 \cdots \alpha_n t_n$ is
a \emph{firing sequence} from $\msrc$ to $\mtgt$ if there are markings
$\vec{m}_0, \ldots, \vec{m}_n$ satisfying $\msrc
= \vec{m}_0 \trans{\alpha_1 t_1} \vec{m}_1 \cdots \trans{\alpha_n
t_n} \vec{m}_n = \mtgt$. We write
$\vec{m}_0 \trans{\sigma} \vec{m}_n$. We say that
$\msrc$ \emph{enables} $\sigma$, and that $\mtgt$ enables $\sigma$
backwards, or \emph{backward-enables} $\sigma$. The \emph{support} of
$\sigma$ is the set $\{t_1, \ldots, t_n\}$. For example, for the Petri
net of Figure~\ref{fig:pn}, we have $\msrc \trans{\sigma} \mtgt$
where
\begin{align*}
  \msrc &= \{p_1 \mapsto 2, p_2 \mapsto 0, p_3 \mapsto 0, p_4 \mapsto 0\}, \\
  \mtgt &= \{p_1 \mapsto 0, p_2 \mapsto 0, p_3 \mapsto 0, p_4 \mapsto 1\}, \\
  \sigma &= (1/2) t_1\ (1/2) t_3\ (1/2) t_4\ (1/2) t_2 \ (1/2) t_4.
\end{align*}

We write $\vec{m} \Utrans{} \vec{m}'$ to denote
that $\vec{m} \trans{\alpha t} \vec{m}'$ for some transition $t$ and some $\alpha > 0$, and
$\vec{m} \Utrans{*} \vec{m}'$ to denote that $\vec{m} \trans{\sigma} \vec{m}'$ for some firinng
sequence $\sigma$.

The Petri net $\pn_U$ is obtained by removing transitions
$T \setminus U$ from $\pn$. In particular,
$\vec{m} \Utrans{U^*} \vec{m}'$ holds in $\pn$ iff
$\vec{m} \Utrans{*} \vec{m}'$ holds in $\pn_U$.


The \emph{transpose} of $\pn = (P, T, (\mat{F}_-, \mat{F}_+))$ is
$\rev{\pn} \defeq (P, T, (\mat{F}_+, \mat{F}_-))$. 
We have 
$\msrc \trans{\sigma} \mtgt$ in $\pn$ if{}f $\mtgt\trans{\tau} \msrc$ in
$\rev{\pn}$, where $\tau$ is the reverse of $\sigma$. 
For $U \subseteq
T$, we write $U^\transpose$ to denote $U$ in the context of
$\rev{\pn}$. 

\paragraph*{Linear arithmetic and Farkas' lemma.}

An \emph{atomic proposition} is a linear inequality of the form
$\vec{a} \vec{x} \leq b$ or $\vec{a} \vec{x} < b$, where $b$ and the
components of $\vec{a}$ are over $\R$. Such a proposition is
\emph{homogeneous} if $b = 0$. A \emph{linear formula} is a
first-order formula over atomic propositions with variables ranging
over $\Rpos$ (the classical definition uses $\R$, but in our context
variables will encode markings.) The \emph{solutions} of a linear
formula $\varphi$, denoted $\sol{\varphi}$, are the assignments to the
free variables of $\varphi$ that satisfy $\varphi$. A linear formula
is \emph{homogeneous} if all of its atomic propositions are
homogeneous. For every formula $\varphi(\vec{x}, \vec{y})$ where
$\vec{x}$ and $\vec{y}$ have the same arity, we write $\rev{\varphi}$
to denote the formula that syntactically swaps $\vec{x}$ and
$\vec{y}$, so that $\sol{\rev{\varphi}} =
\rev{\sol{\varphi}}$. Throughout the paper, we will use Farkas' lemma,
a fundamental result of linear arithmetic that rephrases the absence
of solution to a system into the existence of one for another
system. The lemma has many variants, of which we give just one (e.g.\
see~\cite[Proposition~6.4.3(iii)]{GM07}):

\begin{lem}[Farkas' lemma]\label{lem:farkas}
  Let $\mat{A} \in \R^{m \times n}$ and $\vec{b} \in \R^m$. The
  formula $\mat{A}\vec{x} \leq \vec{b}$ has no solution iff
  $\mat{A}^\transpose \vec{y}
  = \vec{0} \land \vec{b}^\transpose \vec{y} <
  0 \land \vec{y} \geq \vec{0}$ has a solution.
\end{lem}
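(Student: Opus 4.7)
The plan is to prove the two directions separately. The reverse direction is a direct calculation: assuming $\vec{y} \geq \vec{0}$ satisfies $\mat{A}^\transpose \vec{y} = \vec{0}$ and $\vec{b}^\transpose \vec{y} < 0$, any candidate solution $\vec{x}$ of $\mat{A}\vec{x} \leq \vec{b}$ would yield
\[
0 = (\mat{A}^\transpose \vec{y})^\transpose \vec{x} = \vec{y}^\transpose \mat{A}\vec{x} \leq \vec{y}^\transpose \vec{b} < 0,
\]
the middle inequality being legitimate because $\vec{y} \geq \vec{0}$, a contradiction.

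For the forward direction, my approach is a geometric separation argument. I would introduce the set
\[
C \defeq \{\mat{A}\vec{x} + \vec{s} : \vec{x} \in \R^n,\ \vec{s} \in \Rpos^m\} \subseteq \R^m,
\]
and observe that $\vec{c} \in C$ iff $\mat{A}\vec{x} \leq \vec{c}$ admits a solution, so the hypothesis becomes $\vec{b} \notin C$. Convexity and the cone property of $C$ are immediate from the definition, whereas closedness follows because $C$ is the linear image of the polyhedral cone $\R^n \times \Rpos^m$, which in finite dimension remains polyhedral by Minkowski--Weyl, hence closed.

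With $C$ a closed convex cone avoiding $\vec{b}$, the separating hyperplane theorem furnishes a nonzero $\vec{y} \in \R^m$ and $\alpha \in \R$ with $\vec{y}^\transpose \vec{b} < \alpha \leq \vec{y}^\transpose \vec{c}$ for all $\vec{c} \in C$. Since $\vec{0} \in C$ we have $\alpha \leq 0$, and if $\vec{y}^\transpose \vec{c}_0 < 0$ held for some $\vec{c}_0 \in C$, then $\lambda \vec{c}_0 \in C$ would drive $\vec{y}^\transpose (\lambda \vec{c}_0) \to -\infty$ as $\lambda \to \infty$, contradicting the lower bound. So $\alpha = 0$ and $\vec{y}^\transpose \vec{c} \geq 0$ throughout $C$. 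Specializing $\vec{c} = \mat{A}\vec{x}$ for arbitrary $\vec{x}$ (applying the inequality to both $\vec{x}$ and $-\vec{x}$) forces $\mat{A}^\transpose \vec{y} = \vec{0}$, while $\vec{c} = \vec{e}_i$ forces $\vec{y}(i) \geq 0$. Combined with $\vec{b}^\transpose \vec{y} < 0$, this yields the required witness.

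The main obstacle is precisely the closedness of $C$: without it, the separating hyperplane theorem provides only weak separation, which is incompatible with the strict inequality $\vec{b}^\transpose \vec{y} < 0$ we need. This is where finite dimensionality becomes essential, and is typically imported via Minkowski--Weyl. An elementary alternative bypassing all topology would be to proceed by induction on $n$ via Fourier--Motzkin elimination, maintaining as invariant that every derived inequality is a nonnegative combination of the originals; that route avoids any appeal to convex geometry at the cost of being combinatorially heavier.
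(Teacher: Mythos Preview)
Your proof is correct. Note, however, that the paper does not itself prove Farkas' lemma: the lemma is stated as a cited result from the literature, followed only by a paragraph of informal geometric intuition (explicitly restricted to the square case $n=m$). That intuition describes the hyperplane perpendicular to $\vec{y}$ as separating $\vec{b}$ from the nonnegative orthant and from the column space of $\mat{A}$, which is essentially the same separation picture you make rigorous via the cone $C = \{\mat{A}\vec{x}+\vec{s} : \vec{x}\in\R^n,\ \vec{s}\in\Rpos^m\}$ and the separating-hyperplane theorem. The paper's intuitive discussion does not address how the separating $\vec{y}$ is actually produced in the forward direction, nor the closedness issue you correctly flag; your argument supplies precisely those missing ingredients, so there is nothing further to compare against.
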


%
%

We give some geometric intuition for this lemma. We consider only the case $n=m$. Let $\vec{y}$ be a vector satisfying  $\vec{y} \geq \vec{0}$ (that is, $\vec{y}$ belongs to the positive orthant) and $\vec{b}^\transpose \vec{y} < 0$. The hyperplane $H$ perpendicular to $\vec{y}$ divides the space into two half-spaces. Since $\vec{b}^\transpose \vec{y} < 0$, the vectors $\vec{y}$ and $\vec{b}$ lie in opposite half-spaces; we say that $H$ \emph{separates} $\vec{b}$ from $\vec{y}$. Further, since $\vec{y}$ belongs to the positive orthant, the complete positive orthant lies in the same half-space as $\vec{y}$. In other words,  $H$ separates $\vec{b}$ not only from $\vec{y}$, but from the  positive orthant.

We argue that $\mat{A}^\transpose \vec{y} = \vec{0}$ holds for some $\vec{y}$ satisfying  $\vec{y} \geq \vec{0}$ and $\vec{b}^\transpose \vec{y} < 0$ if{}f $\mat{A}\vec{x} \leq \vec{b}$ has no solution. It is the case that  $\mat{A}^\transpose \vec{y} = \vec{0}$ holds for a vector $\vec{y}$ if{}f every column vector of $\mat{A}$ is perpendicular to $\vec{y}$, and so if{}f  every column vector of $\mat{A}$ lies in $H$. Since $H$ separates $\vec{y}$ and $\vec{b}$, and $\mat{A}\vec{x}$ is a linear combination of the columns of $\mat{A}$ for every vector $\vec{x}$, every column vector of $\mat{A}$ lies in $H$ if{}f $\vec{b} - \mat{A}\vec{x}$ lies on the same half-space as $\vec{b}$ for every $\vec{x}$. Since $H$ separates $\vec{b}$ from the positive orthant, this is the case if{}f  $\vec{b} - \mat{A}\vec{x}$ is not in the positive orthant for any $\vec{x}$, and so if{}f $\mat{A}\vec{x} \leq \vec{b}$ has no solution.

\subsection{Separators and bi-separators}

Let us fix a Petri net $\pn = (P, T, F)$ and two markings
$\msrc, \mtgt \in \Rpos^P$.

\begin{defi}\label{def:separator}
  A \emph{separator} for $(\msrc,\mtgt)$ is a linear formula $\varphi$
  over $\Rpos^P$ such that:
  \begin{enumerate}
  \item $\msrc \in \sol{\varphi}$;

  \item $\varphi$ is \emph{forward invariant}, i.e., $\vec{m} \in
    \sol{\varphi}$ and $\vec{m} \trans{} \vec{m'}$ implies $\vec{m}'
    \in \sol{\varphi}$; and

  \item $\mtgt \notin \sol{\varphi}$.
  \end{enumerate}
\end{defi}

It follows immediately from the definition that if there is a
separator $\varphi$ for $(\msrc, \mtgt)$, then $\msrc \not\Utrans{*}
\mtgt$. Thus, in order to show that $\msrc \not\Utrans{*} \mtgt$ in
$\pn$, we can either give a separator for $(\msrc, \mtgt)$
w.r.t.\ $\pn$, or a separator for $(\mtgt, \msrc)$
w.r.t.\ $\rev{\pn}$. Let us call them \emph{forward} and
\emph{backward} separators. Loosely speaking, a forward separator
shows that $\mtgt$ is not among the markings reachable from $\msrc$,
and a backward separator shows that $\msrc$ is not among the markings
backward-reachable from $\mtgt$. Bi-separators are formulas from which
we can easily obtain forward and backward separators. The symmetry
w.r.t.\ forward and backward reachability make them easier to handle.



\begin{defi}\label{def:bisep}
  A linear formula $\varphi$ over $\Rpos^P \times \Rpos^P$ is
  \begin{itemize}
  \item \emph{forward invariant} if $(\vec{m}, \vec{m}') \in
    \sol{\varphi}$ and $\vec{m}' \trans{} \vec{m}''$ imply $(\vec{m},
    \vec{m}'') \in \sol{\varphi}$;

  \item \emph{backward invariant} if $(\vec{m}', \vec{m}'') \in
    \sol{\varphi}$ and $\vec{m} \trans{} \vec{m}'$ imply $(\vec{m},
    \vec{m}'') \in \sol{\varphi}$; and

  \item \emph{bi-invariant} if it is forward and backward invariant.
  \end{itemize}
  A \emph{bi-separator} for $(\msrc, \mtgt)$ is a bi-invariant linear
  formula $\varphi$ such that $(\msrc, \msrc) \in \sol{\varphi}$,
  $(\mtgt, \mtgt) \in \sol{\varphi}$ and $(\msrc, \mtgt) \notin
  \sol{\varphi}$.
\end{defi}

The following proposition shows how to obtain separators from
bi-separators.

\begin{prop}
  Let $\varphi$ be a bi-separator for $(\msrc, \mtgt)$. The following
  holds:
  \begin{itemize}
  \item $\mathmakebox[27pt][l]{\psi(\vec{m})} \defeq \varphi(\msrc,
    \vec{m})$ is a separator for $(\msrc, \mtgt)$ in $\pn$;

  \item $\mathmakebox[27pt][l]{\psi'(\vec{m})} \defeq \varphi(\vec{m},
    \mtgt)$ is a separator for $(\mtgt,\msrc)$ in $\rev{\pn}$.
  \end{itemize}
\end{prop}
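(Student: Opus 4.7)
The plan is to verify each of the three conditions in the definition of a separator for both $\psi$ and $\psi'$, by directly translating them into the corresponding property of the bi-separator $\varphi$. Since each required property of the projection collapses to a single clause of the bi-separator definition, the proof is essentially a matter of unpacking notation; there is no substantial obstacle.

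For $\psi(\vec{m}) \defeq \varphi(\msrc, \vec{m})$: I would first show that $\msrc \in \sol{\psi}$, which is immediate from $(\msrc, \msrc) \in \sol{\varphi}$. For forward invariance, I would take $\vec{m} \in \sol{\psi}$ with $\vec{m} \trans{} \vec{m}'$, rewrite the hypothesis as $(\msrc, \vec{m}) \in \sol{\varphi}$, and then apply the forward invariance of $\varphi$ (using the second coordinate transition $\vec{m} \trans{} \vec{m}'$) to obtain $(\msrc, \vec{m}') \in \sol{\varphi}$, i.e., $\vec{m}' \in \sol{\psi}$. Finally, $\mtgt \notin \sol{\psi}$ is exactly $(\msrc, \mtgt) \notin \sol{\varphi}$.

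For $\psi'(\vec{m}) \defeq \varphi(\vec{m}, \mtgt)$ viewed as a separator in $\rev{\pn}$ for $(\mtgt, \msrc)$: I would verify $\mtgt \in \sol{\psi'}$ using $(\mtgt, \mtgt) \in \sol{\varphi}$, and $\msrc \notin \sol{\psi'}$ using $(\msrc, \mtgt) \notin \sol{\varphi}$. The key step is forward invariance \emph{in $\rev{\pn}$}: assume $\vec{m} \in \sol{\psi'}$ and $\vec{m} \trans{} \vec{m}'$ in $\rev{\pn}$. Using the fact, recalled in the excerpt, that a transition of $\rev{\pn}$ is the reverse of a transition of $\pn$, this is equivalent to $\vec{m}' \trans{} \vec{m}$ in $\pn$. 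The hypothesis reads $(\vec{m}, \mtgt) \in \sol{\varphi}$, so backward invariance of $\varphi$ (applied to the first coordinate, via the transition $\vec{m}' \trans{} \vec{m}$ in $\pn$) yields $(\vec{m}', \mtgt) \in \sol{\varphi}$, i.e., $\vec{m}' \in \sol{\psi'}$.

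The only mildly subtle point is keeping the direction conventions straight: backward invariance of $\varphi$ in $\pn$ is precisely what delivers forward invariance of $\psi'$ in $\rev{\pn}$, which is why bi-invariance is exactly the right notion for producing both a forward and a backward separator simultaneously.
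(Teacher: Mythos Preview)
Your proposal is correct and follows essentially the same approach as the paper: verify the three separator conditions by unpacking the bi-separator definition. The only difference is that the paper dismisses the second item as ``symmetric,'' whereas you spell out how backward invariance of $\varphi$ in $\pn$ translates into forward invariance of $\psi'$ in $\rev{\pn}$; this extra detail is fine and arguably clearer.
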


\begin{proof}
  It suffices to prove the first statement, the second is
  symmetric. We have $\msrc \in \sol{\psi}$ and $\mtgt \notin
  \sol{\psi}$ as $(\msrc, \msrc) \in \sol{\varphi}$ and $(\msrc,
  \mtgt) \notin \sol{\varphi}$.

  It remains to show that $\psi$ is forward invariant. Let $\vec{m}
  \in \sol{\psi}$ and $\vec{m} \trans{\alpha t} \vec{m}'$. Since
  $(\msrc, \vec{m}) \in \sol{\varphi}$ and $\varphi$ is forward
  invariant, it is the case that $(\msrc, \vec{m}') \in
  \sol{\varphi}$. Hence, $\vec{m}' \in \sol{\psi}$ as desired.
\end{proof}

\section{A characterization of unreachability}\label{sec:informal}
Given a Petri net $\pn = (P, T, F)$, a set of transitions $U \subseteq T$, and two markings
  $\msrc, \mtgt \in \Rpos^P$, we write $\msrc \etrans{U} \mtgt$ to denote
that $\msrc \trans{\sigma} \mtgt$ for some sequence $\sigma$ with support $U$. 
In words,  $\msrc \trans{\sigma} \mtgt$ denotes that  $\mtgt$ can be reached from $\msrc$ by firing
 \emph{all} transitions of $U$ and \emph{only} transitions of $U$. We say that $\mtgt$ is \emph{$U$-reachable}
from $\msrc$

In~\cite{FH15}, Fraca and Haddad gave the following characterization
of $U$-reachability in continuous Petri nets:

\begin{thmC}[\cite{FH15}]\label{thm:reach:charac}
  Let $\pn = (P, T, F)$ be a Petri net, let $U \subseteq T$, and let
  $\msrc, \mtgt \in \Rpos^P$. It is the case that $\msrc \etrans{U}
  \mtgt$ if{}f the following
  conditions hold:
  \begin{enumerate} 
  \item some vector $\vec{x} \in \Rpos^T$ with support $U$ satisfies
    $\msrc + \mat{F} \vec{x} = \mtgt$,
    
  \item some firing sequence $\sigma$ with support $U$ is enabled at
    $\msrc$, and
    
  \item some firing sequence $\tau$ with support $U$ is
    backward-enabled at $\mtgt$.
  \end{enumerate}
  Furthermore, these conditions can be checked in polynomial time.
\end{thmC}



Theorem~\ref{thm:reach:charac} provides a witness of $U$-reachability in the shape of a vector $\vec{x}$ and two firing sequences 
$\sigma$ and $\tau$. In this section, we find a witness of $U$-unreachability, i.e., a witness showing that $\msrc \not\etrans{U}
  \mtgt$. 

Consider the logical form of Theorem~\ref{thm:reach:charac}:

\begin{equation}
\label{logic}
\begin{gathered}
 \msrc \etrans{U} \mtgt   \\ 
\iff  (\exists \vec{x} \in \Rpos^T
 \colon A_1(\vec{x})) \land (\exists \sigma \in T^* \colon A_2(\sigma))
\land (\exists \tau \in T^* \colon A_3(\tau)),
\end{gathered}
\end{equation}

\noindent where $A_1(\vec{x}), A_2(\sigma), A_3(\tau)$ denote that $\vec{x}$ has support $U$ and satisfies
    $\msrc + \mat{F} \vec{x} = \mtgt$, that $\sigma$ has support $U$ and is enabled at
    $\msrc$, and that $\tau$ has support $U$ and is
    backward-enabled at $\mtgt$, respectively.  The theorem is logically equivalent to

\begin{equation}
\label{logicequiv}
\begin{gathered}
\msrc \not\etrans{U} \mtgt \\ 
\iff  (\forall \vec{x} \in \Rpos^T \colon \neg A_1(\vec{x}))
\lor (\forall \sigma \in T^* \colon \neg A_2(\sigma)) \lor (\forall \tau \in T^*
\colon \neg A_3(\tau)).
\end{gathered}
\end{equation}

\noindent To obtain a witness of $U$-unreachability, we
proceed in two steps:

\begin{itemize}
\item In Section~\ref{sec:B2B3}, we define predicates $B_2(Q)$ and $B_3(R)$, where $Q,R \subseteq P$, satisfying

\begin{equation}
\label{logicequiv2}
\begin{gathered}
(\forall \sigma \in T^* \colon \neg A_2(\sigma))  \Longleftrightarrow (\exists Q \subseteq P \colon B_2(Q)), \\
(\forall \tau \in T^* \colon \neg A_3(\tau)) \Longleftrightarrow (\exists R \subseteq P \colon B_3(R)). \\
\end{gathered}
\end{equation}

\item In Section~\ref{sec:B1}, we define a predicate $B_1(\vec{y})$, where $\vec{y} \in \Rpos^P$, satisfying

\begin{equation}
\label{logicequiv3}
\begin{gathered}
(\forall \vec{x} \in \Rpos^T \colon \neg A_1(\vec{x})) \Longrightarrow (\exists \vec{y} \in \Rpos^P \colon B_1(\vec{y})) \Longrightarrow \msrc \not\etrans{U} \mtgt.
\end{gathered}
\end{equation}
\end{itemize}

Here, a remark is in order. Observe that we do not claim that
``$\forall \vec{x} \in \Rpos^T \colon \neg A_1(\vec{x})$'' is
equivalent to ``$\exists \vec{y} \in \Rpos^P \colon B_1(\vec{y})$.''
Rather, we claim that the former implies the latter, and the latter
implies $U$-unreachability.

Altogether, applying propositional logic to  (\ref{logicequiv}), (\ref{logicequiv2}), and (\ref{logicequiv3}) we obtain:
\begin{equation}
\label{logicb}
\begin{gathered}
\msrc \not\etrans{U} \mtgt \\ 
\iff  (\exists \vec{y} \in \Rpos^P \colon B_1(\vec{y}))
\vee (\exists Q \subseteq P \colon B_2(Q)) \vee (\exists R \subseteq P
\colon B_3(R)).
\end{gathered}
\end{equation}

\noindent This shows that $U$-unreachability is witnessed either by a vector $\vec{y}$, a set $Q$, or a set $R$.

\subsection{The predicates $B_2(Q)$ and $B_3(R)$}
\label{sec:B2B3}

The predicates $B_2(Q)$ and $B_3(R)$ were already implicitly defined in~\cite{FH15} in terms of sets of places called \emph{siphons}
and \emph{traps}.  Given a set of places $X$, let
$\pre{X}$ (resp.\ $\post{X}$) be the set of transitions $t$ such that
$\mat{F}_{+}(p,t) > 0$ (resp.\ $\mat{F}_{-}(p,t) > 0$) for some $p \in
X$. A \emph{siphon} of $\pn$ is a subset $Q$ of places such that
$\pre{Q} \subseteq \post{Q}$. A \emph{trap} is a subset $R$ of places
such that $\post{R} \subseteq \pre{R}$. Informally, empty siphons
remain empty, and marked traps remain marked. Formally, if $\vec{m}
\trans{} \vec{m}'$, then $\vec{m}(Q) = 0$ implies $\vec{m}'(Q) = 0$,
and $\vec{m}(R) > 0$ implies $\vec{m}'(R) > 0$. Fraca and Haddad proved in \cite{FH15}:

\begin{propC}[\cite{FH15}]\label{prop:siphon-trap}
  Let $\pn = (P, T, F)$ be a Petri net, let $U \subseteq T$, and let
  $\vec{m} \in \Rpos^P$. The following statements hold:
  \begin{itemize}
  \item No firing sequence with support $U$ is enabled at $\vec{m}$
    if{}f there exists a siphon $Q$ of $\pn_U$ such that $\post{Q} \neq
    \emptyset$ and  $\vec{m}(Q) = 0$;
    
  \item No firing sequence with support $U$ is backward-enabled at
    $\vec{m}$ if{}f there exists a trap $R$ of $\pn_U$ such that $\pre{R} \neq
    \emptyset$ and $\vec{m}(R) = 0$.
  \end{itemize}
\end{propC}
\noindent The if-direction of
the proposition is easy to prove. A siphon $Q$ of $\pn_U$ satisfies
$\post{Q} \subseteq U$. Since $Q$ is empty at $\vec{m}$, if we only
fire transitions from $U$ then $Q$ remains empty, and so no transition
of $\post{Q}$ ever becomes enabled. So, transitions of $\post{Q}$ can
only fire after transitions that do not belong to $U$ have fired
first. But no such firing sequence has support $U$, and we are
done. The case of traps is analogous. For the only-if direction, we
refer the reader to~\cite{FH15}.

By Proposition \ref{prop:siphon-trap}, the predicates
\begin{align*}
B_2(Q) &\defeq \text{$Q$ is a siphon of  $\pn_U$ such that $\post{Q} \neq \emptyset$ and $\msrc(Q) = 0$},  \\
B_3(R) &\defeq \text{$R$ is a trap of  $\pn_U$ such that $\pre{R} \neq \emptyset$ and $ \mtgt(R) = 0$},
\end{align*}
\noindent satisfy (\ref{logicequiv2}).

\subsection{The predicate $B_1(\mathbf{y})$} 
\label{sec:B1}

 In order to define the predicate $B_1(\vec{y})$, we need to
 introduce \emph{exclusion functions}.

\begin{defi}\label{def:mark:eq:nosol}
Let $\pn = (P, T, F)$ be a Petri net, let $\msrc, \mtgt \in \Rpos^P$
  and let $S \subseteq S' \subseteq T$. An \emph{exclusion function
    for $(S, S')$} is a function $f \colon \Rpos^P \to \R$ such that
  \begin{enumerate}
  \item for all $t \in S'$, if $\vec{m} \trans{t} \vec{m}'$ then $f(\vec{m}) \leq
    f(\vec{m}')$;  and
    
  \item either $f(\msrc) > f(\mtgt)$, or $f(\msrc) = f(\mtgt)$ and
    there exists $t \in S$ such that if $\vec{m} \trans{t} \vec{m}'$
    then $f(\vec{m}) < f(\vec{m}')$.
  \end{enumerate}

\noindent  An \emph{exclusion function for $S$} is an exclusion function for $(S,S)$. An exclusion function $f \colon \Rpos^P \to \R$ is \emph{linear} if there exists a vector $\vec{y} \in \R^P$ such that $f(\vec{m}) = \sum_{p \in P} \vec{y}(p) \cdot \vec{m}(p)$ for every $\vec{m} \in \Rpos^P$. Abusing language, we speak of the linear exclusion function $\vec{y}$.
\end{defi}

We define:
\[
B_1(\vec{y}) \defeq \text{$\vec{y}$ is a linear exclusion function for $U$}.
\]
We can easily show the second implication of (\ref{logicequiv3}), namely that an exclusion function for $U$ is a witness of $U$-unreachabilty:

\begin{lem}
If there exists a (linear) exclusion function for $U \subseteq T$, then $\msrc \not\etrans{U} \mtgt$.
\end{lem}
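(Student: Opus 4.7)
The plan is a straightforward potential-function argument by contradiction. Suppose, for the sake of contradiction, that $\msrc \etrans{U} \mtgt$, and let $\sigma = \alpha_1 t_1 \cdots \alpha_n t_n$ be a firing sequence with support $U$ such that $\msrc = \vec{m}_0 \trans{\alpha_1 t_1} \vec{m}_1 \cdots \trans{\alpha_n t_n} \vec{m}_n = \mtgt$. Let $f$ be an exclusion function for $U$, viewed as an exclusion function for $(U, U)$.

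First I would verify that condition~(1) of Definition~\ref{def:mark:eq:nosol} extends to scaled firings, so that $f(\vec{m}_{i-1}) \leq f(\vec{m}_i)$ for every $i \in [1..n]$. In the linear case this is immediate: if $f(\vec{m}) = \vec{y}^\transpose \vec{m}$, then $f(\vec{m}') - f(\vec{m}) = \alpha_i \, \vec{y}^\transpose \effect{t_i}$, so weak monotonicity at $\alpha = 1$ transfers to every $\alpha > 0$, and similarly strict monotonicity at $\alpha = 1$ transfers to strict monotonicity at every $\alpha > 0$. Chaining these inequalities along $\sigma$ yields $f(\msrc) \leq f(\mtgt)$.

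Next I would split on the two cases of condition~(2). If $f(\msrc) > f(\mtgt)$, we immediately contradict the previous step. Otherwise $f(\msrc) = f(\mtgt)$ and there exists a distinguished transition $t^\star \in U$ such that every firing of $t^\star$ \emph{strictly} increases $f$. Since $\sigma$ has support exactly $U$, the transition $t^\star$ occurs at least once in $\sigma$, say at position $j$. Then by the strict increase at step $j$ and the weak increases at all other steps, we obtain
\[
f(\mtgt) = f(\vec{m}_n) \;\geq\; f(\vec{m}_j) \;>\; f(\vec{m}_{j-1}) \;\geq\; f(\vec{m}_0) = f(\msrc),
\]
contradicting $f(\msrc) = f(\mtgt)$. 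In either case we reach a contradiction, so no such firing sequence exists, i.e.\ $\msrc \not\etrans{U} \mtgt$.

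I do not expect any real obstacle: the argument is essentially the standard ``strictly increasing potential along every witnessing trace'' scheme. The only minor point requiring care is the scaling issue above, namely making sure that conditions~(1) and~(2) of the definition, which are phrased for firings $\vec{m} \trans{t} \vec{m}'$, are applied to firings $\vec{m} \trans{\alpha t} \vec{m}'$ with arbitrary $\alpha > 0$; for linear exclusion functions this is transparent from the formula for $f(\vec{m}') - f(\vec{m})$, and for general exclusion functions it is implicit in the intended reading of the definition.
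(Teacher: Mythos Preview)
Your proposal is correct and follows essentially the same potential-function argument as the paper: the paper's proof is a terse two-sentence version of exactly the case split you carry out (monotonicity along the run, then either $f(\msrc)>f(\mtgt)$ gives an immediate contradiction, or equality plus a strictly increasing transition in $U$ forces a strict increase somewhere along any run with support $U$). Your extra care about scaling ($\alpha$-firings versus $1$-firings) is warranted, since the paper's notation $\vec{m}\trans{t}\vec{m}'$ in Definition~\ref{def:mark:eq:nosol} is a bit loose; your linear argument via $f(\vec{m}')-f(\vec{m})=\alpha\,\vec{y}^\transpose\effect{t}$ handles this cleanly.
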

\begin{proof}
Let $f$ be an exclusion function for $U$.
Call $f(\vec{m})$ the \emph{value} of $\vec{m}$. By the definition of an exclusion function,
either $\mtgt$ has lower value than $\msrc$ and no transition of $U$
decreases the value, or $\msrc$ and $\mtgt$ have the same value, no
transition of $U$ decreases the value, and at least one transition of $U$ increases it. So, it
is impossible to reach $\mtgt$ from $\msrc$ by firing \emph{all} and
\emph{only} the transitions of $U$. 
\end{proof}


The first  implication of  (\ref{logicequiv3}) is shown in the forthcoming Proposition~\ref{prop:mark:eq:nosol}.
We first need a technical lemma, one of the many consequences of Farkas' lemma. Recall that, given ${\sim} \in \{\leq,
=, \geq\}$, the notation $\vec{x} \sim_U \vec{y}$ stands for
``$\vec{x}(t) \sim \vec{y}(t)$ for all $t \in U$''.

\begin{lem}\label{prop:farkas:supp}
  The system $\exists \vec{x} \geq \vec{0} : \mat{A} \vec{x} = \vec{b}
  \land U \subseteq \supp{\vec{x}} \subseteq V$ has no solution iff
  this system has a solution: $\exists \vec{y} : \mat{A}^\transpose
  \vec{y} \geq_{V} \vec{0} \land \vec{b}^\transpose\vec{y} \leq 0
  \land \vec{b}^\transpose\vec{y} < \sum_{t \in U} (\mat{A}^\transpose
  \vec{y})(t)$.
\end{lem}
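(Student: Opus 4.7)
The plan is to derive this strengthening of Farkas' lemma from linear programming duality. The $(\Leftarrow)$ direction is a short calculation. Suppose $\vec{y}$ solves the right-hand system and that $\vec{x}$ is a candidate primal solution. Expanding
$\vec{b}^\transpose\vec{y} = \vec{x}^\transpose\mat{A}^\transpose\vec{y} = \sum_{t \in V}\vec{x}(t)(\mat{A}^\transpose\vec{y})(t)$
(using $\supp{\vec{x}} \subseteq V$), every summand is nonnegative, so $\vec{b}^\transpose\vec{y} \geq 0$; combined with $\vec{b}^\transpose\vec{y} \leq 0$ this forces $\vec{b}^\transpose\vec{y} = 0$. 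The third dual condition then gives $\sum_{t \in U}(\mat{A}^\transpose\vec{y})(t) > 0$, so some $t^\ast \in U$ satisfies $(\mat{A}^\transpose\vec{y})(t^\ast) > 0$; since $t^\ast \in U \subseteq \supp{\vec{x}}$ we also have $\vec{x}(t^\ast) > 0$, so $\vec{b}^\transpose\vec{y} \geq \vec{x}(t^\ast)(\mat{A}^\transpose\vec{y})(t^\ast) > 0$, a contradiction.

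For $(\Rightarrow)$, the core difficulty is that $U \subseteq \supp{\vec{x}}$ is a strict constraint, which Lemma~\ref{lem:farkas} as stated does not handle. I would eliminate it by introducing a slack variable $\epsilon$ and considering the linear program
\[
\max\,\{\epsilon \colon \vec{x} \geq \vec{0},\ \mat{A}_V\vec{x} = \vec{b},\ \vec{x}(t) \geq \epsilon \text{ for all } t \in U\},
\]
where $\mat{A}_V$ is the restriction of $\mat{A}$ to columns indexed by $V$; this restriction is legitimate because $\supp{\vec{x}} \subseteq V$ just means $\vec{x}(t) = 0$ for $t \notin V$. The primal of the lemma is solvable if and only if this LP admits a solution with $\epsilon > 0$. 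Under the hypothesis that the primal is unsolvable, the LP is therefore either infeasible or has optimum exactly $0$ (note that $\epsilon = 0$ is feasible whenever $\mat{A}_V\vec{x} = \vec{b},\ \vec{x} \geq \vec{0}$ is, and optimum $\leq 0$ by assumption).

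If the LP is infeasible, then already $\{\vec{x} \geq \vec{0} \colon \mat{A}_V\vec{x} = \vec{b}\}$ is empty, and the equality-form of Farkas' lemma (obtained from Lemma~\ref{lem:farkas} by encoding each equality as a pair of inequalities) produces $\vec{y}$ with $\mat{A}_V^\transpose\vec{y} \geq \vec{0}$ and $\vec{b}^\transpose\vec{y} < 0$; this trivially meets all three dual conditions since $\vec{b}^\transpose\vec{y} < 0 \leq \sum_{t \in U}(\mat{A}^\transpose\vec{y})(t)$. If instead the LP is feasible with optimum $0$, I would write down its LP dual by the standard recipe; after introducing multipliers $\lambda_t \geq 0$ for the constraints $\vec{x}(t) \geq \epsilon$ (with $\sum_{t\in U}\lambda_t = 1$ because $\epsilon$ is free in the objective) and eliminating them via the rescaling $\lambda_t = (\mat{A}^\transpose\vec{y})(t)/\sum_{s \in U}(\mat{A}^\transpose\vec{y})(s)$, the dual simplifies to
\[
\min\,\{\vec{b}^\transpose\vec{y} \colon \mat{A}^\transpose\vec{y} \geq_V \vec{0},\ \textstyle\sum_{t \in U}(\mat{A}^\transpose\vec{y})(t) \geq 1\}.
\]
Strong LP duality then yields $\vec{y}^\ast$ with $\vec{b}^\transpose\vec{y}^\ast = 0$ satisfying these constraints, and $\vec{b}^\transpose\vec{y}^\ast = 0 < 1 \leq \sum_{t \in U}(\mat{A}^\transpose\vec{y}^\ast)(t)$ delivers the three required dual conditions.

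The main obstacle is the bookkeeping for the LP dual: one must check carefully that the rescaling step preserves the constraints $\lambda_t \leq (\mat{A}^\transpose\vec{y})(t)$ (it does, precisely because the threshold $\sum_{s\in U}(\mat{A}^\transpose\vec{y})(s) \geq 1$ makes the scaling factor at most $1$), and that strong duality applies (which is fine since the LP is feasible and bounded above by $0$). Beyond this routine verification, both directions reduce immediately to Farkas' lemma and standard LP duality.
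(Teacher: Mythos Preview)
Your argument is correct. The $(\Leftarrow)$ computation is clean, and for $(\Rightarrow)$ the LP formulation with the slack $\epsilon$ does the job: the infeasible case reduces to ordinary Farkas, and in the feasible case the primal optimum is attained at $0$ (since any nonnegative solution of $\mat{A}_V\vec{x}=\vec{b}$ gives a feasible point with $\epsilon=0$, and by hypothesis no feasible point has $\epsilon>0$), so strong duality applies. Your elimination of the $\lambda_t$ is just the observation that the projection of the dual feasible region onto $\vec{y}$ is exactly $\{\vec{y}:\mat{A}^\transpose\vec{y}\geq_V\vec{0},\ \sum_{t\in U}(\mat{A}^\transpose\vec{y})(t)\geq 1\}$; the ``rescaling'' phrasing is a little misleading, but the content is right. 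One edge case worth noting explicitly: when $U=\emptyset$ the LP is unbounded whenever feasible, so Case~2 cannot occur and the argument collapses entirely to Case~1, which is consistent with the dual condition becoming $\vec{b}^\transpose\vec{y}<0$.

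The paper takes a different route. Rather than splitting into cases and invoking strong duality, it homogenizes the strict constraint: it observes that $U\subseteq\supp{\vec{x}}$ is equivalent (after scaling by a fresh variable $y\geq 1$) to $\vec{x}\geq_U\vec{1}$, packages the whole primal as a single closed system $\mat{A}'\vec{x}'\leq\vec{b}'$, applies Lemma~\ref{lem:farkas} once, and then simplifies the resulting alternative system algebraically. This yields both directions simultaneously as a chain of equivalences and uses only the basic Farkas lemma, at the price of a somewhat opaque matrix manipulation. Your approach trades that single-shot uniformity for a more transparent LP picture that explains where the dual constraints come from, but needs the extra machinery of strong duality and a feasibility case split.
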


Rather than giving a proof (which appears in the appendix), let us give
some geometric intuition. Consider the case in which $\emptyset =
U \subseteq V = T$. In this case, $U\subseteq \supp{\vec{x}} \subseteq
V$ holds vacuously, and we can replace ``$\geq_{V}$'' by ``$\geq$''
and ``$\vec{b}^\transpose\vec{y} < \sum_{t \in U}
(\mat{A}^\transpose \vec{y})(t)$'' by ``$\vec{b}^\transpose\vec{y} <
0$''. The conditions can now be geometrically interpreted as follows:
\begin{enumerate}
\item $\exists \vec{x} \geq \vec{0} : \mat{A} \vec{x} = \vec{b}$. The vector $\vec{b}$ lies in the cone spanned by the columns of $\mat{A}$.

\item $\exists \vec{y} : \mat{A}^\transpose \vec{y} \geq \vec{0} \land \vec{b}^\transpose\vec{y} < 0$. There exists a hyperplane $H$ (the hyperplane perpendicular to $\vec{y}$) such that all columns of 
$\mat{A}$ lie on the same side of the hyperplane as $\vec{y}$ (because $\mat{A}^\transpose \vec{y} \geq \vec{0}$),
while $\vec{b}$ lies on the opposite side (because $\vec{b}^\transpose\vec{y} <
0$). 
\end{enumerate}
Clearly, the two conditions are incompatible. The conditions of the general case $U \subseteq V$ are similar, but restricted to the columns of $\mat{A}$ 
for which $\vec{x}$ is positive: (1)~now states that $\vec{b}$ lies in the cone spanned by these columns, and (2)~that some hyperplane separates 
these columns from $\vec{b}$.

We are now ready to prove the first implication of (\ref{logicequiv3}), namely 
\[(\forall \vec{x} \in \Rpos^T \colon \neg A_1(\vec{x})) \Longrightarrow (\exists \vec{y} \in \Rpos^P \colon \neg B_1(\vec{y})).\]
Recall that $A_1(\vec{x})$ means that $\vec{x}$ has support $U$ and satisfies $\msrc + \mat{F} \vec{x} = \mtgt$, and 
$B_1(\vec{y})$ means that $\vec{y}$ is a linear exclusion function for $U$. We prove a slightly more general result; the first implication
of (\ref{logicequiv3}) is the special case $S = U = S'$:

\begin{prop}\label{prop:mark:eq:nosol}
   Let $\pn = (P, T, F)$ be a Petri net, let $\msrc, \mtgt \in
  \Rpos^P$, and let $S \subseteq S' \subseteq T$. If no vector $\vec{x}
  \in \Rpos^T$ satisfies $S \subseteq \supp{\vec{x}} \subseteq S'$ and
  $\msrc + \mat{F} \vec{x} = \mtgt$, then there exists a linear
  exclusion function for $(S,S')$.
\end{prop}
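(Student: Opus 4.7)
The plan is to apply Lemma~\ref{prop:farkas:supp} with $\mat{A} = \mat{F}$, $\vec{b} = \mtgt - \msrc$, $U = S$, and $V = S'$. The hypothesis of the proposition states that no $\vec{x} \in \Rpos^T$ with $S \subseteq \supp{\vec{x}} \subseteq S'$ satisfies $\mat{F}\vec{x} = \mtgt - \msrc$, which is exactly the non-solvability premise of the lemma. So I would immediately obtain a vector $\vec{y} \in \R^P$ satisfying
\begin{enumerate}
\item[(i)] $(\mat{F}^\transpose \vec{y})(t) \geq 0$ for every $t \in S'$,
\item[(ii)] $(\mtgt - \msrc)^\transpose \vec{y} \leq 0$,
\item[(iii)] $(\mtgt - \msrc)^\transpose \vec{y} < \sum_{t \in S} (\mat{F}^\transpose \vec{y})(t)$.
\end{enumerate}

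I would then define $f(\vec{m}) \defeq \vec{y}^\transpose \vec{m}$ and verify that $f$ is a linear exclusion function for $(S, S')$. The key calculation is that whenever $\vec{m} \trans{\alpha t} \vec{m}'$ with $\alpha > 0$, one has $f(\vec{m}') - f(\vec{m}) = \alpha\, \vec{y}^\transpose \effect{t} = \alpha\,(\mat{F}^\transpose \vec{y})(t)$, so the sign of $f(\vec{m}') - f(\vec{m})$ agrees with the sign of $(\mat{F}^\transpose \vec{y})(t)$. Condition~(i) therefore gives exactly the first requirement of an exclusion function: for every $t \in S'$, firing $t$ does not decrease $f$.

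For the second requirement, I would split on whether (ii) is strict. If $(\mtgt - \msrc)^\transpose \vec{y} < 0$, then $f(\msrc) > f(\mtgt)$ and we are done. Otherwise $(\mtgt - \msrc)^\transpose \vec{y} = 0$, meaning $f(\msrc) = f(\mtgt)$; in this case (iii) becomes $0 < \sum_{t \in S} (\mat{F}^\transpose \vec{y})(t)$. Since $S \subseteq S'$, each summand is non-negative by (i), so at least one term must be strictly positive. This yields a transition $t \in S$ with $(\mat{F}^\transpose \vec{y})(t) > 0$, hence $f(\vec{m}) < f(\vec{m}')$ whenever $\vec{m} \trans{t} \vec{m}'$, fulfilling the second clause of the definition.

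There is no genuine obstacle here: the proposition is essentially a translation of Lemma~\ref{prop:farkas:supp} into the vocabulary of firings. The only point requiring mild care is reconciling the boolean disjunction ``$f(\msrc) > f(\mtgt)$ \emph{or} ($f(\msrc)=f(\mtgt)$ and some $t \in S$ strictly increases $f$)'' with the two inequalities produced by the Farkas-style lemma — and this is handled by the simple case split above, exploiting that (i) forces all terms in the sum of (iii) to be non-negative.
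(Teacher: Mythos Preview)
Your proposal is correct and follows essentially the same approach as the paper: apply Lemma~\ref{prop:farkas:supp} with $\mat{A}=\mat{F}$ and $\vec{b}=\mtgt-\msrc$, set $f(\vec{m})=\vec{y}^\transpose\vec{m}$, and verify the two conditions of Definition~\ref{def:mark:eq:nosol} using the three inequalities for $\vec{y}$. The only cosmetic difference is that the paper splits on whether $\sum_{t\in S}(\mat{F}^\transpose\vec{y})(t)$ vanishes while you split on whether inequality~(ii) is strict; both case distinctions cover the disjunction in the definition equally well.
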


\begin{proof}

  Assume no such $\vec{x} \in \Rpos^T$ exists. Let
  $\vec{b} \defeq \mtgt - \msrc$. By
  Lemma~\ref{prop:farkas:supp}, there exists $\vec{y} \in \R^P$
  such that: $ \mat{F}^\transpose \vec{y} \geq_{S'} \vec{0} \land
  \vec{b}^\transpose\vec{y} \leq 0 \land \vec{b}^\transpose\vec{y} <
  \sum_{s \in S} (\mat{F}^\transpose \vec{y})_s$. We show that
  $f(\vec{k}) \defeq \vec{y}^\transpose \vec{k}$ is a linear exclusion
  function for $(S, S')$.

  \begin{enumerate}
  \item We have
    $
    f(\mtgt) - f(\msrc)
    =
    \vec{y}^\transpose\mtgt - \vec{y}^\transpose\msrc
    =
    \vec{y}^\transpose(\mtgt - \msrc)
    =
    \vec{y}^\transpose \vec{b}
    =
    \vec{b}^\transpose \vec{y}
    \leq
    0,
    $
    and hence $f(\mtgt) \leq f(\msrc)$.\medskip
    
  \item Let $\vec{m} \trans{\lambda s} \vec{m}'$ with $s \in S'$ and
    $\lambda \in \Rpos$. We have $\vec{m}' = \vec{m} + \lambda \mat{F}
    \vec{e}_s$. Thus:    
    $
    f(\vec{m}')
    = \vec{y}^\transpose \vec{m}'
    %
    %
    %
    = \vec{y}^\transpose \vec{m} + \lambda (\vec{y}^\transpose
    \mat{F}) \vec{e}_s
    = \vec{y}^\transpose \vec{m} + \lambda (\mat{F}^\transpose
    \vec{y})^\transpose \vec{e}_s
    \geq \vec{y}^\transpose \vec{m}
    = f(\vec{m})$,
    where the inequality follows from $\lambda > 0$,
    $\mat{F}^\transpose \vec{y} \geq_{S'} \vec{0}$ and
    $s \in S'$.\medskip

  \item Recall that $\vec{b}^\transpose \vec{y} \leq 0$ and $\sum_{s
    \in S} (\mat{F}^\transpose \vec{y})_s > \vec{b}^\transpose
    \vec{y}$. If the latter sum equals zero, then $\vec{b}^\transpose
    \vec{y} < 0$, and hence we are done since $f(\mtgt) - f(\msrc) =
    \vec{b}^\transpose\vec{y} < 0$.

    Otherwise, we have $\sum_{s \in S} (\mat{F}^\transpose \vec{y})_s
    > 0$ since $S \subseteq S'$ and $\mat{F}^\transpose \vec{y}
    \geq_{S'} \vec{0}$. Therefore, there exists a transition $s \in S$
    such that $(\mat{F}^\transpose \vec{y})_s > 0$. Let $\vec{m}
    \trans{s} \vec{m}'$. We have $\vec{m}' = \vec{m} + \lambda \mat{F}
    \vec{e}_s$ for some $\lambda > 0$. Thus,
    $f(\vec{m}')
    = \vec{y}^\transpose \vec{m} + \lambda
      (\mat{F}^\transpose \vec{y})^\transpose \vec{e}_s
    > \vec{y}^\transpose \vec{m}
    = f(\vec{m})$, where the inequality holds by $\lambda > 0$ and
    $(\mat{F}^\transpose \vec{y})_s > 0$.\qedhere
  \end{enumerate}  
\end{proof}

Putting together Theorem~\ref{thm:reach:charac} with Proposition~\ref{prop:siphon-trap}, and Proposition~\ref{prop:mark:eq:nosol},
we obtain the following characterization of unreachability:

\begin{prop}\label{prop:unreach:charac}
  Let $\pn = (P, T, F)$ be a Petri net, let $U \subseteq T$, and 
  $\msrc, \mtgt \in \Rpos^P$. It is the case that
  $\msrc \not\etrans{U} \mtgt$ if{}f
  \begin{enumerate} 
  \item there exists a linear exclusion function for $U$, or
  \label{itm:unreach:charac:eq}
  
  \item there exists a siphon $Q$ of $\pn_U$ such that
    $\post{Q} \neq \emptyset$ and $\msrc(Q) = 0$, or
    \label{itm:unreach:charac:fwd}

  \item there exists a trap $R$ of $\pn_U$ such that $\pre{R} \neq
    \emptyset$ and $\mtgt(R) = 0$.\label{itm:reach:uncharac:bwd}
  \end{enumerate}
\end{prop}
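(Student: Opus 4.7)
The plan is to establish the proposition by assembling the three ingredients already proved in this section, following the logical decomposition spelled out in equations~(\ref{logic})--(\ref{logicb}). Essentially, Theorem~\ref{thm:reach:charac} gives a three-part conjunctive characterization of $U$-reachability, so its negation is a three-part disjunction, and I will match each disjunct with one of the three conditions in the proposition.

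For the ``if'' direction, I would argue each condition separately. If condition~\ref{itm:unreach:charac:eq} holds, then unreachability follows immediately from the lemma stating that the existence of a (linear) exclusion function for $U$ implies $\msrc \not\etrans{U} \mtgt$. If condition~\ref{itm:unreach:charac:fwd} holds, Proposition~\ref{prop:siphon-trap} tells us that no firing sequence with support $U$ is enabled at $\msrc$, and in particular none connects $\msrc$ to $\mtgt$. The argument for condition~\ref{itm:reach:uncharac:bwd} is symmetric, invoking the trap half of Proposition~\ref{prop:siphon-trap} applied at $\mtgt$.

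For the ``only if'' direction, assume $\msrc \not\etrans{U} \mtgt$. By Theorem~\ref{thm:reach:charac}, at least one of its three conditions fails. If its condition~(2) fails then Proposition~\ref{prop:siphon-trap} yields a siphon witnessing~\ref{itm:unreach:charac:fwd}; symmetrically, if its condition~(3) fails, Proposition~\ref{prop:siphon-trap} yields a trap witnessing~\ref{itm:reach:uncharac:bwd}. If its condition~(1) fails, then no $\vec{x} \in \Rpos^T$ with $\supp{\vec{x}} = U$ satisfies $\msrc + \mat{F}\vec{x} = \mtgt$; I would instantiate Proposition~\ref{prop:mark:eq:nosol} with $S = S' = U$, which produces a linear exclusion function for $U$, giving condition~\ref{itm:unreach:charac:eq}.

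No step should present any real obstacle, since the proposition is just the assembled negation of Theorem~\ref{thm:reach:charac} using results already in hand. The only small care point is to make sure that ``$\supp{\vec{x}} = U$'' in the theorem is the right hypothesis to feed Proposition~\ref{prop:mark:eq:nosol} as ``$U \subseteq \supp{\vec{x}} \subseteq U$''; once $S$ and $S'$ are both set to $U$, the translation is transparent. Apart from this bookkeeping, the proof is simply the propositional manipulation already previewed in the paragraph surrounding equation~(\ref{logicb}).
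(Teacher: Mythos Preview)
Your proposal is correct and follows exactly the approach the paper takes: it simply assembles Theorem~\ref{thm:reach:charac}, Proposition~\ref{prop:siphon-trap}, Proposition~\ref{prop:mark:eq:nosol}, and the preceding lemma on exclusion functions via the propositional manipulation already outlined in equations~(\ref{logic})--(\ref{logicb}). The paper in fact does not spell out a proof beyond the sentence ``Putting together Theorem~\ref{thm:reach:charac} with Proposition~\ref{prop:siphon-trap}, and Proposition~\ref{prop:mark:eq:nosol},'' so your write-up is a faithful expansion of that remark.
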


This proposition shows that we can produce a
witness of unreachability for a given support either as an exclusion function, a siphon, or a
trap. The following example shows how to use this to show that a marking
cannot be reached from another one.

\begin{exa}\label{ex:witnesses2}
  Consider the Petri net of Figure~\ref{fig:pn}, but with $\mtgt
  \defeq \{p_1 \mapsto 0, p_2 \mapsto 0, p_3 \mapsto 1, p_4 \mapsto
  0\}$ as target. We prove $\msrc \not\Utrans{*} \mtgt$.  For the sake
  of contradiction, assume $\msrc \etrans{U} \mtgt$ for some $U
  \subseteq T$. We proceed in several steps:
  \begin{itemize}
  \item \emph{Claim: $t_4 \notin U$}. The function $f(\vec{m}) \defeq
    \vec{m}(p_4)$ is an exclusion function for $T$. Indeed, since no
    transition decreases the number of tokens of $p_4$,
    $\vec{m}\trans{t} \vec{m}'$ implies $f(\vec{m}) \leq f(\vec{m}')$
    for every transition $t \in T$. Furthermore, $f(\msrc) = 0 =
    f(\mtgt)$, and, since $t_4$ adds tokens to $p_4$, $\vec{m}
    \trans{t_4} \vec{m}'$ implies $f(\vec{m}) < f(\vec{m}')$. It
    follows that no firing sequence from $\msrc$ to $\mtgt$ can fire
    $t_4$. Therefore, $\mtgt$ is reachable from $\msrc$ in $\pn$ if{}f it is reachable from $\msrc$ in 
    $\pn_{T \setminus\{t_4\}}$.
        
    \smallskip

  \item \emph{Claim: $t_2 \notin U$}. The set $Q \defeq \{p_4\}$ is a
    siphon of $\pn_{T \setminus \{t_4\}}$ (but not of $\pn$). Since
    $\msrc(Q) = 0$, it is impossible to use transitions of $\pn_{T
      \setminus\{t_4\}}$ that consume from $Q$, i.e.\ transitions of
    $\post{Q} = \{t_2\}$.  So $\mtgt$ is reachable from $\msrc$ in $\pn$ if{}f it is reachable from $\msrc$ in 
    $\pn_{T \setminus\{t_2, t_4\}}$
    
    \smallskip

  \item \emph{Claim: $t_1, t_3 \notin U$}. The set $R \defeq \{p_1,
    p_2\}$ is a trap of $\pn_{T \setminus \{t_2, t_4\}}$ (but not of
    $\pn_{T \setminus \{t_4\}}$). Since $\mtgt(R) = 0$, it is
    impossible to reach $\mtgt$ using transitions of $\pn_{T \setminus
      \{t_2, t_4\}}$ that produce in $R$, i.e.\ transitions of
    $\pre{R} = \{t_1, t_3\}$.
  \end{itemize}
  By the claims, $\mtgt$ is reachable from $\msrc$ in $\pn$  if{}f it is reachable from $\msrc$ in $\pn_{\emptyset}$. But this can only happen
  if $\msrc = \mtgt$, which is not the case.
\end{exa}

In the next section, we transform this collection of witnesses of unreachability into one single separator that can be used as
certificate.

\section{Separators as certificates}\label{sec:certificates}
Let $\pn = (P, T, F)$ be a Petri net and let $\msrc, \mtgt \in
\Rpos^P$ be markings of $\pn$. From~\cite{BFHH17}, one can easily
show that if $\msrc \not\Utrans{*} \mtgt$, then there is a separator
for $(\msrc,\mtgt)$. Indeed, \cite[Prop.~3.2]{BFHH17} shows that there
exists an existential formula $\psi$ of linear arithmetic such that
$\vec{m} \Utrans{*} \vec{m}'$ if{}f $(\vec{m}, \vec{m}') \in
\sol{\psi}$. Thus, the formula $\varphi(\vec{m}) \defeq \psi(\msrc,
\vec{m})$ is a separator.

However, $\varphi$ is not adequate as a \emph{certificate} of
unreachability. Indeed, checking a certificate for
$\msrc \not\Utrans{*} \mtgt$ should have smaller complexity than
deciding whether $\msrc \Utrans{*} \mtgt$. This is not the case for
existential linear formulas, because $\msrc \Utrans{*} \mtgt$ can be
decided in polynomial time, but checking that an existential linear
formula is a separator is coNP-hard.

\begin{prop}\label{prop:sep:coNP}
  The problem of determining whether an existential linear formula
  $\varphi$ is a separator for $(\msrc, \mtgt)$ is coNP-hard, even if
  $\varphi$ is a quantifier-free formula in DNF and homogeneous (i.e.,
  each atomic proposition $\vec{a} \vec{x} \sim b$ is such that $b =
  0$.)
\end{prop}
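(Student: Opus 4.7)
I plan to prove coNP-hardness by reducing from the complement of 3-\textsc{Sat}. Given a 3-CNF formula $\phi = \bigwedge_{j=1}^{m}(\ell_{j,1} \vee \ell_{j,2} \vee \ell_{j,3})$ over Boolean variables $x_1, \ldots, x_n$, I will construct in polynomial time a Petri net $\pn$, markings $\msrc, \mtgt$, and a homogeneous DNF formula $\varphi$ such that $\varphi$ is a separator for $(\msrc, \mtgt)$ if{}f $\phi$ is unsatisfiable.

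First, using a normalizer place $z$, I encode a positive literal $x_i$ as the homogeneous atom $-x_i + z \leq 0$ and a negative literal $\neg x_i$ as $x_i \leq 0$; Boolean satisfying assignments of $\phi$ then correspond to the points $(\vec{x}, z) \in \Rpos^{n+1}$ with $z > 0$, each $x_i \in \{0\} \cup [z, \infty)$, and every clause satisfied. Building on this, I define a homogeneous DNF $D(\vec{x}, z)$ that holds exactly when $(\vec{x}, z)$ \emph{fails} to encode a Boolean satisfying assignment, with three groups of disjuncts: (i)~the single atom $z \leq 0$; (ii)~for each variable, the conjunction $(x_i > 0) \wedge (x_i - z < 0)$, expressing that $x_i$ is not Boolean-valid; (iii)~for each clause, the conjunction of the negations of its three encoded literals, expressing that the clause is violated. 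A direct case analysis yields that $D$ is a tautology over $\Rpos^{n+1}$ if{}f $\phi$ is unsatisfiable, since $\neg D$ exactly expresses the existence of a Boolean satisfying assignment.

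Next, I add two fresh places: a counter $c$ and a target indicator $w$. The net $\pn$ has a single transition $t$ with $\prevec{t} = \vec{e}_c$ and $\postvec{t} = \vec{0}$, so firing $t$ only decreases $c$ and leaves the other places unchanged. I set $\msrc \defeq \vec{e}_c$, $\mtgt \defeq \vec{e}_w$, and
\[ \varphi \defeq \bigl(D(\vec{x}, z)\ \vee\ (c > 0)\bigr)\ \wedge\ (w \leq 0), \]
which distributes to a polynomial-size homogeneous DNF. By construction, $\msrc \in \sol{\varphi}$ via the disjunct $(c > 0) \wedge (w \leq 0)$, and $\mtgt \notin \sol{\varphi}$ since $\mtgt(w) > 0$ violates $w \leq 0$ in every disjunct. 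For forward invariance, the only nontrivial case is firing $t$ at the maximal rate $\alpha = \vec{m}(c)$ from a marking $\vec{m}$ with $\vec{m}(w) = 0$ and $\vec{m}(c) > 0$: the resulting marking $\vec{m}'$ has $\vec{m}'(c) = 0$ and the same $(\vec{x}, z)$-coordinates as $\vec{m}$, so $\vec{m}' \in \sol{\varphi}$ reduces to $D(\vec{x}, z)$. Since the $(\vec{x}, z)$-coordinates of such $\vec{m}$ range over all of $\Rpos^{n+1}$, forward invariance is equivalent to $D$ being a tautology, and hence to $\phi$ being unsatisfiable.

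The main obstacle is decoupling the condition $\mtgt \notin \sol{\varphi}$ from the tautology check embedded in forward invariance: a naive $\varphi \defeq D \vee (c > 0)$ forces $\mtgt \in \sol{\varphi}$ whenever $D$ is a tautology, collapsing the reduction. The fresh place $w$ and the conjoined guard $(w \leq 0)$ are introduced precisely to enforce $\mtgt \notin \sol{\varphi}$ uniformly in $\phi$, while preserving both the homogeneous DNF structure and the fact that firing $t$ leaves $w$ unchanged, so the forward-invariance analysis carries through.
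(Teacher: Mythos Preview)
Your proposal is correct and follows the same high-level idea as the paper: encode a validity question into the forward-invariance condition by means of a single ``trigger'' transition, while using two auxiliary places to ensure $\msrc\in\sol{\varphi}$ and $\mtgt\notin\sol{\varphi}$ hold unconditionally. The details differ. The paper reduces directly from \textsc{DNF-Tautology}: given a DNF $\psi$, it translates literals as $x_i>0$ and $x_i\leq 0$, adds places $p,q$ and a transition that \emph{produces} into $p$; invariance then forces the disjunct containing $\psi'$ to hold for arbitrary values of the variable places, which is exactly $\psi$ being a tautology. You instead reduce from co-3\textsc{Sat}: you first manufacture a homogeneous DNF $D$ (with a normalizer $z$) whose validity is equivalent to unsatisfiability of $\phi$, then use a transition that \emph{consumes} from a counter $c$; invariance at full depletion forces $D$ to hold everywhere. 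Your route avoids the paper's preliminary case analysis on whether $\psi$ is satisfiable (needed there so that $\msrc$ can be placed in $\sol{\varphi}$), at the cost of the explicit Boolean-encoding gadget with $z$ and the three groups of disjuncts. Both constructions yield polynomial-size homogeneous DNF separator candidates.
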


\begin{proof}
  We give a reduction from the problem of determining whether a DNF
  boolean formula $\psi(x_1, \ldots, x_m)$ is a tautology. Since
  $\psi$ is in DNF, we can test whether it is satisfiable in
  polynomial time. Obviously, if it is not, then it is not a
  tautology. Thus, we consider the case where $\psi$ is
  satisfiable. We define $\psi'$ as $\psi$ but where literals are
  modified as follows: $x_i$ becomes $\vec{m}(x_i) > 0$, and $\neg
  x_i$ becomes $\vec{m}(x_i) \leq 0$. We construct a Petri net $\pn =
  (P, T, F)$, where $P = \{x_1, \ldots, x_m, p, q\}$, $T = \{t\}$, and
  $t$ produces a token in $p$. Note that all places but $p$ are
  disconnected. Let $\mtgt \defeq \vec{0}$ and $\msrc$ be such that
  $\msrc(p) = \msrc(q) = 1$ and $(\msrc(x_1), \ldots, \msrc(x_n))$
  encodes an (arbitrary) assignment that satisfies $\psi$. We claim
  that the following formula $\varphi$ is a separator for $(\msrc,
  \mtgt)$ iff $\psi$ is a tautology:
  \[
  \varphi(\vec{m}) \defeq (\vec{m}(p) \leq 0 \land \vec{m}(q) > 0)
  \lor (\vec{m}(p) > 0 \land \vec{m}(q) > 0 \land \psi'(\vec{m})).
  \]
  
  As $\msrc \in \sol{\varphi}$ and $\mtgt \notin \sol{\varphi}$, it
  suffices to show that $\varphi$ is forward invariant if{}f $\psi$ is
  a tautology

  $\Rightarrow$) Let $(y_1, \ldots, y_m) \in \{0, 1\}^n$. We show that
  $\psi(y_1, \ldots, y_m)$ holds. Let $\vec{m}$ be such that
  $\vec{m}(p) = 0$, $\vec{m}(q) = 1$ and $\vec{m}(x_i) = y_i$ for
  every $i \in [1..m]$. We have $\vec{m} \in \sol{\varphi}$ since the
  first disjunct is satisfied. We can fire $t$ in $\vec{m}$ which
  leads to a marking $\vec{m}'$ where $\vec{m}'(p) > 0$, $\vec{m}'(q)
  = 1$ and $\vec{m}'(x_i) = y_i$ for every $i \in [1..m]$. Since
  $\varphi$ is forward invariant by assumption, we have $\vec{m}' \in
  \sol{\varphi}$. Since $\vec{m}'(p) > 0$ and $\vec{m}'(q) > 0$, this
  means that the second disjunct is satisfied. In particular, this
  means that $\psi'(\vec{m}')$ holds, and hence that $\psi(y_1,
  \ldots, y_m)$ holds.

  $\Leftarrow$) Let $\vec{m} \in \sol{\varphi}$ and $\vec{m}
  \trans{\alpha t} \vec{m}'$. We have $\vec{m}(q) > 0$. By definition
  of $t$, we have $\vec{m}'(p) > 0$, and $\vec{m}'$ equal to $\vec{m}$
  on all other places. Since $\psi$ is a tautology, $\psi'(\vec{m})$
  holds. Thus, $\vec{m}' \in \sol{\varphi}$ holds as the second
  disjunct is satisfied.
\end{proof}

\begin{rem}
 The problem of determining whether a quantifier-free linear formula
 $\varphi$ is a separator for $(\msrc, \mtgt)$ is in coNP. Indeed, we
 can specify that a formula $\varphi$ separates $(\msrc, \mtgt)$ as
 follows:
 \[
   \varphi(\msrc) \land \neg \varphi(\mtgt) \land
   \forall \vec{m} \geq \vec{0}\ \forall \alpha > 0
   \bigwedge_{t \in T} [\varphi(\vec{m}) \land \vec{m}
   \geq \alpha \prevec{t}) \rightarrow \varphi(\vec{m} +
   \alpha \effect{t})].
 \]
 Membership follows since the universal fragment of linear arithmetic
 is in coNP~\cite{Son85}.
\end{rem}

In the rest of the section, we introduce locally closed bi-separators,
and then, in Sections~\ref{sec:construct} and~\ref{sec:verif}, we
respectively prove that they satisfy the following:

\begin{itemize}
\item If $\msrc \not\Utrans{*} \mtgt$, then some locally closed
  bi-separator for $(\msrc, \mtgt)$ can be computed in polynomial time
  by solving at most $\bigO(|T|)$ linear programs;

\item Deciding whether a formula is a locally closed bi-separator is in NC.
\end{itemize}

\subsection{Locally closed bi-separators}

The most difficult part of checking that a formula $\varphi$ is a
bi-separator consists of checking that it is forward and backward
invariant. Let us focus on forward invariance, backward invariance
being symmetric.

Recall the definition: for all markings $\vec{m}, \vec{m'}, \vec{m''}$
and every transition $t$: if $(\vec{m}, \vec{m}') \in \sol{\varphi}$
and $\vec{m}' \trans{\alpha t} \vec{m}''$ then $(\vec{m}, \vec{m}'')
\in \sol{\varphi}$. Assume now that $\varphi$ is in DNF, i.e., a
disjunction of clauses $\varphi = \varphi_1 \lor \cdots \lor
\varphi_n$. The forward invariance check can be decomposed into $n$
smaller checks, one for each $i \in [1..n]$, of the form: if
$(\vec{m}, \vec{m}') \in \sol{\varphi_i}$, then $(\vec{m}, \vec{m}'')
\in \sol{\varphi}$. However, in general the check \emph{cannot} be
decomposed into \emph{local} checks of the form: there exists $j \in
[1..m]$ such that $(\vec{m}, \vec{m}') \in \sol{\varphi_i}$ implies
$(\vec{m}, \vec{m}'') \in \sol{\varphi_j}$. Indeed, while this
property is sufficient for forward invariance, it is not
necessary. Intuitively, locally closed bi-separators are separators
where invariance can be established by local checks.

For the formal definition, we need to introduce some notations. 
Given a transition $t$ and atomic propositions $\psi, \psi'$, we say
that $\psi$ \emph{$t$-implies} $\psi'$, written $\psi \leadsto_t
\psi'$, if $(\vec{m}, \vec{m}') \in \sol{\psi}$ and $\vec{m}'
\trans{\alpha t} \vec{m}''$ implies $(\vec{m}, \vec{m}'') \in
\sol{\psi'}$. We further say that a clause $\psi = \psi_1 \land \cdots
\land \psi_m$ \emph{$t$-implies} a clause $\psi' = \psi_1' \land
\cdots \land \psi_n'$, written $\psi \leadsto_t \psi'$, if for every
$j \in [1..n]$, there exists $i \in [1..m]$ such that $\psi_i
\leadsto_t \psi_j'$.
        
\begin{defi}
  A linear formula $\varphi$ is \emph{locally closed} w.r.t.\ $\pn =
  (P, T, F)$ if:
  \begin{itemize}
  \item $\varphi = \varphi_1 \lor \cdots \lor \varphi_n$ is
    quantifier-free, in DNF and homogeneous,
    
  \item for every $t \in T$ and every $i \in [1..n]$, there exists $j
    \in [1..n]$ s.t.\ $\varphi_i \leadsto_t \varphi_j$,
    
  \item for every $t \in \rev{T}$ and every $i \in [1..n]$, there
    exists $j \in [1..n]$ s.t.\ $\rev{\varphi_i} \leadsto_t
    \rev{\varphi_j}$.
  \end{itemize}
\end{defi}


Note that the definition is semantic. We make the straightforward but
crucial observation that:

\begin{prop}\label{prop:bi-invariance}
  Locally closed formulas are bi-invariant.
\end{prop}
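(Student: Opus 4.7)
The plan is to unpack the definition of local closure and let the atomic-level implications $\leadsto_t$ propagate first to clauses and then to the whole formula. Bi-invariance is the conjunction of forward and backward invariance, so I would prove each separately. Backward invariance reduces to the forward case via transposition: since $\sol{\rev{\varphi}} = \rev{\sol{\varphi}}$ and $\vec{m} \trans{} \vec{m}'$ in $\pn$ iff $\vec{m}' \trans{} \vec{m}$ in $\rev{\pn}$, backward invariance of $\varphi$ w.r.t.\ $\pn$ coincides with forward invariance of $\rev{\varphi}$ w.r.t.\ $\rev{\pn}$. The third clause of the local-closure definition then supplies exactly the implications needed to replay the forward-invariance argument on $\rev{\varphi}$ in $\rev{\pn}$.

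For forward invariance, suppose $(\vec{m}, \vec{m}') \in \sol{\varphi}$ and $\vec{m}' \trans{\alpha t} \vec{m}''$. Since $\varphi = \varphi_1 \lor \cdots \lor \varphi_n$ is in DNF, fix $i$ with $(\vec{m}, \vec{m}') \in \sol{\varphi_i}$, and use the second clause of the definition to obtain $j$ such that $\varphi_i \leadsto_t \varphi_j$. Writing $\varphi_i = \psi_1 \land \cdots \land \psi_a$ and $\varphi_j = \psi'_1 \land \cdots \land \psi'_b$, the clause-level $\leadsto_t$ gives, for every $q \in [1..b]$, some $p \in [1..a]$ with $\psi_p \leadsto_t \psi'_q$. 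Since $(\vec{m}, \vec{m}') \in \sol{\psi_p}$ by assumption and $\vec{m}' \trans{\alpha t} \vec{m}''$, the atomic-level relation $\psi_p \leadsto_t \psi'_q$ yields $(\vec{m}, \vec{m}'') \in \sol{\psi'_q}$. Conjoining over all $q$ produces $(\vec{m}, \vec{m}'') \in \sol{\varphi_j} \subseteq \sol{\varphi}$, as required.

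There is no real technical obstacle: the proposition is essentially an immediate consequence of how local closure was designed, and the work of choosing, for every $i$ and $t$, a suitable $j$ with $\varphi_i \leadsto_t \varphi_j$ has been built into the definition. The only step requiring any care is the reduction of backward invariance to forward invariance, which is pure book-keeping using the syntactic duality between $\pn$ and $\rev{\pn}$ already recorded in the preliminaries.
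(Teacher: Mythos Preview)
Your proof is correct and mirrors the paper's own argument: pick a satisfied clause $\varphi_i$, use local closure to get $\varphi_j$ with $\varphi_i \leadsto_t \varphi_j$, and propagate the atomic-level $t$-implications to conclude $(\vec{m},\vec{m}'') \in \sol{\varphi_j}$. The only difference is that the paper dismisses backward invariance with ``the other case is symmetric'' while you spell out the reduction via $\rev{\varphi}$ and $\rev{\pn}$; this is harmless elaboration, not a different approach.
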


\begin{proof}
  Let $\varphi = \varphi_1 \lor \cdots \lor \varphi_n$ be a locally
  closed formula. We only consider the forward case; the other case is
  symmetric. Let $(\vec{m}, \vec{m}') \in \sol{\varphi}$ and $\vec{m}'
  \trans{\alpha t} \vec{m}''$. Let $i \in [1..n]$ be such that
  $(\vec{m}, \vec{m}') \in \sol{\varphi_i}$. Since $\varphi$ is
  locally closed, there exists $j \in [1..n]$ such that $\varphi_i
  \leadsto_t \varphi_j$. For every atomic proposition $\psi'$ of
  $\varphi_j$, there exists an atomic proposition $\psi$ of
  $\varphi_i$ such that $\psi \leadsto_t \psi'$. Since each atomic
  proposition of $\varphi_i$ is satisfied by $(\vec{m}, \vec{m}')$, we
  obtain $(\vec{m}, \vec{m}'') \in \sol{\varphi_j}$.
\end{proof}

Proposition~\ref{prop:bi-invariance} justifies the following
definition:

\begin{defi}
  A \emph{locally closed bi-separator} for $(\msrc, \mtgt)$ is a
  locally closed formula $\varphi$ s.t.\ $(\msrc, \msrc) \in
  \sol{\varphi}$, $(\mtgt, \mtgt) \in \sol{\varphi}$ and $(\msrc,
  \mtgt) \notin \sol{\varphi}$.
\end{defi}

Indeed, by Proposition~\ref{prop:bi-invariance}, a locally closed
bi-separator is a bi-separator, as the bi-invariance condition of
Definition~\ref{def:bisep} follows from local closedness.

\section{Constructing locally closed bi-separators}\label{sec:construct}
In this section, we prove that unreachability can always be witnessed
by locally closed bi-separators of polynomial size and computable in
polynomial time. The proof uses the results of
Section~\ref{sec:informal}.

\begin{thm}\label{thm:unreach:certificate}
  If $\msrc \not\Utrans{U^*} \mtgt$, then there is a locally closed
  bi-separator $\varphi$ for $(\msrc,\mtgt)$ w.r.t.\ $\pn_U$. Further,
  $\varphi = \bigvee_{1 \leq i \leq n} \varphi_i$, where $n \leq 2|U|
  + 1$ and each $\varphi_i$ contains at most $2|U| + 1$ atomic
  propositions. Moreover, $\varphi$ is computable in polynomial time.
\end{thm}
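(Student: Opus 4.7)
The plan is to induct on $|U|$, applying Proposition~\ref{prop:unreach:charac} to $\msrc \not\etrans{U} \mtgt$ at each step to obtain a witness --- an exclusion function, a marked-empty siphon, or a marked-empty trap --- and turning that witness into new clauses added on top of a bi-separator obtained inductively for a strictly smaller set $U' \subsetneq U$.

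For the base case $U = \emptyset$, the hypothesis forces $\msrc \neq \mtgt$, so a single homogeneous atom $u_p \leq v_p$ (or $v_p \leq u_p$) at any coordinate $p$ where they differ already works and is vacuously locally closed. In the inductive step, the cleanest case is an exclusion function $\vec{c}$ for $U$ with $\vec{c}^\transpose \msrc > \vec{c}^\transpose \mtgt$: the single clause $\vec{c}^\transpose \vec{u} \leq \vec{c}^\transpose \vec{v}$ is immediately a locally closed bi-separator by monotonicity of $\vec{c}$ under $U$-transitions. If instead $\vec{c}^\transpose \msrc = \vec{c}^\transpose \mtgt$ with a strictly increasing $t_0 \in U$, then $t_0$ cannot appear in any successful sequence, so I recurse on $U' \defeq U \setminus \{t_0\}$ to obtain $\psi = \bigvee_\ell \psi_\ell$ and return
\[
  \varphi \defeq (\vec{c}^\transpose \vec{u} < \vec{c}^\transpose \vec{v}) \vee \bigvee_\ell (\psi_\ell \land \vec{c}^\transpose \vec{u} \leq \vec{c}^\transpose \vec{v}),
\]
where the strict-inequality clause is a ``sink'' into which every $t_0$-firing maps (because $t_0$ strictly raises $\vec{c}^\transpose \vec{v}$), while the added conjunct keeps each augmented recursive clause closed under the remaining $U'$-transitions. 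For a siphon witness $Q$ I recurse on $U' \defeq U \setminus \post{Q}$ and augment every recursive clause with $\vec{v}(Q) \leq 0$: this conjunct is preserved under $U'$-transitions (the siphon cannot be refilled without firing a transition from $\post{Q}$) and it renders $\post{Q} \cap U$-firings vacuous (they cannot fire when $\vec{v}(Q) = 0$). The trap case is symmetric, using $\vec{u}(R) \leq 0$ on the first coordinate.

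Since each recursion strictly shrinks $U$, the depth is at most $|U|$; each step contributes at most a bounded number of new clauses and grows each clause by a bounded number of atoms, giving $n \leq 2|U|+1$ and at most $2|U|+1$ atomic propositions per clause. Each witness is computable in polynomial time --- siphons and traps via a greatest-fixed-point iteration and the exclusion function via the single linear program derived from Lemma~\ref{prop:farkas:supp} --- so the whole construction runs in polynomial time.

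The main obstacle I anticipate is the siphon (resp.\ trap) case when the ``opposite'' endpoint fails the emptiness condition, i.e., $\mtgt(Q) > 0$ (resp.\ $\msrc(R) > 0$). Then the blanket augmentation by $\vec{v}(Q) \leq 0$ (resp.\ $\vec{u}(R) \leq 0$) removes $(\mtgt, \mtgt)$ (resp.\ $(\msrc, \msrc)$) from the bi-separator. One must pair a new source-side clause $\vec{v}(Q) \leq 0$ with an auxiliary target-side clause recovering the lost diagonal point, and engineer it to stay locally closed under \emph{all} of $\pn_U$ --- typically by piggy-backing on conjuncts accumulated in earlier iterations so that problematic firings get rerouted into an earlier sink. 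Making this work without exceeding the $2|U|+1$ bounds on both the number of clauses and the atoms per clause is the heart of the proof.
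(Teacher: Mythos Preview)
Your proposal takes a different route from the paper and contains a genuine gap in the siphon and trap cases that your sketched fix does not close. The paper does \emph{not} case-split on the three witness types of Proposition~\ref{prop:unreach:charac}. Instead, it first checks whether the marking equation with support contained in $U$ has any solution; if so, it takes the \emph{maximal} such support $U'$, extracts exclusion functions $f_t$ for \emph{all} $t \in U \setminus U'$ simultaneously via Proposition~\ref{prop:mark:eq:nosol} with $S=\{t\}$ and $S'=U$, and then, crucially, computes \emph{both} the maximal siphon $Q$ of $\pn_{U'}$ empty at $\msrc$ \emph{and} the maximal trap $R$ of $\pn_{U'}$ empty at $\mtgt$. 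The recursion is on $U'' \defeq U' \setminus (\post{Q} \cup \pre{R})$, and the returned formula is
\[
  \bigvee_{t \in U \setminus U'}\!\! \bigl(f_t(\vec{m}) < f_t(\vec{m}')\bigr)
  \ \lor\
  \bigl[\varphi_{\text{inv}} \land \vec{m}(Q) + \vec{m}'(R) > 0\bigr]
  \ \lor\
  \bigvee_i \bigl[\varphi_{\text{inv}} \land \vec{m}(R) + \vec{m}'(Q) \leq 0 \land \psi_i\bigr],
\]
where $\varphi_{\text{inv}} \defeq \bigwedge_{t \in U\setminus U'} f_t(\vec{m}) \leq f_t(\vec{m}')$.

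The combined atoms $\vec{m}(Q) + \vec{m}'(R) > 0$ and $\vec{m}(R) + \vec{m}'(Q) \leq 0$ are exactly what resolves the obstacle you identify: the middle clause catches whichever diagonal point fails the $\leq 0$ test (if $\msrc(R)>0$ then $\msrc(Q)+\msrc(R)>0$; if $\mtgt(Q)>0$ then $\mtgt(Q)+\mtgt(R)>0$). Treating siphon and trap \emph{separately}, as you do, breaks this symmetry and creates a second gap you did not flag: your siphon formula $\bigvee_\ell(\psi_\ell \land \vec{v}(Q)\leq 0)$ is not \emph{backward} locally closed for $t\in\post{Q}$. Your ``vacuous firing'' argument only works forward, where $\vec{v}(Q)=0$ disables $t$; backward it is $\vec{u}$ that fires $t$, $\vec{u}(Q)$ is unconstrained, and the recursive $\psi_\ell$ are only closed under $U\setminus\post{Q}$. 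Your fix of ``piggy-backing on earlier conjuncts'' cannot help at the very first recursion level, where there are none. In the paper's construction, pairing $Q$ with $R$ makes the whole formula symmetric under swapping the two marking arguments together with the siphon/trap roles, so backward local closure follows from forward by duality; this symmetry is the missing idea in your outline.
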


\begin{proof}
  We proceed by induction on $|U|$. First consider $U =
  \emptyset$. There must exist $p \in P$ such that $\msrc(p) \neq
  \mtgt(p)$. Take $\varphi(\vec{m}, \vec{m}') \defeq \vec{e}_p \vec{m}
  \leq \vec{e}_p \vec{m}'$ or $-\vec{e}_p \vec{m} \leq -\vec{e}_p
  \vec{m}'$, depending on whether $\msrc(p) > \mtgt(p)$ or $\msrc(p) <
  \mtgt(p)$.

  Now, assume that $U \neq \emptyset$. We make a case distinction on whether the system
  $$\exists
  \vec{x} \in \Rpos^T : \msrc + \mat{F} \vec{x} = \mtgt \land
  \supp{\vec{x}} \subseteq U \ .$$
\noindent has a solution. Suppose first that it has no solution. By
  Proposition~\ref{prop:mark:eq:nosol}, taking $S =\emptyset$ and $S'
  = U$, there is a linear exclusion function for $(\emptyset, U)$,
  i.e.\ a linear function $f$ satisfying (1) $f(\msrc) > f(\mtgt)$, and (2)
  $\vec{m} \trans{u} \vec{m}'$ implies $f(\vec{m}) \leq
    f(\vec{m}')$ for all $u \in U$.
  (The first item holds due to Item~2 of
  Definition~\ref{def:mark:eq:nosol} and $S = \emptyset$.) So we can
  take $\varphi(\vec{m}, \vec{m}') \defeq (f(\vec{m}) \leq
  f(\vec{m}'))$.

  Suppose now that the system has a solution $\vec{x} \in \Rpos^U$. By
  convexity, we can suppose that $\supp{\vec{x}} \subseteq U$ is
  maximal. Indeed, if $\vec{x}'$ and $\vec{x}''$ are solutions, then
  $(1/2)\vec{x}' + (1/2)\vec{x}''$ is a solution with support
  $\supp{\vec{x}'} \cup \supp{\vec{x}''}$. Let $U' \defeq
  \supp{\vec{x}}$. For every $t \in U \setminus U'$, consider the
  system of Proposition~\ref{prop:mark:eq:nosol} with $S = \{t\}$ and
  $S' = U$. By maximality of $U' \subseteq U$, none of these systems
  has a solution. Consequently, for each $t \in U \setminus U'$,
  Proposition~\ref{prop:mark:eq:nosol} yields a linear exclusion
  function for $(\{t\}, U)$, i.e.\ a linear function $f_t$ that
  satisfies:
  \begin{enumerate}
    \setcounter{enumi}{2}
    
  \item $f_t(\msrc) \geq f_t(\mtgt)$, 
    
  \item $\vec{m} \trans{u} \vec{m}'$ implies $f_t(\vec{m}) \leq
    f_t(\vec{m}')$ for all $u \in U$,

  \item either $f_t(\msrc) > f_t(\mtgt)$, or $\vec{m} \trans{t}
    \vec{m}'$ implies $f_t(\vec{m}) < f_t(\vec{m}')$.
  \end{enumerate}

  If $f_t(\msrc) > f_t(\mtgt)$ holds for some $t \in U \setminus U'$,
  then we are done by taking $\varphi(\vec{m}, \vec{m}') \defeq
  (f_t(\vec{m}) \leq f_t(\vec{m}'))$ as Item~4 ensures that $\varphi
  \leadsto_u \varphi$ for every $u \in U$. So assume it does not hold for any $t \in U \setminus U'$, i.e. assume that \ $f_t(\msrc) = f_t(\mtgt)$ holds, and the second
  disjunct of Item~5 holds for all $t \in U \setminus U'$. This is the most involved case.
  Let
  \[
  \varphi_\text{inv}(\vec{m}, \vec{m}') \defeq
  \bigwedge_{\mathclap{t \in U \setminus U'}} (f_t(\vec{m}) \leq
  f_t(\vec{m}'))\
  \text{ and }\
  \varphi_t(\vec{m}, \vec{m'}) \defeq  
    (f_t(\vec{m}) < f_t(\vec{m}')).
  \]
  Let $Q, R \subseteq P$ be respectively the maximal siphon and trap
  of $\pn_{U'}$ such that $\msrc(Q) = 0$ and $\mtgt(R) = 0$
  (well-defined by closure under union). Let $U'' \defeq U' \setminus
  (\post{Q} \cup \pre{R})$. By Theorem~\ref{thm:reach:charac} and
  Proposition~\ref{prop:siphon-trap}, $\post{Q} \cup \pre{R} \neq
  \emptyset$. Thus, $U''$ is a strict subset of $U'$, and, by
  induction hypothesis, there is a locally closed bi-separator
  w.r.t.\ $\pn_{U''}$ of the form $\psi = \bigvee_{1 \leq i \leq m}
  \psi_i$ that satisfies the claim for set $U''$. Let
  \begin{multline*}
    \varphi(\vec{m}, \vec{m}') \defeq
    \bigvee_{\mathclap{t \in U \setminus U'}}
    \varphi_t(\vec{m}, \vec{m}') \lor
    [\varphi_\text{inv}(\vec{m}, \vec{m}') \land \vec{m}(Q) +
      \vec{m}'(R) > 0] \lor {} \\[-12pt]
    \bigvee_{\mathclap{1 \leq i \leq m}} [\varphi_\text{inv}(\vec{m},
      \vec{m}') \land \vec{m}(R) + \vec{m}'(Q) \leq 0 \land
      \psi_i(\vec{m}, \vec{m}')].
  \end{multline*}
  As $(\msrc, \msrc) \in \sol{\varphi_\text{inv}}$ and $(\msrc, \msrc)
  \in \sol{\psi}$, we have $(\msrc, \msrc) \in
  \sol{\varphi}$. Similarly, $(\mtgt, \mtgt) \in \sol{\varphi}$. By
  Item~3, $(\msrc, \mtgt) \notin \sol{\bigvee_{t \in U \setminus U'}
    \varphi_t(\vec{m}, \vec{m}')}$. Further, $\msrc(Q) + \mtgt(R) = 0$
  and $(\msrc, \mtgt) \notin \sol{\psi}$. So, $(\msrc, \mtgt) \notin
  \sol{\varphi}$.

  \smallskip The number of disjuncts of $\varphi$ is $|U \setminus U'|
  + 1 + m$ and hence at most
  \begin{align*}
    |U \setminus U'| + 1 + 2|U''| + 1
    &\leq
    (|U| - |U''|) + 1 + 2|U''| + 1
    && \text{(since $U \supseteq U' \supseteq U''$)} \\
    &=
    |U| + |U''| + 2 \\
    &\leq
    |U| + (|U| - 1) + 2
    && \text{(since $U'' \subset U$)} \\
    &=
    2|U| + 1.
  \end{align*}
  The same bounds holds for the number of atomic propositions per
  disjunct.
  
  \smallskip  It remains to show that $\varphi(\vec{m}, \vec{m}')$ is locally closed
  w.r.t.\ $\pn_U$. We only consider the forward case, as the backward
  case is symmetric. Let $(\vec{m}, \vec{m}') \in \sol{\varphi}$ and
  $\vec{m}' \trans{u} \vec{m}''$ for some $u \in U$. By Item~4,
  $\varphi_t \leadsto_u \varphi_t$ holds for each $\varphi_t$. Indeed,
  $f_t(\vec{m}) < f_t(\vec{m}')$ and $\vec{m}' \trans{u} \vec{m}''$
  imply $f_t(\vec{m}) < f_t(\vec{m}') \leq f_t(\vec{m}'')$, and hence
  $f_t(\vec{m}) < f_t(\vec{m}'')$. To handle the other clauses, we
  make a case distinction on $u$.

  \begin{itemize}
  \item \emph{Case $u \in U \setminus U'$.} Atomic proposition $\theta
    = (f_u(\vec{m}) \leq f_u(\vec{m}'))$ of $\varphi_\text{inv}$
    satisfies $\theta \leadsto_u \varphi_u$. Indeed, if $f_u(\vec{m})
    \leq f_u(\vec{m}')$ and $\vec{m}' \trans{u} \vec{m}''$, then we
    have $f_u(\vec{m}) < f_u(\vec{m}')$ by Item~5.
    \smallskip

  \item \emph{Case $u \in U'$.} By Item~4, each atomic proposition
    $\theta$ of $\varphi_\text{inv}$ satisfies $\theta \leadsto_u
    \theta$.
    \smallskip
    \begin{itemize}
    \item \emph{Case $u \in \pre{R}$.} We have $\theta' \leadsto_u
      (\vec{m}(Q) + \vec{m}'(R) > 0)$ for any atomic proposition
      $\theta'$, since $\vec{m}' \trans{u} \vec{m}''$ implies
      $\vec{m}''(R) > 0$ (regardless of $\theta'$).
    \smallskip

    \item \emph{Case $u \in \post{Q}$.} If $\vec{m}'(Q) \leq 0$, then
      $u$ is disabled in $\vec{m}'$. Thus, it only remains to handle
      $\theta_{> 0} \defeq (\vec{m}(Q) + \vec{m}'(R) > 0)$. Since $R$
      is a trap of $\pn_{U'}$, firing $u$ from $\vec{m}'$ does not
      empty $R$, and hence $\theta_{> 0} \leadsto_u \theta_{> 0}$.
    \smallskip

  \item \emph{Case $u \in U''$.} Let $\theta_{\leq 0} \defeq
    (\vec{m}(R) + \vec{m}'(Q) \leq 0)$ and $\theta_{> 0} \defeq
    (\vec{m}(Q) + \vec{m}'(R) > 0)$. Since $Q$ and $R$ are
    respectively a siphon and trap of $\pn_{U'}$, we have
    $\theta_{\leq 0} \leadsto_u \theta_{\leq 0}$ and $\theta_{> 0}
    \leadsto_u \theta_{> 0}$. Moreover, by induction hypothesis, for
    every $i \in [1..m]$, there exists $j \in [1..m]$ such that
    $\psi_i \leadsto_u \psi_j$.
    \end{itemize}
  \end{itemize}
  
  We conclude the proof by observing that it is constructive and can
  be turned into Algorithm~\ref{alg:certificate}. The procedure works
  in polynomial time. Indeed, there are at most $|U|$ recursive
  calls. Moreover, each set can be obtained in polynomial time via
  either linear programming or maximal siphons/traps
  computations~\cite{desel1995free}.
\end{proof}

\begin{algorithm}[h!]
  \DontPrintSemicolon
  \SetKwFunction{algcert}{bi-separator}
  \SetKwProg{myproc}{}{}{}
  
  \KwIn{$\pn = (P, T, F)$, $U \subseteq T$ and 
        $\msrc, \mtgt \in \Qpos^P$ s.t.\ $\msrc \not\Utrans{U^*} \mtgt$}
  \KwOut{A locally closed bi-separator w.r.t.\ $\pn_U$}

  \algskip
  \myproc{\algcert{$U$}}{
    \If{$U = \emptyset$}{
      \textbf{pick} $p \in P \text{ such that } \msrc(p) \neq \mtgt(p)$\;
            
      \Return{$(\vec{a} \vec{m} \leq \vec{a} \vec{m}')$ where $\vec{a}
        \defeq \mathrm{sign}(\msrc(p) - \mtgt(p)) \cdot \vec{e}_p$}
    }
    \Else{
      $\mathmakebox[12pt][l]{\vec{b}} \defeq \mtgt - \msrc$\;
      
      $\mathmakebox[12pt][l]{X} \defeq \{\vec{x} \in \Rpos^T :
      \mat{F}\vec{x} = \vec{b}, \supp{\vec{x}} \subseteq U\}$\;
      
      $\mathmakebox[12pt][l]{Y_S} \defeq \{\vec{y} \in \R^P :
      \mat{F}^\transpose \vec{y} \geq_U \vec{0},
      \vec{b}^\transpose\vec{y} \leq 0, \vec{b}^\transpose\vec{y} <
      \sum_{s \in S} (\mat{F}^\transpose \vec{y})_s\}$\;
      \algskip
      
      \If{$X = \emptyset$}{
        \textbf{pick} $\vec{y} \in Y_\emptyset$ and       
        \Return{$(\vec{y}^\transpose \vec{m} \leq \vec{y}^\transpose
          \vec{m}')$}      
      }
      \Else{
        $U' \defeq \{u \in U : \vec{x}(u) > 0 \text{ for some }
        \vec{x} \in X\}$\;

        \algskip
        \For{$t \in U \setminus U'$}{
          \textbf{pick} $\vec{y}_t \in Y_{\{t\}}$;
          $f_t(\vec{m}) \defeq \vec{y}_t^\transpose \vec{m}$\;
          
          \lIf{$f_t(\msrc) > f_t(\mtgt)$}{
            \Return{$(f_t(\vec{m}) < f_t(\vec{m}'))$}
          }
        }
        
        \algskip
        $Q \defeq$ largest \makebox[30pt][l]{siphon} of $\pn_{U'}$
        such that $\msrc(Q) = 0$\;
        
        $R \defeq$ largest \makebox[30pt][l]{trap} of $\pn_{U'}$
        such that $\mtgt(R) = 0$\;
        
        $\varphi_\text{inv} \defeq
        \bigwedge_{t \in U \setminus U'} (f_t(\vec{m}) \leq
        f_t(\vec{m}'))$\;
        
        \algskip
        $\psi_1 \lor \cdots \lor \psi_m \defeq \algcert{$U'
          \setminus (\post{Q} \cup \pre{R})$}$\;
        
        \algskip
        \Return{} $\bigvee_{t \in U \setminus U'} \varphi_t(\vec{m},
          \vec{m}') \lor [\varphi_\text{inv}(\vec{m}, \vec{m}') \land
            \vec{m}(Q) + \vec{m}'(R) > 0] \lor {}$\;
          \hspace*{1.25cm}%
          $\bigvee_{1 \leq i \leq m} [\varphi_\text{inv}(\vec{m},
          \vec{m}') \land \vec{m}(R) + \vec{m}'(Q) \leq 0 \land
          \psi_i(\vec{m}, \vec{m}')]$\;
      }
    }
  }
  \algskip
  \caption{Construction of a locally closed bi-sep.\ for $(\msrc, \mtgt)$.}
  \label{alg:certificate}
\end{algorithm}

\begin{exa}
  Let us apply the construction of
  Theorem~\ref{thm:unreach:certificate} to the Petri net and the
  markings of Example~\ref{ex:witnesses2}: $\msrc = \{p_1 \mapsto 2,
  p_2 \mapsto 0, p_3 \mapsto 0, p_4 \mapsto 0\}$ and $\mtgt \defeq
  \{p_1 \mapsto 0, p_2 \mapsto 0, p_3 \mapsto 1, p_4 \mapsto 0\}$. The
  locally closed bi-separator is the formula $\varphi$ below, where
  the colored arrows represent the relations $\leadsto_{t_1}, \ldots,
  \leadsto_{t_4}$:
 \begin{center}
   \begin{tikzpicture}[auto, node distance=1cm, transform shape, scale=0.8]
     \node[anchor=west] at (0, 0) (d1) {%
       $[\vec{m}(p_4) < \vec{m}'(p_4)] \lor {}$%
     };
     
     \node[anchor=west] at (0, -1) (d2) {%
       $[\vec{m}(p_4) \leq \vec{m}'(p_4) \land
         \vec{m}(p_4) + \vec{m}'(p_4) > 0] \lor {}$%
     };
     
     \node[anchor=west] at (0, -2) (d3) {%
       $[\vec{m}(p_4) \leq \vec{m}'(p_4) \land
         \vec{m}'(p_1) + \vec{m}'(p_2) > 0] \lor {}$%
     };
       
     \node[anchor=west] at (0, -3) (d4) {%
         $[\vec{m}(p_4) \leq \vec{m}'(p_4) \land \vec{m}(p_1) +
           \vec{m}(p_2) \leq 0 \land -\vec{m}(p_3) \leq -\vec{m}'(p_3)]$%
     };

     \path[->, blue, font=\scriptsize, decoration={zigzag,
     amplitude=0.5pt, segment length=0.9mm, pre length=1pt, post length=4pt}]
     (d1) edge[decorate, out=183, in=175,  loop] node {$t_1, t_2, t_3, t_4$} ()

     (d2) edge[decorate, out=180, in=191] node[xshift=18pt, yshift=-3pt] {$t_4$} (d1)
     (d3) edge[decorate, out=180, in=190] node[xshift=14pt, yshift=-3pt] {$t_4$} (d1)
     (d4) edge[decorate, out=180, in=186] node[xshift=15pt, yshift=-12pt] {$t_4$} (d1)

     (d2) edge[decorate, out=-1, in=3, loop] node[swap] {$t_1, t_2, t_3$} ()

     (d3) edge[decorate, out=-3, in=1, loop] node[swap] {$t_1, t_3$} () 
     (d3) edge[decorate, out=3,  in=-5]      node[swap, yshift=-8pt] {$t_2$} (d2) 

     (d4) edge[decorate, out=-1, in=2, loop] node[swap] {$t_1, t_3$} ()
     (d4) edge[decorate, out=4,  in=-3]      node[swap, yshift=-2pt] {$t_2$} (d2)
    ;
   \end{tikzpicture}
 \end{center}

  As an example, consider transition $t_4$ and these atomic
  propositions occurring within $\varphi$:
  \begin{align*}
    \theta(\vec{m}, \vec{m}')
    &\defeq \vec{m}(p_4) \leq \vec{m}'(p_4), \\
    \theta'(\vec{m}, \vec{m}')
    &\defeq \vec{m}(p_4) < \vec{m}'(p_4).
  \end{align*}
  Given $(\vec{m}, \vec{m}') \in \sol{\theta}$ and $\vec{m}'
  \trans{\alpha t_4} \vec{m}''$, we must have $\vec{m}'(p_4) <
  \vec{m}''(p_4)$ since $t_4$ produces in place $p_4$, and hence
  $(\vec{m}, \vec{m}'') \in \sol{\theta'}$. Thus, by definition of
  $t_4$-implication, we have $\theta \leadsto_{t_4} \theta'$. This
  explains why the first clause of $\varphi$ is $t_4$-implied by each
  clause of $\varphi$. Similar reasoning yields the other
  $t_i$-implications, which shows that $\varphi$ is bi-invariant.

  Recall that since $\varphi$ is a locally closed bi-separator, taking
  $\psi(\vec{m}) \defeq \varphi(\msrc, \vec{m})$ yields a forward
  separator. Since $\msrc(p_2) = \msrc(p_3) = \msrc(p_4) = 0$, several
  atomic propositions trivially hold. After making these
  simplifications, we obtain
  \begin{alignat*}{3}
    \psi(\vec{m})
    &\equiv\ &&
    \vec{m}(p_1) + \vec{m}(p_2) > 0 \lor \vec{m}(p_4) > 0.
  \end{alignat*}
  Similarly, we obtain this backward separator $\psi'(\vec{m}) \defeq
  \varphi(\vec{m}, \mtgt)$:
  \begin{alignat*}{3}
    \psi'(\vec{m})
    & \equiv\ &&
    \vec{m}(p_1) + \vec{m}(p_2) = 0 \land \vec{m}(p_3) \geq 1 \land
    \vec{m}(p_4) = 0.
  \end{alignat*}
  The backward separator $\psi'$ provides a much simpler proof of
  $\msrc \not\trans{*} \mtgt$ than the one of
  Example~\ref{ex:witnesses2}. The proof goes as follows: $\psi'$ is
  trivially backward invariant, because markings that only mark $p_3$
  do not backward-enable any transition. In particular, since $\mtgt$
  only marks $p_3$, it can only be reached from $\mtgt$.
\end{exa}

\section{Checking locally closed bi-separators is in NC}\label{sec:verif}
We show that the problem of deciding whether a given linear formula
is a locally closed bi-separator is in NC. To do so, we
provide a characterization of $\psi \leadsto_t \psi'$ for homogeneous
atomic propositions $\psi$ and $\psi'$. We only focus on forward
firability, as backward firability can be expressed as forward
firability in the transpose Petri net. Recall that
$\psi \leadsto_t \psi'$ holds iff the following holds:
\begin{align}
  (\vec{m}, \vec{m}') \in \sol{\psi} \text{ and }
  \vec{m}' \trans{\alpha t} \vec{m}'' \text{ imply }
  (\vec{m}, \vec{m}'') \in \sol{\psi'}.\tag{$*$}\label{eq:leadsto}
\end{align}
Property~\eqref{eq:leadsto} can be rephrased as:
\[
(\vec{m}, \vec{m}') \in \sol{\psi} \text{ and }
\vec{m}' \geq \alpha \cdot \prevec{t} \text{ imply }
(\vec{m}, \vec{m}' + \alpha \cdot \effect{t}) \in \sol{\psi'}.
\]
As we will see towards the end of the section, due to homogeneity of $\psi$ and $\psi'$, it
actually suffices to consider the case where $\alpha = 1$, which
yields this reformulation:
\[
\underbrace{\{(\vec{m}, \vec{m}') \in \sol{\psi} : \vec{m}' \geq
  \prevec{t}\}}_{X}
\subseteq
\underbrace{\{(\vec{m}, \vec{m}') : (\vec{m}, \vec{m}' + \effect{t})
  \in \sol{\psi'}\}}_{Y}.
\]
Therefore, testing $\psi \leadsto_t \psi'$ amounts to the inclusion
check $X
\subseteq Y$. Of course, if $X = \emptyset$, then this is
trivial. Hence, we will suppose that $X \neq \emptyset$, assuming for
now that it can somehow be tested efficiently. In the forthcoming
Propositions~\ref{prop:charac:incl} and~\ref{prop:incl:atomic}, we
will provide necessary and sufficient conditions for $X \subseteq Y$
to hold. In Proposition~\ref{prop:linprog:unary}, we will show that
these conditions are testable in NC. Then, in
Proposition~\ref{prop:charac:inv}, we will explain how to check
whether $X \neq \emptyset$ actually holds.

For $X \subseteq Y$, we can characterize the case of atomic
propositions $\psi$ that use ``$\leq$'' (rather than ``$<$''):

\begin{prop}\label{prop:charac:incl}
  Let $\vec{a}, \vec{a}', \vec{l} \in \R^{n}$ and $b' \in \R$. Let
  $X \defeq \{\vec{x} \in \R^n : \vec{a} \vec{x} \leq
  0 \land \vec{x} \geq \vec{l}\}$ and $Y \defeq \{\vec{x} \in \R^n
  : \vec{a}' \vec{x} \leq b'\}$ be such that $X \neq \emptyset$. It is
  the case that $X \subseteq Y$ iff there exists $\lambda \geq 0$ such
  that $\lambda \vec{a} \geq \vec{a}'$ and $-b' \leq (\lambda \vec{a}
  - \vec{a}') \vec{l}$.
\end{prop}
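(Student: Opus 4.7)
The proof will proceed via LP duality, treating $X \subseteq Y$ as a statement about the maximum of $\vec{a}'\vec{x}$ over the polyhedron $X$. Specifically, $X \subseteq Y$ holds iff $\max\{\vec{a}'\vec{x} : \vec{x} \in X\} \leq b'$, and I will extract the required multiplier $\lambda$ from an optimal dual solution.

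The easy direction ($\Leftarrow$) is a direct algebraic verification. Suppose $\lambda \geq 0$ satisfies $\lambda\vec{a} \geq \vec{a}'$ and $-b' \leq (\lambda\vec{a} - \vec{a}')\vec{l}$. For any $\vec{x} \in X$, the nonnegativity $\lambda\vec{a} - \vec{a}' \geq \vec{0}$ combined with $\vec{x} - \vec{l} \geq \vec{0}$ yields $(\lambda\vec{a} - \vec{a}')(\vec{x} - \vec{l}) \geq 0$, hence $(\lambda\vec{a} - \vec{a}')\vec{x} \geq (\lambda\vec{a} - \vec{a}')\vec{l} \geq -b'$. Since $\lambda\vec{a}\vec{x} \leq 0$ from $\lambda \geq 0$ and $\vec{a}\vec{x} \leq 0$, rearranging gives $\vec{a}'\vec{x} \leq \lambda\vec{a}\vec{x} + b' \leq b'$, so $\vec{x} \in Y$.

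The harder direction ($\Rightarrow$) is where LP duality enters. I will substitute $\vec{y} \defeq \vec{x} - \vec{l}$ to rewrite the primal in standard form: $\max \vec{a}'\vec{y}$ subject to $\vec{a}\vec{y} \leq -\vec{a}\vec{l}$ and $\vec{y} \geq \vec{0}$. The assumption $X \neq \emptyset$ ensures the primal is feasible, and $X \subseteq Y$ ensures its optimum is bounded above by $b' - \vec{a}'\vec{l}$. The dual, with a single variable $\lambda \geq 0$ for the single inequality constraint, is $\min (-\lambda\vec{a}\vec{l})$ subject to $\lambda\vec{a} \geq \vec{a}'$. By strong LP duality, the dual admits an optimal $\lambda \geq 0$ with $\lambda\vec{a} \geq \vec{a}'$ and $-\lambda\vec{a}\vec{l} \leq b' - \vec{a}'\vec{l}$; rearranging this last inequality gives $-b' \leq (\lambda\vec{a} - \vec{a}')\vec{l}$, which is exactly the required condition.

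The only real subtlety is justifying strong duality, which needs both feasibility and boundedness of the primal; both follow from the standing assumptions ($X \neq \emptyset$ and $X \subseteq Y$, respectively). Alternatively, one could derive the same conclusion from a Farkas-style alternative theorem applied to the system $\vec{a}\vec{x} \leq 0,\ -\vec{x} \leq -\vec{l},\ -\vec{a}'\vec{x} < -b'$, but the LP-duality route is more transparent and matches the shape of the multiplier $\lambda$ in the conclusion. No further machinery beyond standard LP duality is needed.
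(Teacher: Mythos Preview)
Your proof is correct. The $\Leftarrow$ direction is a clean direct verification, and the $\Rightarrow$ direction via LP strong duality is sound: the substitution $\vec{y} = \vec{x} - \vec{l}$ puts the problem in standard form, feasibility follows from $X \neq \emptyset$, boundedness from $X \subseteq Y$, and the single dual variable is exactly the $\lambda$ in the statement.

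The paper takes a closely related but not identical route: it invokes Haar's lemma (the ``affine Farkas'' characterization stating that $\sol{\mat{A}\vec{x} \leq \vec{b}} \subseteq \sol{\vec{a}'\vec{x} \leq b'}$ iff $(\vec{a}', b')$ lies in the conical hull of $(\vec{0},1)$ and the rows $(\mat{A}_i, \vec{b}_i)$), applied to $\mat{A} = \begin{bmatrix}\vec{a}\\ -\mat{I}\end{bmatrix}$, $\vec{b} = \begin{pmatrix}0\\ -\vec{l}\end{pmatrix}$, and then simplifies the resulting cone membership to the two conditions on $\lambda$. Your LP-duality argument and the paper's Haar-lemma argument are two faces of the same duality phenomenon; the difference is mainly packaging. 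The paper's route is slightly more direct (no change of variables, the multipliers for the $-\mat{I}$ rows are eliminated in one line), while your route has the advantage of appealing only to the most standard form of LP duality and making the role of boundedness of the primal explicit.
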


\begin{proof}
  The \emph{conical hull} of a finite set $V$ is defined as
  \[
  \conic{V} \defeq \left\{\sum_{\vec{v} \in V} \lambda_{\vec{v}} \vec{v}
  : \lambda_{\vec{v}} \in \Rpos\right\}.
  \]
  We first state a generalization of Farkas' lemma sometimes known as
  Haar's lemma (e.g.\ see~\cite[p.~216]{Sch86}):
  \begin{center}
    \begin{tabular}{|p{13.75cm}}
      Let $\mat{A} \in \R^{m \times n}$, $\vec{b} \in \R^m$, $\vec{a}'
      \in \R^n$ and $b' \in \R$ be such that $\sol{\mat{A}\vec{x} \leq
        \vec{b}} \neq \emptyset$. It is the case that
      $\sol{\mat{A}\vec{x} \leq \vec{b}} \subseteq
      \sol{\vec{a}'\vec{x} \leq b}$ iff $(\vec{a}', b') \in
      \conic{\{(\vec{0}, 1)\} \cup \{(\mat{A}_i, \vec{b}_i) : i \in
        [1..m]\}}$.
    \end{tabular}
  \end{center}
  We now prove the proposition. Let
  \[
  \mat{A} \defeq
  \begin{bmatrix}
    \vec{a} \\
    -\mat{I}
  \end{bmatrix}
  \text{ and }
  \vec{b} \defeq
  \begin{pmatrix}
    0 \\
    -\vec{l}
  \end{pmatrix}.
  \]
  Note that $X = \sol{\mat{A}\vec{x} \leq \vec{b}}$. So, by Haar's
  lemma, we have $X \subseteq Y$ iff $(\vec{a}', b') \in
  \conic{\{(\vec{0}, 1)\} \cup \{(\mat{A}_i, \vec{b}_i) : i \in
    [1..m]\}}$. Consequently, we have:
  \begin{align*}
    X \subseteq Y
    &\iff
    \exists \lambda, \lambda_1, \ldots, \lambda_n \geq 0 :
    \vec{a}' = \lambda \vec{a} - \sum_{i=1}^n \lambda_i \vec{e}_i
    \text{ and }
    b' \geq -\sum_{i=1}^n \lambda_i \vec{l}_i \\
    &\iff
    \exists \lambda \geq 0 :
    \lambda \vec{a} \geq \vec{a}'
    \text{ and }
    -b' \leq (\lambda \vec{a} - \vec{a}') \vec{l}.
    &&\qedhere
  \end{align*}
\end{proof}

We now give the conditions for all four combinations of ``$\leq$'' and
``$<$'':

\begin{restatable}{prop}{propinclatomic}\label{prop:incl:atomic}
  Let $\vec{a}, \vec{a}' \in \R^n$, $b' \in \R$, $\vec{l} \geq
  \vec{0}$ and ${\sim}, {\sim'} \in \{\leq, <\}$. Let $X_{\sim} \defeq
  \{\vec{x} \geq \vec{l} : \vec{a} \vec{x} \sim \vec{0}\}$ and
  $Y_{\sim'} \defeq \{\vec{x} \in \R^n : \vec{a}' \vec{x} \sim' b'\}$
  be such that $X_{\sim} \neq \emptyset$. It holds that $X_{\sim}
  \subseteq Y_{\sim'}$ iff there exists $\lambda \geq 0$
  s.t.\ $\lambda \vec{a} \geq \vec{a}'$ and one of the following
  holds:
  
  \begin{enumerate}
  \item ${\sim'} = {\leq}$ and $-b' \leq (\lambda \vec{a} - \vec{a}')
    \vec{l}$;

  \item ${\sim} = {\leq}$, ${\sim'} = {<}$, and $-b' < (\lambda
    \vec{a} - \vec{a}') \vec{l}$;

  \item ${\sim} = {<}$, ${\sim'} = {<}$, and either $-b' < (\lambda
    \vec{a} - \vec{a}') \vec{l}$ or $-b' = (\lambda \vec{a} -
    \vec{a}') \vec{l} \land \lambda > 0$.
  \end{enumerate}
\end{restatable}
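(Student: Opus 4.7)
The plan is to handle each of the four cases $(\sim, \sim') \in \{{\leq}, {<}\}^2$ in turn, reducing each to Proposition~\ref{prop:charac:incl}, which is exactly the case $\sim = \sim' = {\leq}$ (falling under item~1). Two auxiliary observations do all the work for the remaining three cases: a closure identity relating $X_<$ to $X_\leq$, and an attainment/perturbation argument relating $Y_<$ to $Y_\leq$.

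For the case $\sim = {<}$, $\sim' = {\leq}$ (still under item~1), I first claim $\overline{X_<} = X_\leq$ whenever $X_< \neq \emptyset$. For any $\vec{x}_0 \in X_\leq$ and any $\vec{x}_1 \in X_<$, the convex combination $\vec{x}_t \defeq (1-t)\vec{x}_0 + t\vec{x}_1$ satisfies $\vec{x}_t \geq \vec{l}$ and $\vec{a}\vec{x}_t = (1-t)\vec{a}\vec{x}_0 + t\vec{a}\vec{x}_1 < 0$ for every $t \in (0,1]$, so $\vec{x}_t \in X_<$ and $\vec{x}_t \to \vec{x}_0$. Since $Y_\leq$ is topologically closed, this gives $X_< \subseteq Y_\leq \iff X_\leq \subseteq Y_\leq$, and Proposition~\ref{prop:charac:incl} converts the right-hand side into the condition of item~1.

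For item~2, $X_\leq \subseteq Y_<$ is equivalent to $\sup_{\vec{x} \in X_\leq} \vec{a}'\vec{x} < b'$. Since the LP $\max\{\vec{a}'\vec{x} : \vec{x} \in X_\leq\}$ is feasible and bounded above by $b'$, its optimum is attained at some $v^* < b'$; hence $X_\leq \subseteq \{\vec{x} : \vec{a}'\vec{x} \leq b' - \epsilon\}$ for $\epsilon \defeq b' - v^* > 0$. Applying Proposition~\ref{prop:charac:incl} with $b'$ replaced by $b' - \epsilon$ yields $\lambda \geq 0$ with $\lambda\vec{a} \geq \vec{a}'$ and $-b' + \epsilon \leq (\lambda\vec{a} - \vec{a}')\vec{l}$, whence $-b' < (\lambda\vec{a} - \vec{a}')\vec{l}$. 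Conversely, given such a $\lambda$ and any $\vec{x} \in X_\leq$, the chain $\vec{a}'\vec{x} = \lambda(\vec{a}\vec{x}) - (\lambda\vec{a} - \vec{a}')\vec{x} \leq -(\lambda\vec{a} - \vec{a}')\vec{l} < b'$ from the proof of Proposition~\ref{prop:charac:incl} gives $\vec{x} \in Y_<$.

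For item~3, the closure identity yields $\sup_{X_<}\vec{a}'\vec{x} = \sup_{X_\leq}\vec{a}'\vec{x} =: v \leq b'$. If $v < b'$ then $X_\leq \subseteq Y_<$, and the characterization proved for item~2 supplies a $\lambda$ verifying the first disjunct. The delicate case, and the main obstacle, is $v = b'$: item~2's characterization rules out any $\lambda$ with strict inequality, yet Proposition~\ref{prop:charac:incl} applied with $b'$ replaced by $v$ still provides a feasible $\lambda^*$ with $-v \leq (\lambda^* \vec{a} - \vec{a}')\vec{l}$, and together these force $-b' = (\lambda^* \vec{a} - \vec{a}')\vec{l}$. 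It remains to show that $\lambda^*$ can be chosen strictly positive. The dual objective $f(\lambda) = \vec{a}'\vec{l} - \lambda(\vec{a}\vec{l})$ is linear on the feasible interval $\{\lambda \geq 0 : \lambda\vec{a} \geq \vec{a}'\}$, and a case split on the sign of $\vec{a}\vec{l}$ shows that if every optimum forces $\lambda^* = 0$, then feasibility must pin $\lambda$ to $0$ through some coordinate $i$ with $\vec{a}_i < 0$ and $\vec{a}'_i = 0$; in that case $\vec{l} + t\vec{e}_i$ for sufficiently large $t > 0$ lies in $X_< \setminus Y_<$, contradicting the hypothesis $X_< \subseteq Y_<$. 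Hence a strictly positive dual optimum exists and provides the second disjunct.
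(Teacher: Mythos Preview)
Your overall strategy is sound and genuinely different from the paper's. For items~2 and~3 the paper applies Farkas' lemma directly to the infeasible system (after a homogenization trick with an extra variable in item~3), whereas you reduce everything back to Proposition~\ref{prop:charac:incl} via LP attainment and the closure identity $\overline{X_<}=X_\leq$. Your route is more uniform and arguably more illuminating, since it makes explicit that all four cases are perturbations of the basic Haar/Farkas characterization. The paper's route is shorter for item~3 because the Farkas certificate already packages the distinction between the two disjuncts into a single inequality system.

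Two concrete gaps remain. First, you never argue the converse direction for item~3. It is not hard: if the first disjunct holds, your item~2 converse gives $X_\leq\subseteq Y_<$ and a fortiori $X_<\subseteq Y_<$; if the second disjunct holds, the chain $\vec{a}'\vec{x}=\lambda\vec{a}\vec{x}-(\lambda\vec{a}-\vec{a}')\vec{x}<0-(\lambda\vec{a}-\vec{a}')\vec{l}=b'$ works because $\lambda>0$ and $\vec{a}\vec{x}<0$ make the first inequality strict. But this needs to be stated.

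Second, your final case analysis in item~3 (forward) is mis-stated. You claim that if the only dual optimum is at $\lambda^*=0$ then ``feasibility must pin $\lambda$ to $0$'', but this fails when $\vec{a}\vec{l}<0$: then $f$ is strictly increasing, the feasible interval may well contain positive $\lambda$, and $0$ is the unique minimizer without any binding constraint. The argument is rescued because in that subcase $\vec{l}\in X_<$ (from $\vec{a}\vec{l}<0$) and $\vec{a}'\vec{l}=f(0)=b'$, so $\vec{l}\in X_<\setminus Y_<$ is the contradicting point, with no need for the perturbation $\vec{l}+t\vec{e}_i$. Only when $\vec{a}\vec{l}\geq 0$ does uniqueness of the optimum at $0$ force the feasible set itself to be $\{0\}$, whence your coordinate $i$ with $\vec{a}_i<0$ and $\vec{a}'_i=0$ exists and your ray argument goes through. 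Splitting these two subcases explicitly fixes the proof.
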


\begin{proof}
  \leavevmode
  \begin{enumerate}
  \item If ${\sim} = {\leq}$, then it follows immediately from
    Proposition~\ref{prop:charac:incl}. Thus, assume ${\sim} =
    {<}$. We claim that $X_{<} \subseteq Y_{\leq}$ iff $X_{\leq}
    \subseteq Y_{\leq}$. The validity of this claim concludes the
    proof of this case as we have handled ${\sim} = {\leq}$ and as
    $X_{\leq} \supseteq X_{<} \neq \emptyset$.

    \medskip
    Let us show the claim. It is clear that $X_{<} \subseteq Y_{\leq}$
    is implied by $X_{\leq} \subseteq Y_{\leq}$. So, we only have to
    show direction from left to right. For the sake of contradiction,
    suppose that $X_{<} \subseteq Y_{\leq}$ and $X_{\leq}
    \not\subseteq Y_{\leq}$. Let $X_{=} \defeq X_{\leq} \setminus
    X_{<}$. Note that $X_{=} \neq \emptyset$. Let $\vec{x} \in X_{<}$
    and $\vec{x}' \in X_{=} \setminus Y_{\leq}$. We have $\vec{x},
    \vec{x}' \geq \vec{l}$, $\vec{a} \vec{x} < 0$, $\vec{a} \vec{x}' =
    0$, $\vec{a}' \vec{x} = c \leq b'$ and $\vec{a}' \vec{x}' = c' >
    b'$ for some $c, c' \in \R$. In particular, $b' \in [c, c')$. Let
      $\epsilon \in (0, 1]$ be such that $b' < \epsilon c + (1 -
    \epsilon) c'$. Let $\vec{x}'' \defeq \epsilon \vec{x} + (1 -
    \epsilon) \vec{x}'$. Observe that $\vec{x}'' \geq
    \vec{l}$. Moreover, we have:
    \begin{alignat*}{3}
      \vec{a} \vec{x}''
      &= \epsilon \vec{a} \vec{x} + (1 - \epsilon) \vec{a} \vec{x}'
      &&= \epsilon \vec{a} \vec{x}
      &&< 0, \\
      \vec{a}' \vec{x}''
      &= \epsilon \vec{a}' \vec{x} + (1 - \epsilon) \vec{a}' \vec{x}'
      &&= \epsilon c + (1 - \epsilon) c'
      &&> b'.      
    \end{alignat*}
    Therefore, we have $\vec{x}'' \in X_{<}$ and $\vec{x}'' \notin
    Y_{\leq}$, which is a contradiction.

    \medskip
  \item $\Rightarrow$) Since $X_{\leq} \subseteq Y_<$, the system
    $\exists \vec{x} : \vec{x} \geq \vec{l} \land \vec{a} \vec{x} \leq
    0 \land \vec{a}' \vec{x} \geq b'$ has no solution. In matrix
    notation, the system corresponds to $\exists \vec{x} : \mat{A}
    \vec{x} \leq \vec{c}$ where
    \[
    \mat{A} \defeq
    \begin{bmatrix}
      -\mat{I} \\
      \vec{a} \\
      -\vec{a}'
    \end{bmatrix}
    \text{ and }
    \vec{c} \defeq
    \begin{pmatrix}
      -\vec{l} \\
      0 \\
      -b'
    \end{pmatrix}.
    \]
    By Farkas' lemma (Lemma~\ref{lem:farkas}), $\mat{A}^\transpose
    \vec{y} = \vec{0}$ and $\vec{c}^\transpose \vec{y} < 0$ for some
    $\vec{y} \geq \vec{0}$. In other words,
    \[
    \exists \vec{z} \geq \vec{0}, \lambda, \lambda' \geq 0 :
    \lambda \vec{a} - \lambda' \vec{a}' = \vec{z} \land
    -\lambda' b' < \vec{z} \vec{l}.
    \]
    Since $\vec{z} \geq \vec{0}$, we have $\lambda \vec{a} \geq
    \lambda' \vec{a}' \land -\lambda' b' < (\lambda \vec{a} - \lambda'
    \vec{a}') \vec{l}$. If $\lambda' > 0$, then we are done by
    dividing all terms by $\lambda'$. For the sake of contradiction,
    suppose that $\lambda' = 0$. This means that $\lambda \vec{a} \geq
    \vec{0}$ and $0 < \lambda \vec{a} \vec{l}$. We necessarily have
    $\lambda > 0$ and $\vec{a} \vec{l} > 0$. Let $\vec{x} \in
    X_{\leq}$. We have $0 \geq \vec{a} \vec{x} \geq \vec{a} \vec{l} >
    0$, which is a contradiction.

    \medskip

    $\Leftarrow$) Let $\vec{x} \in X_{\leq}$. We have $\vec{a}'
    \vec{x} < b'$ and hence $\vec{x} \in Y_{<}$ as desired, since:
    \begin{align*}
      -b'
      &< (\lambda \vec{a} - \vec{a}')\vec{l} \\
      &\leq (\lambda \vec{a} - \vec{a}')\vec{x}      
      && \text{(by $(\lambda \vec{a} - \vec{a}') \geq \vec{0}$ and
        $\vec{x} \geq \vec{l} \geq \vec{0}$)} \\
      &= \lambda \vec{a} \vec{x} - \vec{a}' \vec{x} \\
      &\leq -\vec{a}' \vec{x}
      && \text{(by $\lambda \geq 0$ and $\vec{a} \vec{x} \leq 0$)}.
    \end{align*}

  \item $\Rightarrow$) Since $X_{<} \subseteq Y_<$, this system has no
    solution: $\exists \vec{x} : \vec{x} \geq \vec{l} \land \vec{a}
    \vec{x} < 0 \land \vec{a}' \vec{x} \geq b'$. The latter can be
    rephrased as $\exists \vec{x}, y : y \geq 1 \land \vec{x} \geq y
    \vec{l} \land \vec{a} \vec{x} \leq -1 \land \vec{a}' \vec{x} \geq y
    b'$. In matrix notation, this corresponds to $\exists \vec{x} :
    \mat{A} \vec{x} \leq \vec{c}$ where
    \[
    \mat{A} \defeq
    \begin{bmatrix}
      -\mat{I}  & \vec{l}^\transpose\ \\
      \vec{a}   & 0\\
      -\vec{a}' & b' \\
      \vec{0}   & -1
    \end{bmatrix}
    \text{ and }
    \vec{c} \defeq
    \begin{pmatrix}
      \vec{0} \\
      -1 \\
      0  \\
      -1 \\
    \end{pmatrix}.
    \]
    By Lemma~\ref{lem:farkas}, $\mat{A}^\transpose \vec{y} = \vec{0}$
    and $\vec{c}^\transpose \vec{y} < 0$ for some $\vec{y} \geq
    \vec{0}$. In other words,
    \[
    \exists \vec{z} \geq \vec{0}, \lambda, \lambda', \lambda'' \geq 0
    : -\vec{z} + \lambda \vec{a} - \lambda' \vec{a}' = \vec{0} \land
    \vec{z} \vec{l} + \lambda' b' - \lambda'' = 0 \land -\lambda -
    \lambda'' < 0.
    \]
    Since $\vec{z} \geq \vec{0}$, $\lambda \geq 0$ and $\lambda'' \geq
    0$, we have:
    \[
    \lambda \vec{a} \geq \lambda' \vec{a}' \land
    [
      (-\lambda' b' < (\lambda \vec{a} - \lambda' \vec{a}')\vec{l})
      \lor
      (-\lambda' b' = (\lambda \vec{a} - \lambda' \vec{a}')\vec{l}
      \land \lambda > 0)
    ].
    \]
    If $\lambda' > 0$, then we are done by dividing all terms by
    $\lambda'$. For the sake of contradiction, suppose that $\lambda'
    = 0$. This means that $\lambda \vec{a} \geq \vec{0}$, and either
    $0 < \lambda \vec{a} \vec{l}$ or $0 = \lambda \vec{a} \vec{\ell}
    \land \lambda > 0$. We necessarily have $\lambda > 0$ and $\vec{a}
    \geq \vec{0}$. Let $\vec{x} \in X_{<}$. We have $0 > \vec{a}
    \vec{x} \geq \vec{a} \vec{l} \geq 0$, which is a contradiction.

    \medskip
    $\Leftarrow$) Let $\vec{x} \in X_{<}$. If $-b' < (\lambda \vec{a}
    - \vec{a}') \vec{l}$ holds, then we get $\vec{x} \in Y_{<}$ as in
    Item~2. Otherwise, we have $-b' = (\lambda \vec{a} - \vec{a}')
    \vec{l}$ and $\lambda > 0$. Hence, we have:
    \begin{align*}
      -b'
      &= (\lambda \vec{a} - \vec{a}')\vec{l} \\
      &\leq (\lambda \vec{a} - \vec{a}')\vec{x}      
      && \text{(by $(\lambda \vec{a} - \vec{a}') \geq \vec{0}$ and
        $\vec{x} \geq \vec{l} \geq \vec{0}$)} \\
      &= \lambda \vec{a} \vec{x} - \vec{a}' \vec{x} \\
      &< -\vec{a}' \vec{x}
      && \text{(by $\lambda > 0$ and $\vec{a} \vec{x} < 0$)}.
    \end{align*}
    Thus, $\vec{a}' \vec{x} < b'$ and hence $\vec{x} \in Y_{<}$ as
    desired. \qedhere
  \end{enumerate}
\end{proof}

The conditions arising from Proposition~\ref{prop:incl:atomic} involve
solving linear programs with \emph{one} variable $\lambda$. It is easy
to see that this problem is in NC:

\begin{prop}\label{prop:linprog:unary}
  Given $\vec{a}, \vec{b} \in \Q^n$ and $\vec{{\sim}} \in \{{\leq},
  {<}\}^n$, testing $\exists \lambda \geq 0 : \vec{a} \lambda
  \vec{\sim} \vec{b}$ is in NC.
\end{prop}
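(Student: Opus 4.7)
The plan is to exploit the fact that there is only a single variable $\lambda$, so each constraint independently carves out a half-line of feasible values, and the feasibility of the whole system reduces to checking whether the intersection of these half-lines with $[0,\infty)$ is nonempty. Concretely, for each $i \in [1..n]$ I would classify the $i$-th constraint $\vec{a}_i \lambda \sim_i \vec{b}_i$ in parallel according to the sign of $\vec{a}_i$: if $\vec{a}_i = 0$, the constraint reduces to the Boolean $0 \sim_i \vec{b}_i$ (trivially true or false); if $\vec{a}_i > 0$, the constraint is equivalent to $\lambda \sim_i \vec{b}_i / \vec{a}_i$, contributing an (possibly strict) upper bound; if $\vec{a}_i < 0$, the direction flips and it contributes an (possibly strict) lower bound $\lambda \geq \vec{b}_i/\vec{a}_i$ (or its strict version). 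The constraint $\lambda \geq 0$ is absorbed as a non-strict lower bound.

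Then I would take $L$ to be the maximum over all lower bounds (together with $0$) and $U$ to be the minimum over all upper bounds (with $U = +\infty$ if there is none), while also tracking whether each of $L$ and $U$ is achieved strictly or non-strictly: if any lower bound attaining value $L$ is strict, then $L$ is strict; similarly for $U$. Feasibility then reduces to one final comparison: feasible iff $L < U$, or $L = U$ with both bounds non-strict.

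To conclude that this is in NC, I would invoke the standard facts that rational arithmetic (in particular, the divisions $\vec{b}_i/\vec{a}_i$ and pairwise comparisons) is in NC, and that taking the maximum or minimum of $n$ rationals is in NC via a balanced binary reduction of depth $\mathcal{O}(\log n)$ using comparators at each gate. The strictness flags propagate through the same reduction with a constant-size Boolean update per gate, and the final comparison between $L$ and $U$ is a single arithmetic test.

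I do not expect any real obstacle here. The only bookkeeping care is in handling the strictness attributes of the aggregated bounds correctly (so that the endpoint case $L = U$ is decided properly), and in ensuring the edge cases $\vec{a}_i = 0$ and $U = +\infty$ are treated uniformly. All of this can be encoded by extending each bound to a pair (value, strictness bit) and defining parallel max/min operations on such pairs, which preserves NC membership.
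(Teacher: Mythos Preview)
Your proposal is correct and follows essentially the same approach as the paper: reduce each constraint to an interval of feasible $\lambda$'s via a case split on the sign of $\vec{a}_i$, then intersect by taking the maximum of the left endpoints and the minimum of the right endpoints (tracking strictness for the equality case), all of which is in NC. The paper's proof is slightly terser but matches your outline point for point.
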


\begin{proof}
  Let $X_i \defeq \{\lambda \geq 0 : \vec{a}_i
  \lambda~\vec{{\sim}}_i~\vec{b}_i\}$. Let ``${}_i\langle$'' and
  ``$\rangle_i$'' denote ``$[$'' and ``$]$'' if $\vec{{\sim}}_i =
         {\leq}$, and ``$($'' and ``$)$'' otherwise. Each $X_i$ is an
         interval:
  \[
  X_i =
  \begin{cases}
    \emptyset
    & \text{if } \vec{a}_i = 0 \text{ and } 0 \not\vec{\sim}_i \vec{b}_i, \\
    [0, \vec{b}_i / \vec{a}_i \rangle_i
    & \text{if } \vec{a}_i > 0, \\
    {}_i\langle \vec{b}_i / \vec{a}_i, +\infty)
    & \text{if } \vec{a}_i < 0 \text{ and } \vec{b}_i \leq 0, \\
    [0, +\infty)
      & \text{otherwise}.
  \end{cases}
  \]
  We want to test whether $X_1 \cap \cdots \cap X_n \neq \emptyset$.
  Since arithmetic belongs to NC, it suffices to: (1)~compute the left
  endpoint $\ell_i$ and the right endpoint $r_i$ of each $X_i$ in
  parallel; (2)~compute $\ell \defeq \max(\ell_1, \ldots, \ell_n)$ and
  $r \defeq \min(r_1, \ldots, r_n)$; and (3)~accept iff $\ell < r$, or
  $\ell = r$ and each $X_i$ is closed on the left and the right. 
\end{proof}

Recall that at the beginning of the section we made the assumption
that some pair $(\vec{m}, \vec{m}') \in \sol{\psi}$ is such that
$\vec{m}'$ enables a transition $t$. Checking whether this is actually
true has a cost. Fortunately, we provide a simple characterization of
enabledness which can checked in NC. Formally, we say that $\varphi$
\emph{enables} $t$ if there exists $(\vec{m}, \vec{m}') \in
\sol{\varphi}$ such that $\vec{m}'$ $\alpha$-enables $t$ for some
$\alpha > 0$. We have:

\begin{restatable}{prop}{propCharacInv}\label{prop:charac:inv}
  Let $\varphi_{\sim}(\vec{m}, \vec{m}') \defeq \vec{a} \vec{m} \sim
  \vec{b} \vec{m}'$ where $\vec{a}, \vec{b} \in \R^P$. It is the case that
  \begin{enumerate}
  \item $\varphi_{<}$ enables $u$ iff $\vec{a} \not\geq \vec{0}$ or
    $\vec{b} \not\leq \vec{0}$, and

  \item $\varphi_{\leq}$ enables $u$ iff $\vec{b} \prevec{u} \geq 0$
    or $(\vec{b} \prevec{u} < 0 \land (\vec{a}, -\vec{b}) \not\geq
    (\vec{0}, \vec{0}))$.
  \end{enumerate}
\end{restatable}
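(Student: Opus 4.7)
The plan is to establish each of the two equivalences by a standard feasibility/infeasibility dichotomy: for the ``only if'' direction I would rule out any witness via a short sign argument on the nonnegative orthant, and for the ``if'' direction I would exhibit explicit witnesses $(\vec{m}, \vec{m}', \alpha)$ with $\vec{m}, \vec{m}' \in \Rpos^P$, $\alpha > 0$, $\vec{m}' \geq \alpha \prevec{u}$, and $\vec{a}\vec{m} \sim \vec{b}\vec{m}'$. The only recurring trick is the decomposition $\vec{m}' = \alpha \prevec{u} + \vec{r}$ with $\vec{r} \in \Rpos^P$, which captures exactly the enabledness constraint.

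For Part~1, the ``only if'' direction is immediate: if $\vec{a} \geq \vec{0}$ and $\vec{b} \leq \vec{0}$, then for every $\vec{m}, \vec{m}' \in \Rpos^P$ we have $\vec{a}\vec{m} \geq 0 \geq \vec{b}\vec{m}'$, so $\vec{a}\vec{m} < \vec{b}\vec{m}'$ is unsatisfiable irrespective of $u$. For the ``if'' direction, I would fix $\alpha \defeq 1$ and $\vec{m}' \defeq \prevec{u}$, and split on which hypothesis fails: if $\vec{a}(p) < 0$ for some $p$, set $\vec{m} \defeq c \vec{e}_p$ for large enough $c$ to drive $\vec{a}\vec{m}$ below the fixed value $\vec{b}\prevec{u}$; if $\vec{b}(p) > 0$ for some $p$, set $\vec{m} \defeq \vec{0}$ and augment $\vec{m}'$ to $\prevec{u} + c\vec{e}_p$ for large $c$ so that $\vec{b}\vec{m}'$ exceeds $0 = \vec{a}\vec{m}$.

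For Part~2, the ``only if'' direction uses the enabledness decomposition in a nontrivial way. Assuming $\vec{b}\prevec{u} < 0$ together with $\vec{a} \geq \vec{0}$ and $\vec{b} \leq \vec{0}$, I would take any hypothetical witness, write $\vec{m}' = \alpha \prevec{u} + \vec{r}$ with $\vec{r} \geq \vec{0}$, and observe that $\vec{b}\vec{m}' = \alpha \vec{b}\prevec{u} + \vec{b}\vec{r} < 0 \leq \vec{a}\vec{m}$, contradicting $\vec{a}\vec{m} \leq \vec{b}\vec{m}'$. For the ``if'' direction, the case $\vec{b}\prevec{u} \geq 0$ is immediate with $\vec{m} \defeq \vec{0}$, $\vec{m}' \defeq \prevec{u}$, and $\alpha \defeq 1$; the remaining case $\vec{b}\prevec{u} < 0 \land (\vec{a}, -\vec{b}) \not\geq (\vec{0}, \vec{0})$ reuses the constructions of Part~1 with the sole adjustment that $\vec{m}'$ now has the form $\alpha \prevec{u} + c\vec{e}_p$ for small $\alpha > 0$ and large $c$, ensuring that the large perturbation $c\vec{b}(p)$ (respectively $-c\vec{a}(p)$) dominates the negative contribution of $\alpha \vec{b}\prevec{u}$.

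There is no conceptual obstacle; the exercise is to organize the case analysis without slipping. The most delicate point is keeping enabledness intact throughout the constructions — which is why, in Part~2, one must let $\alpha$ be small first and then pick $c$ large — and to verify in passing that the degenerate case $\prevec{u} = \vec{0}$ (which trivially satisfies enabledness and forces $\vec{b}\prevec{u} = 0$) falls into the first branch of each equivalence.
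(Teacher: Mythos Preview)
Your proposal is correct and follows essentially the same route as the paper: the same sign argument on $\Rpos^P$ for the forward directions, and the same explicit witnesses $(k\vec{e}_p,\prevec{u})$ or $(\vec{0},\prevec{u}+k\vec{e}_p)$ for the backward directions. One minor simplification: in Part~2 the paper keeps $\alpha=1$ and just takes $k$ large enough (e.g.\ $k=|\vec{b}\prevec{u}|/|\vec{a}(p)|$ or $k=|\vec{b}\prevec{u}|/\vec{b}(p)$), so your ``small $\alpha$ first, then large $c$'' step is not needed.
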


\begin{proof}
  \leavevmode
  \begin{enumerate}
  \item $\Rightarrow$) Since $\varphi_{<}$ enables $u$, we have
    $\sol{\varphi_{<}} \neq \emptyset$. Let $(\vec{m}, \vec{m}') \in
    \sol{\varphi_{<}}$. We have $\vec{a} \vec{m} < \vec{b}
    \vec{m}'$. It cannot be that $\vec{a} \geq \vec{0}$ and $\vec{b}
    \leq \vec{0}$, as otherwise $\vec{a} \vec{m} \geq 0 \geq \vec{b}
    \vec{m}'$.

    \medskip
    $\Leftarrow$) It suffices to give a pair $(\vec{m}, \vec{m}') \in
    \sol{\varphi_{<}}$ such that $\vec{m}' \geq
    \prevec{u}$. Informally, if $\vec{a}$ has a negative value
    (resp.\ $\vec{b}$ has a positive value), then we can consider the
    pair $(\vec{0}, \prevec{u})$ and ``fix'' the value on the
    left-hand-side (resp.\ right-hand side) so that $\varphi_{<}$ is
    satisfied. More formally, if $\vec{a}(p) < 0$, then $(k \vec{e}_p,
    \prevec{u}) \in \sol{\varphi_{<}}$ with $k \defeq (|\vec{b}
    \prevec{u}| + 1) / |\vec{a}(p)|$; if $\vec{b}(p) > 0$, then
    $(\vec{0}, \prevec{u} + k \vec{e}_p) \in \sol{\varphi_{<}}$ with
    $k \defeq (|\vec{b} \prevec{u}| + 1) / \vec{b}(p)$.

    \medskip
  \item $\Rightarrow$) Let $(\vec{m}, \vec{m}') \in
    \sol{\varphi_{\leq}}$ be such that $\vec{m}'$ enables $u$. We have
    $\vec{m}' = \vec{x} + \alpha \prevec{u}$ for some $\vec{x} \geq
    \vec{0}$ and $\alpha > 0$. Therefore, $\vec{a} \vec{m} \leq
    \vec{b} \vec{x} + \alpha \vec{b} \prevec{u}$. We assume that
    $\vec{b} \prevec{u} < 0$, as we are otherwise trivially done. If
    $\vec{a} \geq \vec{0}$ and $-\vec{b} \geq \vec{0}$, then we obtain
    a contradiction since $\vec{a} \vec{m} \geq 0 > \vec{b} \vec{x} +
    \alpha \vec{b} \prevec{u}$.

    \medskip
    $\Leftarrow$) It suffices to exhibit a pair $(\vec{m}, \vec{m}')
    \in \sol{\varphi_{\leq}}$ such that $\vec{m}' \geq \prevec{u}$. If
    $\vec{b} \prevec{u} \geq 0$, then we are done by taking $(\vec{0},
    \prevec{u})$. Let us consider the second case where $\vec{b}
    \prevec{u} < 0$ and $\vec{a}(p) < 0 \lor \vec{b}(p) > 0$ for some
    $p \in P$. Informally, we consider the pair $(\vec{0},
    \prevec{u})$ and ``fix'' the value on the left-hand-side or
    right-hand side depending on whether $\vec{a}(p) < 0$ or
    $\vec{b}(p) > 0$. More formally, we are done by taking either the
    pair $(k \vec{e}_p, \prevec{u})$ where $k \defeq |\vec{b}
    \prevec{u}| / |\vec{a}(p)|$, or the pair $(\vec{0}, \prevec{u} +
    \ell \vec{e}_p)$ where $\ell \defeq |\vec{b} \prevec{u}| /
    \vec{b}(p)$. \qedhere
  \end{enumerate}
\end{proof}

We can finally show that testing $\psi \leadsto_t \psi'$ can be done
in NC, for atomic propositions $\psi$ and $\psi'$. In turn, this
allows us to show that we can test in NC whether a linear formula is a
locally closed bi-separator.

\begin{prop}\label{prop:test:inv}
  Given a Petri net $\pn$, a transition $t$ and homogeneous atomic
  propositions $\psi$ and $\psi'$, testing whether
  $\psi \leadsto_t \psi'$ can be done in NC.
\end{prop}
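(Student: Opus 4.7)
The plan is to combine the three preceding propositions into a single NC algorithm. Since $\psi$ and $\psi'$ are homogeneous, I will write them as $\psi(\vec{m},\vec{m}') \equiv \vec{a}_L \vec{m} + \vec{a}_R \vec{m}' \sim 0$ and $\psi'(\vec{m},\vec{m}') \equiv \vec{a}_L' \vec{m} + \vec{a}_R' \vec{m}' \sim' 0$ for suitable coefficient vectors and $\sim, \sim' \in \{\leq, <\}$. As noted in the paragraph preceding the proposition, homogeneity of $\psi$ and $\psi'$ lets us restrict attention to firing scale $\alpha = 1$, so $\psi \leadsto_t \psi'$ is equivalent to the inclusion $X \subseteq Y$, where, setting $\vec{x} \defeq (\vec{m}, \vec{m}')$, $\vec{a} \defeq (\vec{a}_L, \vec{a}_R)$, $\vec{a}' \defeq (\vec{a}_L', \vec{a}_R')$, $\vec{l} \defeq (\vec{0}, \prevec{t})$, and $b' \defeq -\vec{a}_R' \effect{t}$, one has $X = \{\vec{x} \geq \vec{l} : \vec{a} \vec{x} \sim 0\}$ and $Y = \{\vec{x} : \vec{a}' \vec{x} \sim' b'\}$. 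The formula for $b'$ comes from substituting $\vec{m}' + \effect{t}$ into the right marking slot of $\psi'$.

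The algorithm then runs in two stages. First, it decides whether $X = \emptyset$: this is precisely the question of whether $\psi$ enables $t$ in the sense of Proposition~\ref{prop:charac:inv}, which reduces to a few sign comparisons on the coefficient vectors of $\psi$ and a single inner product with $\prevec{t}$, each computable in NC. If $X = \emptyset$, accept, since $X \subseteq Y$ holds vacuously. Otherwise, apply Proposition~\ref{prop:incl:atomic} with the parameters just named; this turns $X \subseteq Y$ into the feasibility of a conjunction of constantly many linear inequalities in the single nonnegative scalar variable $\lambda$. By Proposition~\ref{prop:linprog:unary}, such one-variable linear feasibility is in NC.

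To assemble the full test, observe that Proposition~\ref{prop:incl:atomic} splits into three subcases according to $(\sim, \sim')$ and Proposition~\ref{prop:charac:inv} into two subcases according to $\sim$; since there are only constantly many such subcases, we can evaluate all of them in parallel and select the appropriate answer from the syntactic shape of the input, all while staying in NC. The main point needing care is not any one step but the bookkeeping of parameters, namely matching $\vec{l} = (\vec{0}, \prevec{t})$ and $b' = -\vec{a}_R' \effect{t}$, and reconciling the sign conventions of Proposition~\ref{prop:charac:inv} (which is stated as $\vec{a}\vec{m} \sim \vec{b}\vec{m}'$) with the $+$-form used here, so that the generic statements can be invoked directly. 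Once this is done, closure of NC under composition and constant-size Boolean combinations gives the claimed bound.
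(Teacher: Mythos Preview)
Your proposal is correct and follows essentially the same route as the paper: check whether $\psi$ enables $t$ via Proposition~\ref{prop:charac:inv}, and if so reduce $\psi \leadsto_t \psi'$ to the inclusion $X \subseteq Y$ with the parameters you name, then invoke Propositions~\ref{prop:incl:atomic} and~\ref{prop:linprog:unary}. One small point: the reduction to $\alpha = 1$ is only \emph{announced} in the preceding discussion; the paper actually proves it inside the proof of this proposition (by dividing $(\vec{m},\vec{m}',\vec{m}'')$ by $\alpha$ and using homogeneity of both $\psi$ and $\psi'$), so you should spell out that short scaling argument rather than cite it as already established.
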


\begin{proof}
  Recall that addition, subtraction, multiplication, division and
  comparison can be done in NC. Note that, by
  Proposition~\ref{prop:charac:inv}, we can check whether $\psi$
  enables $t$ in NC. If it does, then we must test whether $(\vec{m},
  \vec{m}') \in \sol{\psi}$ and $\vec{m}' \trans{\alpha t} \vec{m}''$
  implies $(\vec{m}, \vec{m}'') \in \sol{\psi'}$. We claim that this
  amounts to testing $X \subseteq Y$, where:
  \begin{align*}
    X
    &\defeq \{(\vec{m}, \vec{m}') \in \Rpos^P \times \Rpos^P :
    (\vec{m}, \vec{m}') \in \sol{\psi} \text{ and }
    (\vec{m}, \vec{m}') \geq (\vec{0}, \prevec{t})\}, \\
    Y &\defeq \{(\vec{m}, \vec{m}') \in \Rpos^P \times \Rpos^P :
    (\vec{m}, \vec{m}' + \effect{t}) \in \sol{\psi'}\}.
  \end{align*}
  Let us prove this claim.
  
  $\Rightarrow$) Let $(\vec{m}, \vec{m}') \in X$. We have $(\vec{m},
  \vec{m}') \in \sol{\psi}$ and $(\vec{m}, \vec{m}') \geq (\vec{0},
  \prevec{t})$. Thus $\vec{m}' \trans{t} \vec{m}' + \effect{t}$. By
  assumption, $(\vec{m}, \vec{m}' + \effect{t}) \in \sol{\psi'}$, and
  hence $(\vec{m}, \vec{m}') \in Y$.

  $\Leftarrow$) Let $(\vec{m}, \vec{m}') \in \sol{\psi}$ and $\vec{m}'
  \trans{\alpha t} \vec{m}''$. We have $\vec{m}' \geq \alpha
  \prevec{t}$ and $\vec{m}'' = \vec{m}' + \alpha \effect{t}$. Let
  $\vec{k} \defeq \vec{m} / \alpha$, $\vec{k}' \defeq \vec{m}' /
  \alpha$ and $\vec{k}'' \defeq \vec{m}'' / \alpha$. As $\alpha > 0$
  and $\psi$ is homogeneous, we have $(\vec{k}, \vec{k}') \in
  \sol{\psi}$, $(\vec{k}, \vec{k}') \geq (\vec{0}, \prevec{t})$ and
  $\vec{k}'' = \vec{k}' + \effect{t}$. Thus, $(\vec{k}, \vec{k}') \in
  X \subseteq Y$. By definition of $Y$, this means that $(\vec{k},
  \vec{k}'') \in \sol{\psi'}$. By homogeneity, we conclude that
  $(\vec{m}, \vec{m}'') \in \sol{\psi'}$.

  Now that we have shown the claim, let us explain how to check
  whether $X \subseteq Y$ in NC. Note that $X \neq \emptyset$ since
  $\psi$ enables $t$. Thus, by Proposition~\ref{prop:incl:atomic},
  testing $X \subseteq Y$ amounts to solving a linear program in one
  variable. For example, if $\psi = (\vec{a} \cdot (\vec{m}, \vec{m}')
  \leq 0)$ and $\psi' = (\vec{a}' \cdot (\vec{m}, \vec{m}') < 0)$,
  then we must check whether this system has a solution:
  \[
  \exists \lambda \geq 0 : \lambda \vec{a} \geq \vec{a}' \land \vec{a}'
  \cdot (\vec{0}, \effect{t}) < (\lambda \vec{a} - \vec{a}') \cdot
  (\vec{0}, \prevec{t}).
  \]
  Thus, by Proposition~\ref{prop:linprog:unary}, testing $X \subseteq
  Y$ can be done in NC.
\end{proof}

\begin{thm}\label{thm:verif:nc}
  Given $\pn = (P, T, F)$, $\msrc, \mtgt \in \Qpos^P$ and a formula
  $\varphi$, testing whether $\varphi$ is a locally closed
  bi-separator for $(\msrc, \mtgt)$ can be done in NC.
\end{thm}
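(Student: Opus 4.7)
The plan is to reduce the verification task to polynomially many atomic-proposition checks of the form $\psi \leadsto_t \psi'$, each already known to be in NC by Proposition~\ref{prop:test:inv}, and then to aggregate the results with polynomially many Boolean operations. First, I would verify the syntactic preconditions: that $\varphi$ is quantifier-free, in DNF, and that every atomic proposition $\vec{a} \vec{x} \sim b$ has $b = 0$. These are a simple parse-and-scan, trivially in NC. Next, I would evaluate $\varphi$ at the three sample points $(\msrc,\msrc)$, $(\mtgt,\mtgt)$, and $(\msrc,\mtgt)$ to verify the three inclusion/exclusion conditions $(\msrc,\msrc)\in\sol{\varphi}$, $(\mtgt,\mtgt)\in\sol{\varphi}$, and $(\msrc,\mtgt)\notin\sol{\varphi}$; this reduces to a polynomial number of arithmetic comparisons over rationals, which is in NC.

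The core of the proof is checking local closedness. Writing $\varphi = \varphi_1 \lor \cdots \lor \varphi_n$, I need to verify for each $t \in T$ and each $i \in [1..n]$ that there exists $j \in [1..n]$ with $\varphi_i \leadsto_t \varphi_j$, and the analogous condition for $\rev{T}$ on the transposed clauses. Unfolding the definition of $\leadsto_t$ on clauses, $\varphi_i \leadsto_t \varphi_j$ holds iff every atomic proposition $\psi'$ of $\varphi_j$ is $t$-implied by some atomic proposition $\psi$ of $\varphi_i$. So, in parallel over all quadruples $(t, i, j, \psi')$ with $\psi' \in \varphi_j$, I would compute the Boolean $\bigvee_{\psi \in \varphi_i}(\psi \leadsto_t \psi')$ — each atomic check is in NC by Proposition~\ref{prop:test:inv}, and there are polynomially many of them, so all disjuncts can be evaluated in parallel. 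Aggregating, $\varphi_i \leadsto_t \varphi_j$ is the conjunction of these disjuncts over $\psi' \in \varphi_j$, and the desired condition for each pair $(t,i)$ is then the disjunction over $j$ of $\varphi_i \leadsto_t \varphi_j$; finally, local closedness is the conjunction of these over all $(t,i)$. Each aggregation is a Boolean combination of polynomially many bits and hence in NC. The backward direction ($\rev{\varphi_i} \leadsto_t \rev{\varphi_j}$ in $\rev{\pn}$) is handled symmetrically by first computing the transposed Petri net, which is again an NC operation on the input.

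Putting everything together yields an NC circuit: a syntactic-check layer, a constant-depth arithmetic layer for evaluating $\varphi$ at the three sample points, a polynomial-size parallel layer of NC subcircuits each running the check of Proposition~\ref{prop:test:inv}, and a Boolean aggregation layer. I do not anticipate a genuine obstacle here; the only subtlety is to observe that Proposition~\ref{prop:test:inv} already hides the only nontrivial work (solving single-variable linear programs via Proposition~\ref{prop:linprog:unary}, and enabledness checks via Proposition~\ref{prop:charac:inv}), and that the homogeneity precondition required for Proposition~\ref{prop:test:inv} is guaranteed by the syntactic check performed at the very beginning.
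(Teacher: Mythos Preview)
Your proposal is correct and follows essentially the same approach as the paper: verify the syntactic shape and the three sample-point conditions by straightforward parallel arithmetic, then expand local closedness into the Boolean combination $\bigwedge_{t,i}\bigvee_{j}\bigwedge_{\psi'\in\varphi_j}\bigvee_{\psi\in\varphi_i}(\psi\leadsto_t\psi')$ (and its backward analogue), invoking Proposition~\ref{prop:test:inv} for each atomic check. The only cosmetic difference is that the paper remarks one need not materialize the transposed net and can simply swap arguments, whereas you propose computing it explicitly; either way this is trivially in NC.
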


\begin{proof}
  Recall that $\varphi = \varphi_1 \lor \cdots \lor \varphi_n$ must be
  in DNF with homogeneous atomic propositions. As arithmetic belongs
  in NC and $\varphi$ is in DNF, we can test whether $(\msrc, \msrc)
  \in \sol{\varphi}$, $(\mtgt, \mtgt) \in \sol{\varphi}$ and $(\msrc,
  \mtgt) \notin \sol{\varphi}$ in NC by evaluating $\varphi$ in
  parallel. We can further test whether $\varphi$ is locally closed by
  checking the following (which is simply the definition of ``locally
  closed''):
  \[
  \left[\bigwedge_{\substack{t \in T \\ i \in [1..n]}} \bigvee_{j \in
    [1..n]} \bigwedge_{\psi \in \varphi_i} \bigvee_{\psi' \in
      \varphi_j} \psi \leadsto_t \psi'\right]
  \land
  \left[\bigwedge_{\substack{t \in \rev{T} \\ i \in [1..n]}}
    \bigvee_{j \in [1..n]} \bigwedge_{\psi \in \varphi_i}
    \bigvee_{\psi' \in \varphi_j} \rev{\psi} \leadsto_t
    \rev{\psi'}\right].
  \]
  By Proposition~\ref{prop:test:inv}, each test $\psi \leadsto_t
  \psi'$ can be carried in NC. Therefore, we can perform all of them
  in parallel. Note that we do not have to explicitly compute the
  transpose of transitions and formulas; we can simply swap arguments.
\end{proof}

\begin{rem}
  Testing whether $\varphi$ is locally closed is even simpler if the
  tester is also given annotations indicating for every clause
  $\varphi_i$ and transition $t$ which clause $\varphi_j$ is supposed
  to satisfy $\varphi_i \leadsto_t \varphi_j$. This mapping is a
  byproduct of the procedure to compute a locally closed bi-separator,
  and so comes at no cost.
\end{rem}

\section{Bi-separators for set-to-set unreachability}\label{sec:set2set}
In some applications, one does not have to prove unreachability of one
marking, but rather of a \emph{set} of markings, usually defined by
means of some simple linear constraints. Thus, we now consider the
more general setting of ``set-to-set reachability'', i.e.\ queries of
the form $\exists \msrc \in A, \mtgt \in B : \msrc \Utrans{*} \mtgt$,
which we denote by $A \Utrans{*} B$. We focus on the case where sets
$A$ and $B$ are described by conjunctions of atomic propositions; in
other words, $A$ and $B$ are convex polytopes defined as intersections
of half-spaces. In particular, this includes ``coverability'' queries
which are important in practice, i.e.\ where $A$ is a singleton and
$B$ is of the form $\{\vec{m} : \vec{m} \geq \vec{b}\}$.

It follows from~\cite{BH17} that testing whether $A \not\Utrans{*} B$
can be done in polynomial time. In this work, we have shown that,
whenever $A$ and $B$ are singleton sets, we can validate a certificate
for $A
\not\Utrans{*} B$ in NC (and so with lower complexity). In the
general case where either $A$ or $B$ is not a singleton, we could
proceed similarly by constructing a (locally closed) bi-separator, as
defined previously, but with the natural generalization that $(\msrc,
\msrc) \in \sol{\varphi}$, $(\mtgt, \mtgt) \in \sol{\varphi}$ and
$(\msrc, \mtgt) \notin \sol{\varphi}$ \emph{for every} $\msrc \in A$
and $\mtgt \in B$.

Unfortunately, as we will show in Section~\ref{ssec:set2set:hardness},
checking that a given linear formula is a (locally closed)
bi-separator is coNP-hard. Nonetheless, in
Section~\ref{ssec:set2set:certificate}, we will show that one can
construct a locally closed bi-separator on an \emph{altered} Petri net
with a single source and target marking.

\subsection{Hardness of checking set-to-set bi-separators}
\label{ssec:set2set:hardness}


\begin{prop}
  The problem of determining whether a given linear formula $\varphi$
  is a locally closed bi-separator for convex polytopes $(A, B)$ is
  coNP-hard, even if (1)~$A$ is a singleton and $B$ is of the form
  $\{\vec{m} : \vec{m} \geq \vec{b}\}$, or (2)~vice versa.
\end{prop}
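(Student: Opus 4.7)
I will reduce from DNF tautology, which is coNP-complete. The point is that, when $A$ (resp.\ $B$) is a genuine polytope rather than a singleton, the bi-separator condition ``$(\msrc, \msrc) \in \sol{\varphi}$ for every $\msrc \in A$'' becomes the assertion that a whole polytope is contained in the union of polytopes defined by the DNF clauses of $\varphi$---a form of containment that is classically coNP-hard. My goal is to concentrate all of the hardness into exactly this condition, while making local closedness and the other two bi-separator requirements trivial to verify.

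For case~(2), I take $P = \{p_0, x_1, \ldots, x_m\}$ and $T = \emptyset$. The empty transition set immediately makes any homogeneous quantifier-free DNF formula locally closed, because the two $\leadsto_t$ conditions in the definition then quantify over an empty index set. I set $A \defeq \{\msrc : \msrc \geq \vec{e}_{p_0}\}$ and $B \defeq \{\vec{0}\}$, so that $A \not\Utrans{*} B$ holds trivially. Given $\psi = \bigvee_{i} C_i$ in DNF, I define
\[
  \varphi \defeq \varphi_0 \lor \bigvee_i \varphi_i,
\]
where $\varphi_0(\vec{m}, \vec{m}') \defeq (\vec{m}(p_0) \leq 0) \land (\vec{m}'(p_0) \leq 0)$ serves as a catch-clause for the pair $(\vec{0}, \vec{0})$, and each $\varphi_i(\vec{m}, \vec{m}')$ consists of a \emph{marker} atom $(-\vec{m}'(p_0) < 0)$ conjoined with homogeneous atoms $(-\vec{m}'(x_k) < 0)$ or $(\vec{m}'(x_k) \leq 0)$ that translate the positive and negative literals of $C_i$, respectively.

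The verification will split cleanly: local closedness is vacuous; $(\vec{0}, \vec{0}) \in \sol{\varphi}$ is witnessed by $\varphi_0$; and $(\msrc, \vec{0}) \notin \sol{\varphi}$ for every $\msrc \in A$ follows because $\varphi_0(\msrc, \vec{0})$ is falsified by $\msrc(p_0) \geq 1$, while every $\varphi_i(\msrc, \vec{0})$ is falsified by the marker evaluated at $\vec{m}' = \vec{0}$. The only remaining check is $(\msrc, \msrc) \in \sol{\varphi}$ for every $\msrc \in A$, which via the encoding $v_k \defeq [\msrc(x_k) > 0]$ reduces to $\psi(v_1, \ldots, v_m)$ being true; and since the markings in $A$ realise every boolean assignment, this holds iff $\psi$ is a tautology. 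Case~(1) is dual: swap $\vec{m}$ with $\vec{m}'$ throughout and take $A = \{\vec{0}\}$, $B = \{\mtgt : \mtgt \geq \vec{e}_{p_0}\}$, which moves the hardness into the $(\mtgt, \mtgt)$ condition. The main subtle step is the design of the marker: it must kill the $\varphi_i$ clauses on $(\msrc, \vec{0})$---securing the disjointness requirement---without disturbing their evaluation on $(\msrc, \msrc)$, and placing it on $\vec{m}'(p_0)$, where $p_0$ is precisely the place separating $A$ from $B$, accomplishes this.
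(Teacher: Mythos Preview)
Your proposal is correct and follows essentially the same route as the paper: reduce from DNF tautology, make local closedness trivial so that all the hardness lands in one of the diagonal conditions $(\msrc,\msrc)\in\sol{\varphi}$ or $(\mtgt,\mtgt)\in\sol{\varphi}$. The paper does case~(1) first with $A=\{\vec{0}\}$, $B=\{\mtgt:\mtgt(y)\geq 1\}$, a single identity transition (consume and produce in $y$), catch-clause $\vec{m}'(y)\leq 0$, and clauses $\psi_j'(\vec{m})\land \vec{m}(y)>0$ encoding the DNF; case~(2) then follows by transposition. Your construction is the mirror image (case~(2) first, then dualise) and replaces the identity transition by $T=\emptyset$, which makes local closedness vacuous rather than merely trivial---a small but pleasant simplification.
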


\begin{proof}[Proof of~(1)]
  We give a reduction from the problem of determining whether a DNF
  boolean formula $\psi = \bigvee_{j \in J} \psi_j(x_1, \ldots, x_m)$
  is a tautology. As in the proof of Proposition~\ref{prop:sep:coNP},
  we define $\psi_j'$ as $\psi_j$ but where literals are modified as
  follows: $x_i$ becomes $\vec{m}(x_i) > 0$, and $\neg x_i$ becomes
  $\vec{m}(x_i) \leq 0$.

  Let $\pn = (P, T, F)$ be the Petri net such that $P \defeq \{x_1,
  \ldots, x_m, y\}$, $T \defeq \{t\}$, and $t$ consumes and produces a
  token in $y$. Let
  \[
  \varphi(\vec{m}, \vec{m}') \defeq (\vec{m}'(y) \leq 0) \lor
  \bigvee_{j \in J} [\psi_j'(\vec{m}) \land \vec{m}(y) > 0].
  \]
  Let $A \defeq \{\vec{0}\}$ and $B \defeq \{\mtgt \in \Rpos^P :
  \mtgt(y) \geq 1\}$.

  Note that $\varphi$ is homogeneous. Moreover, transition $t$ leaves
  markings unchanged, i.e.\ $\vec{m}' \trans{} \vec{m}''$ implies
  $\vec{m}'' = \vec{m}'$, and likewise backward. Therefore, we
  trivially have $\theta \leadsto_t \theta$ and $\rev{\theta}
  \leadsto_t \rev{\theta}$ for every atomic proposition $\theta$ of
  $\varphi$. Further observe that $(\vec{0}, \vec{0}) \in
  \sol{\varphi}$ and $(\vec{0}, \mtgt) \notin \sol{\varphi}$ for every
  $\mtgt \in B$.

  Consequently, $\varphi$ satisfies all of the properties of a locally
  closed bi-separator for $(A, B)$, except possibly the requirement
  that $(\mtgt, \mtgt) \in \sol{\varphi}$ for every $\mtgt \in B$. The
  latter holds iff $\psi$ is a tautology (since $\varphi(\mtgt, \mtgt)
  \equiv \bigvee_{j \in J} \varphi_j'(\mtgt)$). Thus, we are done.
\end{proof}

\begin{proof}[Proof of~(2).]
  This follows immediately from the fact that $\varphi$ is a locally
  closed bi-separator for $(A, B)$ in $\pn$ iff $\rev{\varphi}$ is a
  locally closed bi-separator for $(B, A)$ in $\rev{\pn}$.
\end{proof}

\subsection{Certifying set-to-set unreachability}
\label{ssec:set2set:certificate}

As shown in~\cite[Lem.~3.7]{BH17}, given an atomic proposition $\psi =
(\vec{a} \vec{x} \sim b)$, one can construct (in logarithmic space) a
Petri net $\pn_\psi$ and some $\vec{y} \in \{0, 1\}^5$ such that
$\psi(\vec{x})$ holds iff $(\vec{x}, \vec{y}) \Utrans{*} (\vec{0},
\vec{0})$ in $\pn_\psi$. The idea---depicted in Figure~\ref{fig:convex:cpn}, which is adapted from~\cite[Fig.~1]{BH17})---is simply to cancel out
positive and negative coefficients of $\psi$. It is straightforward to
adapt this construction to a conjunction $\bigwedge_{1 \leq i \leq k}
\psi_k(\vec{x})$ of atomic propositions. Indeed, it suffices to make
$k$ copies of the gadget, but where places $\{p_1, \ldots, p_n\}$ and
transitions $\{t_1, \ldots, t_n\}$ are shared. In this more general
setting, $t_i$ consumes from $p_i$ and simultaneously spawns the
respective coefficient to each copy.

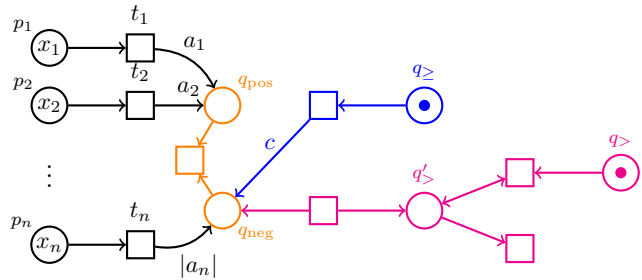
\begin{figure*}[!h]
  \centering
  \begin{tikzpicture}[node distance=0.85cm, auto, thick, scale=0.9, transform shape]
    \tikzset{every place/.style={inner sep=1pt, minimum size=15pt}}
    
    \node[place] (p1) {\small$x_1$};
    \node[place] (p2) [below of=p1] {\small$x_2$};
    \node[draw=none] (pi) [below=0.25 cm of p2] {$\vdots$};
    \node[place] (pn) [below=1.5cm of p2] {\small$x_n$};

    \node[place, colEmph3] (qpos) [right=2cm of p2] {};
    \node[place, colEmph3] (qneg) [below=1cm of qpos] {};

    \node[transition, label=above:\small$t_1$] (p1-qpos) [right=of p1] {};
    \node[transition, label=above:\small$t_2$] (p2-qpos) [right=of p2] {};
    \node[transition, label=above:\small$t_n$] (pn-qneg) [right=of pn] {};

    \node[transition, rotate=90, colEmph3] (normalize) [below left=0.4cm and 0.5cm of qpos] {};

    \node[transition, colEmph] (qgeq-qneg) [right=1cm of qpos] {};
    \node[transition, colEmph2] (qg-qneg) [right=1cm of qneg] {};

    \tikzset{every token/.style={color=colEmph}}
    \node[place, tokens=1, colEmph] (qgeq) [right=1cm of qgeq-qneg] {};
    \node[place, colEmph2] (qg) [right=1cm of qg-qneg] {};
    \node[transition, colEmph2] (qgg-qg) [above right=0.15cm and 1cm of qg] {};
    \node[transition, colEmph2] (qg-lossy) [below right=0.15cm and 1cm of qg] {};

    \tikzset{every token/.style={color=colEmph2}}
    \node[place, tokens=1, colEmph2] (qgg) [right=1cm of qgg-qg] {};

    \node[draw=none] () [above left=-3pt and -3pt of p1] {\scriptsize$p_1$};
    \node[draw=none] () [above left=-3pt and -3pt of p2] {\scriptsize$p_2$};
    \node[draw=none] () [above left=-3pt and -3pt of pn] {\scriptsize$p_n$};
    \node[draw=none, colEmph3] () [above right=-3pt and -3pt of qpos] {\scriptsize$q_\text{pos}$};
    \node[draw=none, colEmph3] () [below right=-3pt and -3pt of qneg] {\scriptsize$q_\text{neg}$};
    \node[draw=none, colEmph] () [above=-1pt of qgeq] {\scriptsize$q_{\geq}$};
    \node[draw=none, colEmph2] () [above=-1pt of qg] {\scriptsize$q_{>}'$};
    \node[draw=none, colEmph2] () [above=-1pt of qgg] {\scriptsize$q_{>}$};

    \path[->]
    (p1) edge node {} (p1-qpos)
    (p2) edge node {} (p2-qpos)
    (pn) edge node {} (pn-qneg)
    (p1-qpos) edge[bend left] node[xshift=-7pt] {\small$a_1$}   (qpos)
    (p2-qpos) edge node[xshift=4pt] {\small$a_2$}   (qpos)
    (pn-qneg) edge[bend right] node[swap, xshift=-7pt] {\small$|a_n|$} (qneg)

    (qpos) edge[colEmph3] node {} (normalize)
    (qneg) edge[colEmph3] node {} (normalize)

    (qgeq) edge[colEmph] node {} (qgeq-qneg)
    (qg-qneg) edge[colEmph2] node {} (qg)

    (qgeq-qneg) edge[colEmph] node[swap, xshift=5pt] {\small$c$} (qneg)
    (qg-qneg) edge[colEmph2] node {} (qneg)

    (qgg) edge[colEmph2] node {} (qgg-qg)
    (qg) edge[colEmph2] node {} (qg-lossy)
    ;

    \path[<->]
    (qgg-qg) edge[colEmph2] node {} (qg)
    ;    
  \end{tikzpicture}
  \caption{Petri net for $\psi(\vec x) = (a_1 \cdot x_1 + \cdots +
    a_n\cdot x_n > c)$ where $a_1, a_2, c > 0$ and $a_n <
    0$.}\label{fig:convex:cpn}
\end{figure*}

In summary, the following holds:

\begin{prop}\label{prop:cpn:to:convex}
  Given a conjunction of $k$ atomic propositions $\varphi$, it is
  possible to construct, in logarithmic space, a Petri net
  $\pn_\varphi$ and $\vec{y} \in \{0, 1\}^{5k}$ such that
  $\varphi(\vec{x})$ holds iff $(\vec{x}, \vec{y}) \Utrans{*}
  (\vec{0}, \vec{0})$ in $\pn_\varphi$.
\end{prop}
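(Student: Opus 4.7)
The plan is to lift the single-atomic-proposition gadget of~\cite[Lem.~3.7]{BH17}, summarized in Figure~\ref{fig:convex:cpn}, to a conjunction by ``parallelizing'' it over the $k$ conjuncts while sharing the input places and the ``spawning'' transitions $t_1, \ldots, t_n$.

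Write $\varphi(\vec{x}) = \bigwedge_{1 \leq j \leq k} \psi_j(\vec{x})$, where each $\psi_j$ is an atomic proposition $\vec{a}_j \vec{x} \sim_j b_j$. I would first recall, as a black box, that~\cite[Lem.~3.7]{BH17} yields, for each $\psi_j$ in isolation, a Petri net $\pn_{\psi_j}$ whose places are $\{p_1, \ldots, p_n\} \cup S_j$ with $|S_j| = 5$, whose transitions include $\{t_1, \ldots, t_n\}$ (each $t_i$ consuming one token from $p_i$ and producing $|\vec{a}_j(i)|$ tokens in the appropriate ``positive'' or ``negative'' auxiliary place of $S_j$), plus a constant number of ``cancellation'' transitions internal to $S_j$, and an initial auxiliary marking $\vec{y}_j \in \{0,1\}^5$ such that $\psi_j(\vec{x})$ iff $(\vec{x}, \vec{y}_j) \Utrans{*} (\vec{0}, \vec{0})$.

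Next I would form $\pn_\varphi$ by taking the disjoint union of the $S_j$'s together with all internal transitions, but \emph{sharing} the places $p_1, \ldots, p_n$ and the transitions $t_1, \ldots, t_n$: each $t_i$ still consumes one token from $p_i$, but now simultaneously produces $|\vec{a}_j(i)|$ tokens in the appropriate auxiliary place of $S_j$ for \emph{every} $j \in [1..k]$. Set $\vec{y} \in \{0,1\}^{5k}$ to be the concatenation of $\vec{y}_1, \ldots, \vec{y}_k$. The net $\pn_\varphi$ has $n + 5k$ places and $O(k)$ transitions beyond the shared $t_i$'s, and each coefficient and arc weight is copied verbatim from the $\pn_{\psi_j}$'s, so the whole description fits in logarithmic space.

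The correctness argument is the main thing to check, and it is straightforward because the shared transitions $t_1, \ldots, t_n$ interact with the $k$ copies only through addition of tokens, while the internal transitions of each copy touch only $S_j$. For the forward direction, assume $\varphi(\vec{x})$ holds; then each $\psi_j(\vec{x})$ holds, so there is a firing sequence $\sigma_j$ in $\pn_{\psi_j}$ from $(\vec{x}, \vec{y}_j)$ to $(\vec{0}, \vec{0})$. By continuous rescaling and the fact that the $t_i$-fractions fired must agree with $\vec{x}$ across copies (all equal to the corresponding $\vec{x}(i)$), one can normalize these sequences so that each $\sigma_j$ fires $t_i$ with the same total amount $\vec{x}(i)$; then concatenating the shared $t_i$-firings with the $S_j$-internal firings of each $\sigma_j$ yields a sequence in $\pn_\varphi$ from $(\vec{x}, \vec{y})$ to $(\vec{0}, \vec{0})$. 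For the converse, given a firing sequence in $\pn_\varphi$ from $(\vec{x}, \vec{y})$ to $(\vec{0}, \vec{0})$, its projection to the transitions of each single gadget is a valid firing sequence in $\pn_{\psi_j}$ witnessing $\psi_j(\vec{x})$, since the shared $t_i$'s produce exactly the tokens that the $j$-th copy needs and the internal transitions of other copies do not touch $S_j$ or $\{p_1,\ldots,p_n\}$.

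The only real subtlety is the coordination of the $t_i$ amounts across copies in the forward direction; but this is automatic since the amount of $t_i$ fired in \emph{any} sequence emptying $p_i$ is forced to equal $\vec{x}(i)$. The logarithmic-space bound then follows because constructing $\pn_\varphi$ amounts to copying the description of the single-gadget $5k$ times, with coefficients and arc labels read directly from $\varphi$.
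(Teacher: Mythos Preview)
Your proposal is correct and follows exactly the approach the paper sketches in the paragraph preceding the proposition: make $k$ copies of the single-proposition gadget from~\cite[Lem.~3.7]{BH17}, share the input places $p_1,\ldots,p_n$ and the spawning transitions $t_1,\ldots,t_n$, and let each $t_i$ simultaneously feed all $k$ copies. The paper gives no further detail beyond this construction idea, so your correctness argument (projection for one direction, coordinated $t_i$-amounts forced by emptying $p_i$ for the other) is a welcome elaboration rather than a deviation.
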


With the previous construction in mind, we can reformulate any
set-to-set reachability query into a standard (marking-to-marking)
reachability query.

\begin{prop}\label{prop:convex:cpn}
  Given a Petri net $\pn$ and convex polytopes $A$ and $B$ described
  as conjunctions of atomic propositions, one can construct, in
  logarithmic space, a Petri net $\pn'$ and markings $\msrc$ and
  $\mtgt$ such that $A \Utrans{*} B$ in $\pn$ iff $\msrc \Utrans{*}
  \mtgt$ in $\pn'$.
\end{prop}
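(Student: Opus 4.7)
The plan is to form $\pn'$ as the union of three sub-nets sharing the place set $P$ of $\pn$: the transpose $\rev{\pn_A}$ of the checker built from $A$ via Proposition~\ref{prop:cpn:to:convex}, the original net $\pn$, and the checker $\pn_B$ built from $B$. The auxiliary places $P_A$ and $P_B$ remain private to their respective sub-nets. We set $\msrc \defeq (\vec{0}_P, \vec{0}_{P_A}, \vec{y}_B)$ and $\mtgt \defeq (\vec{0}_P, \vec{y}_A, \vec{0}_{P_B})$, using the distinguished vectors from Proposition~\ref{prop:cpn:to:convex}. Intuitively, $\rev{\pn_A}$ generates a marking $\msrc_0 \in A$ on $P$ (while simultaneously producing $\vec{y}_A$ on $P_A$ from zero), then $\pn$ runs from $\msrc_0$ to some $\mtgt_0$, then $\pn_B$ consumes $(\mtgt_0, \vec{y}_B)$ down to $(\vec{0}, \vec{0})$, certifying $\mtgt_0 \in B$. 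Logspace constructibility is inherited from Proposition~\ref{prop:cpn:to:convex}.

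The forward direction is immediate: given $\msrc_0 \in A$ and $\mtgt_0 \in B$ with $\msrc_0 \Utrans{*} \mtgt_0$ in $\pn$, we concatenate (i)~a generation firing in $\rev{\pn_A}$ from $(\vec{0}, \vec{0})$ to $(\msrc_0, \vec{y}_A)$, (ii)~the $\pn$-firing, and (iii)~a check firing in $\pn_B$ from $(\mtgt_0, \vec{y}_B)$ to $(\vec{0}, \vec{0})$; these three phases compose cleanly in $\pn'$ because the sub-nets share only $P$.

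For the reverse direction, given $\msrc \Utrans{*} \mtgt$ in $\pn'$, apply Theorem~\ref{thm:reach:charac} to obtain a support $U'$ and a vector $\vec{x}' \in \Rpos^{T_{\pn'}}$ with $\msrc + \mat{F}_{\pn'}\vec{x}' = \mtgt$ witnessing $\msrc \etrans{U'} \mtgt$. Split $\vec{x}'$ into restrictions $\vec{x}_A, \vec{x}_\pn, \vec{x}_B$ to the three transition groups, and let $\msrc_0$ and $\mtgt_0$ be the $P$-components of, respectively, the $\rev{\pn_A}$-production induced by $\vec{x}_A$ and the $\pn_B$-consumption induced by $\vec{x}_B$. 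Both are non-negative thanks to the key structural fact that transitions of $\rev{\pn_A}$ only produce (never consume) on $P$, while transitions of $\pn_B$ only consume (never produce) on $P$. Moreover, $P_A$ is only touched by $\rev{\pn_A}$-transitions in $\pn'$, so the projection of the $\pn'$-firing onto $\rev{\pn_A}$-transitions is itself a valid firing in $\rev{\pn_A}$ from $(\vec{0}, \vec{0})$ to $(\msrc_0, \vec{y}_A)$; Proposition~\ref{prop:cpn:to:convex} then gives $\msrc_0 \in A$. A symmetric argument in the transpose $\rev{\pn'}$ yields $\mtgt_0 \in B$.

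The heart of the proof, and the main obstacle, is establishing $\msrc_0 \Utrans{*} \mtgt_0$ in $\pn$. Condition~(1) of Theorem~\ref{thm:reach:charac} is the algebraic identity $\msrc_0 + \mat{F}\vec{x}_\pn = \mtgt_0$, which follows from the $P$-restriction of $\msrc + \mat{F}_{\pn'}\vec{x}' = \mtgt$. For conditions~(2) and~(3), we invoke Proposition~\ref{prop:siphon-trap}: suppose, for contradiction, $Q \subseteq P$ is a siphon of $\pn_{\supp{\vec{x}_\pn}}$ with $\post{Q} \neq \emptyset$ and $\msrc_0(Q) = 0$. By definition of $\msrc_0$, the condition $\msrc_0(Q) = 0$ forces every $\rev{\pn_A}$-transition producing on $Q$ to be absent from $U'$; combined with the fact that $\pn_B$-transitions do not produce on $P$, this lets us lift $Q$ to a siphon of $\pn'_{U'}$ with $\msrc(Q) = 0$ and non-empty post-set, contradicting $\msrc \etrans{U'} \mtgt$ in $\pn'$. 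The dual trap argument in $\rev{\pn'}$ handles condition~(3). The delicate step is precisely this siphon-lifting, which hinges on the one-sided interaction of $\rev{\pn_A}$ and $\pn_B$ with the shared places $P$ together with the zero-support equations induced by $\msrc_0(Q) = 0$ and $\mtgt_0(R) = 0$.
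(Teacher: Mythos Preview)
Your argument is correct but follows a genuinely different route from the paper. The paper builds $\pn'$ with an explicit layer of guess transitions $t_1,\ldots,t_n$ (each with empty pre-set) that simultaneously populate $P$ and a fresh copy $Q$ of $P$; the checker $\pn_A$ then reads from $Q$ rather than $P$, and both $\pn_A$ and $\pn_B$ are used in their original (forward) form. The target marking is $(\vec{0},\vec{0},\vec{0},\vec{0})$. For the reverse direction the paper does \emph{not} invoke Theorem~\ref{thm:reach:charac} or Proposition~\ref{prop:siphon-trap} at all: it gives a direct reordering argument on firing sequences, pushing the guess transitions (which have empty pre-set) to the front, then the $T_A$-transitions (whose pre-sets live in $Q\cup P_A$, untouched by the rest), then $T$, then $T_B$.

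Your construction is leaner---no copy $Q$ and no guess layer, at the price of transposing $\pn_A$---and your correctness proof trades the elementary sequence reordering for the structural machinery of Sections~\ref{sec:informal} and~\ref{sec:certificates}. The siphon/trap lifting is the interesting step, and it goes through precisely because of the one-sided interaction you identify: $\msrc_0(Q)=0$ kills every $\rev{\pn_A}$-transition in $U'$ that could feed $Q$, and dually $\mtgt_0(R)=0$ kills every $\pn_B$-transition in $U'$ that could drain $R$. The paper's approach is more self-contained (it would work even without the Fraca--Haddad characterization), whereas yours exploits the paper's own toolbox and yields a slightly smaller $\pn'$; both are valid and comparable in length.
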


\begin{proof}
  Let $\pn = (P, T, (\mat{F}_-, \mat{F}_+))$ where $P = \{p_1, \ldots,
  p_n\}$. Let us describe $\pn' = (P', T', \allowbreak (\mat{F}_-',
  \allowbreak \mat{F}_+'))$ with the help of
  Figure~\ref{fig:set2set}. The Petri net $\pn'$ extends $\pn$ as
  follows:
  \begin{itemize}
  \item we add transitions $\{t_1, \ldots, t_n\}$ whose purpose is to
    nondeterministically guess an initial marking of $\pn$ in $P$, and
    make a copy in $Q \defeq \{q_1, \ldots, q_n\}$;

  \item we add a net $\pn_A = (P_A, T_A, F_A)$, obtained from
    Proposition~\ref{prop:cpn:to:convex}, to test whether the marking
    in $Q$ belongs to $A$; and we add a net $\pn_B = (P_B, T_B, F_B)$,
    obtained from Proposition~\ref{prop:cpn:to:convex}, to test
    whether the marking in $P$ belongs to $B$.
  \end{itemize}

  \begin{figure*}
    \centering
    \begin{tikzpicture}[node distance=1cm, auto, thick, scale=0.9, transform shape]
      \tikzset{every place/.style={inner sep=1pt, minimum size=15pt}}

      \node[place, label=above:$q_1$]              (p1) {};
      \node[place, right of=p1, label=above:$q_2$] (p2) {};
      \node[right of=p2] (d) {$\cdots$};
      \node[place, right of=d, label=above:$q_n$]  (pn) {};

      \node[place, label=below:$p_1$, below=1.25cm of p1]  (p1p) {};
      \node[place, right of=p1p, label=below:$p_2$] (p2p) {};
      \node[right of=p2p] (dp) {$\cdots$};
      \node[place, right of=dp, label=below:$p_n$]  (pnp) {};

      \node[draw, minimum width=70pt, minimum height=25pt,
            right=2cm of pn, yshift=5pt] (A) {$\in A$?};

      \path[->]
      (p1) edge[out=-50, in=180] node {} (A)
      (p2) edge[out=-35, in=180] node {} (A)
      (pn) edge[out=-10, in=180] node {} (A)
      ;
      
      \node[draw, minimum width=70pt, minimum height=25pt,
            right=2cm of pnp, yshift=-5pt] (B) {$\in B$?};

      \path[->]
      (p1p) edge[out=50, in=180] node {} (B)
      (p2p) edge[out=35, in=180] node {} (B)
      (pnp) edge[out=10, in=180] node {} (B)
      ;

      \node[transition, label=left:$t_1$, left=1.5cm of p1,
            yshift=15pt] (t1) {};
      \node[transition, below of=t1, label=left:$t_2$] (t2) {};
      \node[below of=t2]                               (dt) {$\vdots$};
      \node[transition, below of=dt, label=left:$t_n$] (tn) {};

      \path[->, gray, line width=0.5pt]
      (t1) edge[out=0,  in=180]  node {} (p1)
      (t1) edge[out=0,  in=180]  node {} (p1p)
      (t2) edge[out=0,  in=-135] node {} (p2)
      (t2) edge[out=0,  in=135]  node {} (p2p)
      (tn) edge[out=30, in=180]  node {} (pn)
      (tn) edge[out=0,  in=-135] node {} (pnp)
      ;

      \node[below=0.75cm of p1p, xshift=1.45cm] (N) {$\pn$};
      
      \begin{pgfonlayer}{back}
        \node[fit=(p1p)(p2p)(dp)(pnp)(N), fill=colEmph3,
              fill opacity=0.25, inner sep=7pt] {};
      \end{pgfonlayer}
    \end{tikzpicture}
    \caption{Reduction from set-to-set reachability to
      (marking-to-marking) reachability.}\label{fig:set2set}
  \end{figure*}
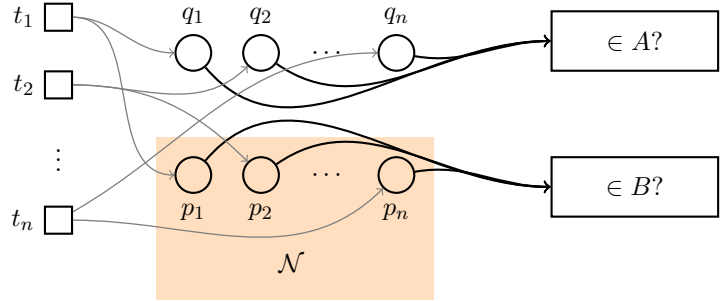
  
  The Petri net $\pn'$ is \emph{intended} to work sequentially as
  follows:
  \begin{enumerate}
  \item guess the initial marking $\msrc$ of $\pn$;

  \item test whether $\msrc \in A$;

  \item execute $\pn$ on $\msrc$ and reach a marking $\mtgt$; and

  \item test whether $\mtgt \in B$.
  \end{enumerate}

  If $\pn'$ follows this order, then it is straightforward to see that
  $A \Utrans{*} B$ in $\pn$ iff $(\vec{0}, \vec{0}, \vec{y}, \vec{y}')
  \Utrans{*} (\vec{0}, \vec{0}, \vec{0}, \vec{0})$ in $\pn'$, where
  $\vec{y}$ and $\vec{y}'$ are obtained from
  Proposition~\ref{prop:cpn:to:convex}. However, $\pn'$ may interleave
  the different phases.\footnote{It is tempting to implement a lock,
    but this only works under discrete semantics.}  Nonetheless, this
  is not problematic, as any run of $\pn'$ can be reordered in such a
  way that all four phases are consecutive.

  More formally, let $(\vec{0}, \vec{0}, \vec{y}, \vec{y}')
  \trans{\sigma} (\vec{0}, \vec{0}, \vec{0}, \vec{0})$ in $\pn'$. Note
  that $\prevec{t_i} = \vec{0}$ for every $i \in [1..n]$. Thus, we can
  reorder $\sigma$ so that
  \[
  (\vec{0}, \vec{0}, \vec{y}, \vec{y}')
  \trans{\alpha_1 t_1 \cdots \alpha_n t_n}
  (\msrc, \msrc, \vec{y}, \vec{y}')
  \trans{\sigma'}
  (\vec{0}, \vec{0}, \vec{0}, \vec{0}) \text{ in } \pn',
  \]
  for some $\alpha_1, \ldots, \alpha_n \in \Rpos$, marking $\msrc$,
  and firing sequence $\sigma'$ whose support does not contain any of
  $t_1, \ldots, t_n$.

  We have $\prevec{t}(r) = 0$ for every $t \in T_A$ and $r \notin Q
  \cup P_A$. Furthermore, we have $\postvec{t}(r) = 0$ for every $t
  \notin \{t_1, \ldots, t_n\}$ and $r \in Q \cup P_A$. Thus, we can
  reorder $\sigma'$ into $\tau \tau'$ so that
  \[
  (\msrc, \msrc, \vec{y}, \vec{y}')
  \trans{\tau}
  (\vec{0}, \msrc, \vec{0}, \vec{y}')
  \trans{\tau'}
  (\vec{0}, \vec{0}, \vec{0}, \vec{0}) \text{ in } \pn',
  \]
  the support of $\tau$ only contains transitions from $T_A$, no
  transition from $T_A$ occurs in the support of $\tau'$.

  We have $\postvec{t}(r) = 0$ for every $t \in T_B$ and $r \notin
  P_B$. Therefore, we can reorder $\tau'$ into $\tau'' \tau'''$ so
  that
  \[
  (\vec{0}, \msrc, \vec{0}, \vec{y}')
  \trans{\tau''}
  (\vec{0}, \mtgt, \vec{0}, \vec{y}')
  \trans{\tau'''}
  (\vec{0}, \vec{0}, \vec{0}, \vec{0}) \text{ in } \pn',
  \]
  the support of $\tau''$ only contains transitions from $T$, no
  transition from $T$ occurs in the support of $\tau'''$, and $\mtgt
  \in \Rpos^P$.

  Altogether, we obtain
  \[
  (\msrc, \vec{y}) \trans{\tau} (\vec{0}, \vec{0}) \text{ in } \pn_A,
  \msrc \trans{\tau'} \mtgt \text{ in } \pn, \text{ and }
  (\mtgt, \vec{y}') \trans{\tau''} (\vec{0}, \vec{0}) \text{ in } \pn_B.
  \]
  From this, we conclude that $\msrc \in A$, $\mtgt \in B$ and $\msrc
  \Utrans{*} \mtgt$ in $\pn$.
\end{proof}

As a consequence of Proposition~\ref{prop:convex:cpn}, combined with
Theorems~\ref{thm:unreach:certificate} and~\ref{thm:verif:nc}, we
obtain the following corollary:

\begin{cor}
  A negative answer to a convex polytope query $A \Utrans{*} B$ is
  witnessed by a locally closed bi-separator, for an altered Petri
  net, computable in polynomial time and checkable in NC.
\end{cor}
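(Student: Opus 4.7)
The plan is to compose the three main tools already established in the paper. Given a convex polytope query $A \Utrans{*} B$ on $\pn$ with a negative answer, I would first invoke Proposition~\ref{prop:convex:cpn} to construct, in logarithmic space (hence in polynomial time), an altered Petri net $\pn'$ together with markings $\msrc, \mtgt$ such that $A \Utrans{*} B$ in $\pn$ iff $\msrc \Utrans{*} \mtgt$ in $\pn'$. The hypothesis that the answer is negative then immediately yields $\msrc \not\Utrans{*} \mtgt$ in $\pn'$, so the set-to-set problem has been reduced to a standard marking-to-marking unreachability instance on a polynomially-sized Petri net.

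Next, I would apply Theorem~\ref{thm:unreach:certificate} to this marking-to-marking instance on $\pn'$. This theorem produces, in polynomial time, a locally closed bi-separator $\varphi$ for $(\msrc, \mtgt)$ with respect to $\pn'$. Composing the two polynomial-time constructions, we obtain $\varphi$ in polynomial time from the original convex polytope query, which establishes the constructive part of the claim.

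For the verification part, I would apply Theorem~\ref{thm:verif:nc} directly: checking that $\varphi$ is a locally closed bi-separator for $(\msrc, \mtgt)$ with respect to $\pn'$ is in NC. Since $\pn'$ has size polynomial in the input, this NC verification runs in the appropriate complexity class in terms of the original input as well. The witness of unreachability that gets shipped to the verifier is the pair $(\pn', \varphi)$, together with the (logspace-constructed) markings $\msrc$ and $\mtgt$ extracted from the encoding of $A$ and $B$.

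There is no real obstacle beyond bookkeeping: the only subtle point is that the bi-separator certifies unreachability for the \emph{altered} Petri net rather than for the original query, which is precisely why the statement of the corollary mentions ``for an altered Petri net.'' Soundness is automatic from Proposition~\ref{prop:convex:cpn}, since any proof that $\msrc \not\Utrans{*} \mtgt$ in $\pn'$ transfers through the equivalence to show $A \not\Utrans{*} B$ in $\pn$.
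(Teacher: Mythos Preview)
Your proposal is correct and takes essentially the same approach as the paper, which simply states the corollary as a consequence of Proposition~\ref{prop:convex:cpn} combined with Theorems~\ref{thm:unreach:certificate} and~\ref{thm:verif:nc}. Your write-up spells out the composition and the size bookkeeping more explicitly than the paper does, but the argument is identical.
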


\section{Conclusion}
We have shown that continuous Petri nets admit locally closed bi-separators that can be efficiently computed. These separators are succinct and very efficiently checkable certificates of unreachability. In particular, checking that  a linear formula is a locally closed bi-separator is in NC, and only requires to solve linear inequations in one variable over the nonnegative reals. While this does not directly hold for the more general of set-to-set reachability, we have shown that it can be extended to an altered Petri net (in the case of convex polytopes).

Verification tools that have not been formally verified, or rely (as is usually the case) on external packages for linear arithmetic, can apply our results to provide certificates for their output. Further, our separators can be used as explanations of why a certain marking is unreachable. Obtaining minimal explanations is an interesting research avenue.

From a logical point of view, separators are very closely related to interpolants for linear arithmetic, which are widely used in formal verification to refine abstractions in the CEGAR approach~\cite{BeyerZM08,RybalchenkoS10,SchollPDA14,AlthausBKS15}. We intend to explore whether they can constitute the basis of a CEGAR approach for the verification of continuous Petri nets. 


\section*{Acknowledgment}

\noindent We thank the anonymous referees of LMCS and FoSSaCS 2022 for
their comments, and in particular for suggesting a more intuitive
definition of bi-separators.

\bibliographystyle{alphaurl}
\bibliography{references}

\appendix
\section{Missing proofs}
\begin{proof}[Proof of Lemma~\ref{prop:farkas:supp}.]
  Let $\mathcal{S}$ and $\mathcal{S}'$ denote the two systems of the
  proposition. We must show that $\mathcal{S}$ has no solution iff
  $\mathcal{S}'$ has a solution.  First, we prove the following claim:
  
  \medskip \noindent \textit{Claim}:  The system $\exists \vec{x} \geq \vec{0} : \mat{A} \vec{x} = \vec{b}
  \land S \subseteq \supp{\vec{x}} \subseteq S'$ has a solution iff
  this system has a solution: $\exists \vec{x} \geq \vec{0}, y \geq 1
  : \mat{A} \vec{x} = y \vec{b} \land \vec{x} =_{\,\overline{V}}
  \vec{0} \land \vec{x} \geq_U \vec{1}$, where $\vec{1} \defeq (1,
  \ldots, 1)$.
  
  \medskip \noindent \textit{Proof of the claim}: $\Rightarrow$) Since
  $\vec{x}(t) > 0$ for all $t \in S$, we can pick $y \geq 1$
  sufficiently large so that $y \vec{x}(t) \geq 1$ for every $t \in
  S$. Let $\vec{x}' \defeq y \vec{x}$. We have $\mat{A} \vec{x}' =
  y \mat{A} \vec{x} = y \vec{b}$ and $\vec{x}' =
  y \vec{x} \geq \vec{x} \geq \vec{0}$. Moreover, $\vec{x}'(t) =
  y \vec{x}(t) = 0$ for every $t \notin S'$, and $\vec{x}'(t) =
  y \vec{x}(t) \geq 1$ for every $t \in S$.

  $\Leftarrow$) Let $\vec{x}' \defeq \vec{x} / y$. We have $\mat{A}
  \vec{x}' = \vec{b}$. Moreover, for every $t \notin S'$ it is the case that $\vec{x}'(t) = (1/y)\vec{x}(t)t = 0$ and for every $ts \in
  S$ it is the case that  $\vec{x}'(t) = (1/y) \vec{x}(t) \geq 1/y > 0$. Hence, $S \subseteq
  \supp{\vec{x}'} \subseteq S'$.
  
  \medskip Now we proceed to prove the proposition. Let $\mat{J} \in \R^{\overline{S'}
    \times T}$ be the matrix that contains $0$ everywhere except for
  $\mat{J}_{t, t} = 1$ for all $t \notin S'$. Let $\vec{c} \in \R^T$
  be such that $\vec{c}(t) = 1$ for every $t \in S$ and $\vec{c}(t) = 0$ for every $t \notin
  S$. By the claim above, the system
  $\mathcal{S}$ has a solution iff the system $\exists \vec{x}' :
  \mat{A}' \vec{x}' \leq \vec{b}'$ has a solution, where
  \[
  \mat{A}' \defeq
  \begin{pmatrix}
    \mat{A}  & -\vec{b} \\
    -\mat{A} & \vec{b}  \\
    \mat{J} & \vec{0}  \\
    -\mat{I} & \vec{0}  \\
    \vec{0}^\transpose & -1
  \end{pmatrix}
  \text{ and }
  \vec{b}' \defeq
  \begin{pmatrix}
    \vec{0} \\
    \vec{0} \\
    \vec{0} \\
    -\vec{c} \\
    -1
  \end{pmatrix}.
  \]
  By Lemma~\ref{lem:farkas}, the latter system has no solution iff the
  following has one: $\exists \vec{z} \geq \vec{0} :
  (\mat{A}')^\transpose \vec{z} = \vec{0} \land (\vec{b}')^\transpose
  \vec{z} < 0$. We can rewrite the latter as follows:
  \begin{alignat*}{2}
  & \exists \vec{z} \geq \vec{0} :
  (\mat{A}')^\transpose \vec{z} = \vec{0}
  \land (\vec{b}')^\transpose \vec{z} < 0 \\
  \equiv\ &
  \exists (\vec{u}, \vec{u}', \vec{v}, \vec{v}', \alpha) \geq \vec{0} :
  \mat{A}^\transpose(\vec{u} - \vec{u}') =
  \vec{v}' - \mat{J}^\transpose \vec{v} \land
  \vec{b}^\transpose(\vec{u}' - \vec{u}) = \alpha \land
  -\vec{c}^\transpose \vec{v}'  - \alpha < 0 \\
  \equiv\ &
  \exists \vec{y}\ \exists (\vec{v}, \vec{v}', \alpha) \geq \vec{0} :
  \mat{A}^\transpose \vec{y} =
  \vec{v}' - \mat{J}^\transpose \vec{v} \land
  \vec{b}^\transpose\vec{y} = -\alpha \land
  -\alpha < \vec{c}^\transpose \vec{v}' \\
  \equiv\ &
  \exists \vec{y}\ \exists (\vec{v}, \vec{v}') \geq \vec{0} :
  \mat{A}^\transpose \vec{y} =
  \vec{v}' - \mat{J}^\transpose \vec{v} \land
  \vec{b}^\transpose\vec{y} \leq 0 \land
  \vec{b}^\transpose\vec{y} < \vec{c}^\transpose \vec{v}' \\
  \equiv\ &
  \exists \vec{y} :
  \mat{A}^\transpose \vec{y} \geq_{S'} \vec{0} \land
  \vec{b}^\transpose\vec{y} \leq 0 \land
  \vec{b}^\transpose\vec{y} < \sum_{s \in S} (\mat{A}^\transpose \vec{y})_s.
  \end{alignat*}
  Note that the last equivalence holds since $(\mat{J}^\transpose
  \vec{v}) =_{S'} \vec{0}$ and since $\vec{c}^\transpose$ sums entries
  over $S$. 
\end{proof}

\end{document}